\PassOptionsToPackage{unicode}{hyperref}
\PassOptionsToPackage{hyphens}{url}
\PassOptionsToPackage{dvipsnames,svgnames,x11names}{xcolor}
\documentclass[12pt]{article}
\usepackage{enumitem}
\usepackage{amsmath,amssymb}
\usepackage{bm}
\usepackage{colonequals}
\usepackage{iftex}
\usepackage{booktabs}
\usepackage{tabularx}
\usepackage{titlesec}
\ifPDFTeX
  \usepackage[T1]{fontenc}
  \usepackage[utf8]{inputenc}
  \usepackage{textcomp} % provide euro and other symbols
\else % if luatex or xetex
  \usepackage{unicode-math}
  \defaultfontfeatures{Scale=MatchLowercase}
  \defaultfontfeatures[\rmfamily]{Ligatures=TeX,Scale=1}
\fi
\usepackage{lmodern}
\ifPDFTeX\else  
\fi
\IfFileExists{upquote.sty}{\usepackage{upquote}}{}
\IfFileExists{microtype.sty}{% use microtype if available
  \usepackage[]{microtype}
  \UseMicrotypeSet[protrusion]{basicmath} % disable protrusion for tt fonts
}{}
\makeatletter
\@ifundefined{KOMAClassName}{% if non-KOMA class
  \IfFileExists{parskip.sty}{%
    \usepackage{parskip}
  }{% else
    \setlength{\parindent}{0pt}
    \setlength{\parskip}{6pt plus 2pt minus 1pt}}
}{% if KOMA class
  \KOMAoptions{parskip=half}}
\makeatother
\usepackage{xcolor}
\makeatletter
\ifx\paragraph\undefined\else
  \let\oldparagraph\paragraph
  \renewcommand{\paragraph}{
    \@ifstar
      \xxxParagraphStar
      \xxxParagraphNoStar
  }
  \newcommand{\xxxParagraphStar}[1]{\oldparagraph*{#1}\mbox{}}
  \newcommand{\xxxParagraphNoStar}[1]{\oldparagraph{#1}\mbox{}}
\fi
\ifx\subparagraph\undefined\else
  \let\oldsubparagraph\subparagraph
  \renewcommand{\subparagraph}{
    \@ifstar
      \xxxSubParagraphStar
      \xxxSubParagraphNoStar
  }
  \newcommand{\xxxSubParagraphStar}[1]{\oldsubparagraph*{#1}\mbox{}}
  \newcommand{\xxxSubParagraphNoStar}[1]{\oldsubparagraph{#1}\mbox{}}
\fi
\makeatother

\usepackage{longtable,booktabs,array}
\usepackage{calc} % for calculating minipage widths
\usepackage{etoolbox}
\makeatletter
\patchcmd\longtable{\par}{\if@noskipsec\mbox{}\fi\par}{}{}
\makeatother
\IfFileExists{footnotehyper.sty}{\usepackage{footnotehyper}}{\usepackage{footnote}}
\makesavenoteenv{longtable}
\usepackage{graphicx}
\makeatletter
\def\maxwidth{\ifdim\Gin@nat@width>\linewidth\linewidth\else\Gin@nat@width\fi}
\def\maxheight{\ifdim\Gin@nat@height>\textheight\textheight\else\Gin@nat@height\fi}
\makeatother
\setkeys{Gin}{width=\maxwidth,height=\maxheight,keepaspectratio}
\makeatletter
\def\fps@figure{htbp}
\makeatother

\makeatletter
\@ifpackageloaded{caption}{}{\usepackage{caption}}
\AtBeginDocument{%
\ifdefined\contentsname
  \renewcommand*\contentsname{Table of contents}
\else
  \newcommand\contentsname{Table of contents}
\fi
\ifdefined\listfigurename
  \renewcommand*\listfigurename{List of Figures}
\else
  \newcommand\listfigurename{List of Figures}
\fi
\ifdefined\listtablename
  \renewcommand*\listtablename{List of Tables}
\else
  \newcommand\listtablename{List of Tables}
\fi
\ifdefined\figurename
  \renewcommand*\figurename{Figure}
\else
  \newcommand\figurename{Figure}
\fi
\ifdefined\tablename
  \renewcommand*\tablename{Table}
\else
  \newcommand\tablename{Table}
\fi
}
\@ifpackageloaded{float}{}{\usepackage{float}}
\floatstyle{ruled}
\@ifundefined{c@chapter}{\newfloat{codelisting}{h}{lop}}{\newfloat{codelisting}{h}{lop}[chapter]}
\floatname{codelisting}{Listing}

\makeatother
\makeatletter
\@ifpackageloaded{caption}{}{\usepackage{caption}}
\@ifpackageloaded{subcaption}{}{\usepackage{subcaption}}
\makeatother

\RequirePackage{amsthm,amsmath,amsfonts,amssymb}
\RequirePackage{hyperref}
\RequirePackage{cleveref}

\newtheoremstyle{inferenceaimstyle}
  {\topsep}   % Space above
  {\topsep}   % Space below
  {\itshape}  % Body font
  {}          % Indent amount
  {\bfseries} % Theorem head font
  {}         % Punctuation after theorem head
  { }         % Space after theorem head
  {\thmname{#1}\thmnumber{ #2}:} % Theorem head spec

\theoremstyle{inferenceaimstyle}

\usepackage{thmtools}
\crefname{lemma}{Lemma}{Lemmas}
\declaretheorem[name=Lemma]{lemma}

\crefname{proposition}{Proposition}{Propositions}

\crefname{theorem}{Theorem}{Theorems}
\declaretheorem[name=Theorem]{theorem}

\crefname{figure}{Figure}{Figures}
\crefname{table}{Table}{Tables}
\crefname{section}{Section}{Sections}
\crefname{equation}{Equation}{Equations}

\ifLuaTeX
  \usepackage{selnolig}  % disable illegal ligatures
\fi
\usepackage[]{natbib}
\usepackage{bookmark}
\RequirePackage{algorithm}
\RequirePackage{algorithmic}
\usepackage{subcaption}

\IfFileExists{xurl.sty}{\usepackage{xurl}}{} % add URL line breaks if available
\hypersetup{
  pdftitle={Title},
  pdfauthor={Author 1; Author 2},
  pdfkeywords={3 to 6 keywords, that do not appear in the title},
  hidelinks=true,
  linkcolor={blue},
  filecolor={Maroon},
  citecolor={Blue},
  urlcolor={Blue},
  pdfcreator={LaTeX via pandoc}}

\definecolor{changed_color}{rgb}{0,0,0}

\newcommand{\anon}{1}

\begin{document}

\def\spacingset#1{\renewcommand{\baselinestretch}%
{#1}\small\normalsize} \spacingset{1}

%%%%%%%%%%%%%%%%%%%%%%%%%%%%%%%%%%%%%%%%%%%%%%%%%%%%%%%%%%%%%%%%%%%%%%%%%%%%%%

\if1\anon
{
  \title{\bf {Empirical Bayes Shrinkage of Functional Effects, with Application to Analysis
  of Dynamic eQTLs}}
% Empirical Bayes Shrinkage for Functional
% Effects with Application to Dynamic eQTL
% Analysis
  \author{Ziang Zhang \hspace{.2cm}\\
    Department of Human Genetics, University of Chicago, Chicago, IL \\
    \\
    Peter Carbonetto\\
    Department of Human Genetics, University of Chicago, Chicago, IL \\
    \\
    % AUTHORS NAME:\\
    % AFFILIATIONS:\\
    % \\
    Matthew Stephens \\
    Departments of Statistics and Human Genetics, \\
    University of Chicago, Chicago, IL
    }
  \maketitle
} \fi

\if0\anon
{
  \bigskip
  \bigskip
  \bigskip
  \begin{center}
    {\LARGE\bf Title}
\end{center}
  \medskip
} \fi

\bigskip
% Should contain an unstructured abstract of 200 words.
\begin{abstract}
We introduce functional adaptive shrinkage (FASH), an empirical Bayes
method for joint analysis of observation units in which each unit
estimates an {\em effect function} at several values of a continuous
condition variable. The ideas in this paper are motivated by dynamic
expression quantitative trait locus (eQTL) studies, which aim to
characterize how genetic effects on gene expression vary with time or
another continuous condition. 
FASH integrates a broad family of
Gaussian processes defined through linear differential operators into
an empirical Bayes shrinkage framework, enabling adaptive smoothing and borrowing of information across units. 
This provides improved estimation of effect functions and principled hypothesis testing, allowing straightforward computation of significance measures such as local false discovery and false sign rates.
{\color{changed_color} To encourage conservative inferences, we propose a simple prior-adjustment method that has theoretical guarantees and can be more broadly used with other empirical Bayes methods.}
%
% we develop a Bayes-factor–based adjustment to the FASH prior that
% can be incorporated into general empirical Bayes shrinkage
% procedures.
%
We illustrate the benefits of FASH by reanalyzing dynamic
eQTL data on cardiomyocyte differentiation from induced pluripotent
stem cells. FASH identified novel dynamic eQTLs, revealed diverse
temporal effect patterns, and provided improved power compared with
the original analysis. More broadly, FASH offers a flexible
statistical framework for joint analysis of functional data, with
applications extending beyond genomics. To facilitate use of FASH in
dynamic eQTL studies and other settings, we provide an accompanying R
package at \url{https://github.com/stephenslab/fashr}.
\end{abstract}

% From https://www.tandfonline.com/action/authorSubmission?show=instructions&journalCode=uasa20:
% Should contain between 3 and 5 (or 6) keywords.
\noindent%
{\it Keywords:} 
% Hypothesis testing; Gene expression; False discovery rate; Empirical Bayes
Multiple hypothesis testing; 
Gaussian process;
Bayesian inference; 
Adaptive shrinkage; 
Functional data analysis; 
Expression quantitative trait locus

% % I've changed the spacing temporarily just to make it easier to read. -Peter
% \spacingset{1} 
\spacingset{1.8} % DON'T change the spacing!
\captionsetup{font={stretch=1.0}}

\section{Introduction}
\label{sec-intro}

Perturbing a system, and measuring the changes that result, is a
classic scientific way to understand a
system \citep{derisi1997exploring, heller1997discovery,
hughes2000functional}. Modern genomics experiments therefore often
measure the behaviors of many genomic units under a variety of
perturbations or treatments. Some of these experiments involve
measuring how behaviors change as a function of some underlying
continuous variable. 
%
% \citep{gtex2020gtex, knowles2017allele} 
%
For example, recent expression quantitative trait locus (eQTL) and
allele-specific expression (ASE) studies measure how the effects of
genetic variants on gene expression change over time, or with respect
to a continuous condition such as oxygen level
\citep{cuomo2020single, elorbany2022single, francesconi2014effects, 
Gutierrez-Arcelus2020, kang2023mapping, nathan2022single,
soskic2022immune, strober2019dynamic}. These studies, which we refer
to as ``dynamic eQTL studies'', typically produce noisy measurements
of eQTL effects for genomic units (gene-variant pairs) at multiple
settings of a continuous condition.
%
% ---that is, effect estimates together with their standard
% errors---
%
The goals of dynamic eQTL studies include estimating the underlying
eQTL effect functions (e.g., to characterize how the eQTL effect
changes over time) and identifying which eQTL effects deviate from
some ``null'' or ``baseline'' behavior. This null or baseline could be
defined as no change over time, or a linear change over time.
%
% {\color{changed_color}Dynamic eQTL analyses build on the classic idea
% of using perturbations to probe system behavior.  By introducing
% temporal or environmental changes as perturbing variables, these
% studies can uncover transient and context-specific genetic effects
% that are invisible in steady-state data, thereby revealing how genetic
% variation influences gene regulation across dynamic cellular states.}
%

Dynamic eQTL studies involve massive data, generating noisy
measurements from thousands or even millions of genomic
units. Therefore, statistical inferences can be greatly improved by
sharing information across units rather than treating each unit
separately as is typically done.
%
% \citep{ferguson2012new, flutre2013statistical}. 
%
Empirical Bayes (EB) \citep{efron2008microarrays, efron2009empirical}
provides a powerful framework for accomplishing this. EB methods learn
a common prior distribution from all units, which is usually centered
on the null or baseline, and then it improves the effect estimates for
each unit by shrinking them toward this null. An attractive feature of
EB is that the amount of shrinkage is adaptive in that it depends on
both the prior distribution (for example, experiments with more null
units result in prior distributions that shrink more) and on the
standard errors of the effect estimates (the noisier the estimate, the
stronger the shrinkage) \citep{ash}.
%
% In addition, when large-scale hypothesis testing is required, the EB
% framework provides a natural way to compute quantities such as the
% false discovery rate.
%
EB methods and convenient software implementations are widely available and 
are becoming more frequently used for studies involving one
condition \citep{ash, willwerscheid2025ebnm} or several
%
%  (categorical, nominal) 
%
conditions \citep{mash, liu2024flexible}.

{\color{changed_color}
In this work, we develop new empirical Bayes methods and software for
estimating {\em effect functions} that are defined on a
continuously-valued space such as time.
%
% Here we develop a new EB method and software for data involving
% {\em continuously-valued conditions.}
%
To model the effect functions, we use Gaussian processes
(GP) \citep{mackay}, probably the simplest and most widely used
approach to define a prior distribution for functions. We exploit a
family of GPs which we call the ``$L$-GP family'' that can encode
diverse functional shapes and behaviors. Members of this family are
defined by a single scalar parameter, $\sigma$, that quantifies
deviations from a null/baseline model space (e.g., the space of
constant functions or the space of linear
functions) \citep{lindgren2008second, yue2014bayesian, zhang2024model,
zhang2025efficient}. We show how the $L$-GP family can be combined
with adaptive shrinkage ideas \citep{ash} to {\em adaptively shrink
functions.} We call the new methods
``\textbf{F}unctional \textbf{A}daptive \textbf{SH}rinkage'' (FASH).}

% FASH uses a prior that is a mixture of $L$-GP distributions over a
% dense grid of $\sigma$ values.
% 
% Following the typical EB approach, 
%
% FASH then estimates these mixture weights by maximum-likelihood, and
% computes the posterior mean for each effect function given the
% estimated mixture weights. Therefore, FASH {\em adaptively shrinks}
% the effect estimates towards the baseline model; for example, if the
% data are largely consistent with the baseline model, the prior
% distribution will apply more weight to small settings of $\sigma$,
% resulting in stronger shrinkage toward the baseline.

{\color{changed_color}
% In addition to providing shrinkage-based estimates of the effect
% functions, 
% 
The combination of these ideas also provides a flexible way to perform
hypothesis testing on effect functions; that is, to assess whether an
effect function significantly departs from a given
%
% (composite) 
%
baseline model. Hypothesis testing on effect functions through FASH is very
different from---and has important benefits over---the {\em ad hoc}
approaches that have been used in dynamic eQTL
studies \citep{cuomo2020single, elorbany2022single,
francesconi2014effects, Gutierrez-Arcelus2020, kang2023mapping,
nathan2022single, soskic2022immune, strober2019dynamic}. For
example, \cite{soskic2022immune} performed hypothesis testing by
comparing models of gene expression with and without a
$\mbox{genotype} \times \mbox{time}$ interaction term. The problem
with this approach is that it fails to account for other potentially
interesting time-dependent effects in which the changes over time are
nonlinear. (Recognizing this limitation, \cite{soskic2022immune} also
compared models with and without a quadratic interaction term,
$\mbox{genotype} \times \mbox{time}^2$, but this still makes a
restrictive assumption about how the eQTL effects vary with respect to
time.) By contrast, the hypothesis testing in FASH does not make such
assumptions (see \cref{subsec-toy} for an illustration). As a result, our approach is more flexible and powerful.}  {\color{changed_color}
Furthermore, FASH has another benefit over the approaches that have
been used in dynamic eQTL studies: previous approaches involve fitting
models using the individual-level data, whereas FASH operates on
summary statistics, which are sometimes available when the
individual-level data are not (e.g., due to HIPAA requirements or
other data privacy concerns). Therefore, FASH can also be applied to
settings where only summary statistics are available.}

{\color{changed_color} Finally, we note that our new methods also address a
well known but often unaddressed limitation of empirical Bayes
methods.}  {\color{changed_color} EB methods
generally require that the prior distribution accurately captures both
the null and alternative hypotheses to achieve calibrated inference.
However, accurately specifying the prior under the alternative is
often unrealistic in practice, and the resulting inferences may
therefore lose calibration or become anticonservative.}  To help
ensure conservative inferences, we develop a Bayes factor (BF)-based
adjustment to the 
%
% estimated 
%
prior. While we only implemented this idea for FASH in this paper, we
note that it is a general technique that could potentially be
used for other EB-based shrinkage methods.

All the methods described in this paper have
been implemented in an R package, \texttt{fashr} (``functional adaptive
shrinkage in R''), which is available
at \url{https://github.com/stephenslab/fashr}. To our knowledge, the
previous dynamic eQTL studies did not disseminate dedicated software
tools for their hypothesis testing pipelines, so \texttt{fashr} represents the
first integrated software for dynamic eQTL studies. Further, the FASH
method and software implementation are quite general, and therefore we
forsee the potential to apply FASH to other data sets beyond dynamic
eQTL studies. The R package has detailed documentation and an
accessible interface, as well as a vignette illustrating recommended
usage of \texttt{fashr} for a dynamic eQTL data set. 

\subsection{Organization of Paper}

The rest of the paper is organized as follows.
In \cref{subsec-toy}, we present a motivating example illustrating shrinkage toward a baseline model and adaptive shrinkage via borrowing information across observation units.
In \cref{sec-method}, we introduce our FASH approach together with a BF-based adjustment for conservative hypothesis testing.
In \cref{sec-application}, we apply FASH to reanalyze the dynamic eQTL dataset of \citet{strober2019dynamic}, spanning 16 days of cardiomyocyte differentiation from induced pluripotent stem cells and comprising over one million gene--variant pairs.
This analysis identifies novel dynamic eQTLs, reveals diverse temporal effect patterns, and provides a more accurate characterization of effect functions than the parametric interaction analysis in the original study, which imposes more rigid assumptions.
Finally, \cref{discussion} summarizes our contributions, outlines current limitations of our approach, and discusses potential extensions to other areas.

% {\color{purple}
% 1. datasets with many effects measured under many continuously labeled conditions.
% 2. example: dynamic eQTL studies.
% 3. highlevel idea of functional data analysis.
% 4. want to infer the unknown function, and test hypothesis.
% 5. shrinkage is fruitful for such many datasets, many condition setting
% 6. existing method does not work for functional setting, where the continuous label provides correlation information.
% 7. the contribution of this work, fash, as well as a further BF-based FDR correction.
% 8. application of the method to re-analyze the dynamic eQTL example.}

\section{A Motivating Example}
\label{subsec-toy}

\begin{figure}[!t]
\centering
\includegraphics[width=1.0\textwidth]{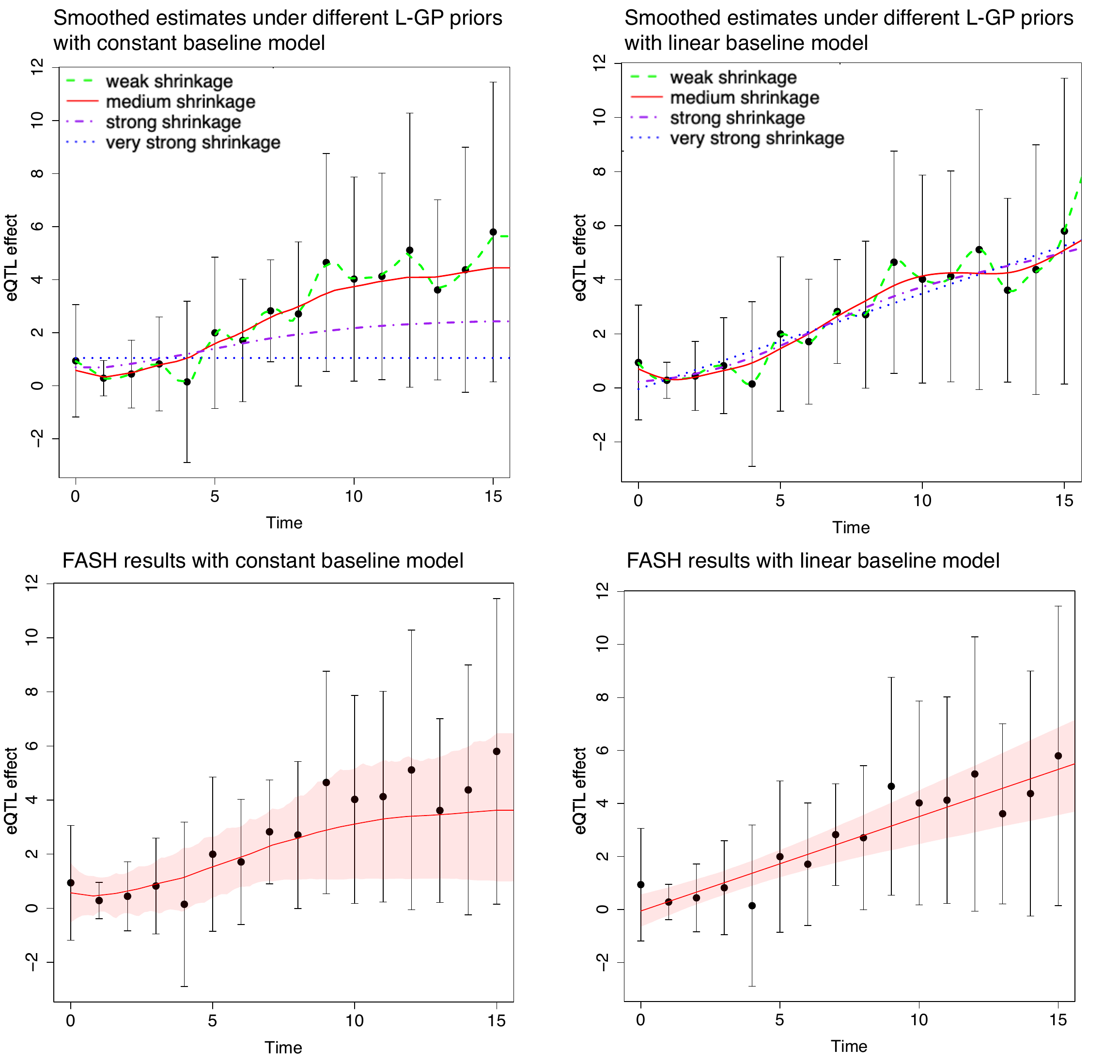}
\caption{Example illustrating the use of FASH to analyze a 
dynamic eQTL data set. In all panels, the original eQTL effect
estimates are shown as black dots, and vertical error bars depict $\pm
2$ standard errors. The top two panels show the smoothed estimates
defined as the posterior means from the $L$-GP method; the different
estimates are obtained by different levels of shrinkage toward the
constant and linear baseline models (left and right panels,
respectively). The bottom two panels summarize the results from the
proposed FASH method, with constant and linear baseline models (left
and right panels, respectively). The constant and linear baseline
models are defined using $L$-GP processes with $L = D^1$ and $L =
D^2$, respectively (see \cref{subsec-ASH}). The posterior mean effect
function is shown as a solid red line; the shaded region shows the
95\% credible interval.}
\label{fig:toy}
\end{figure}

To motivate the key ideas of our approach, we first give an example
from the dynamic eQTL application that is presented in more detail
below. For this example, we focus on data and results for a single
gene-variant pair: gene {\em FZD6} and SNP rs28392906. The black dots
in each of the panels in \cref{fig:toy} show the eQTL effect estimates at the different time points for this
gene-variant pair. The standard errors of the eQTL effects are shown
as vertical error bars.
%
% $\hat{\beta}_j(t_{j,r})$ 
%

Part of the statistical analysis involves improving the accuracy of
these ``noisy'' eQTL estimates by shrinking them toward a baseline
model. The baseline models in FASH assume that the true effect
function under this model varies ``smoothly'' over time. But the exact
baseline model we choose is guided by the scientific question of
interest. Here we consider two questions: (i) Is the eQTL effect {\em
dynamic}---that is, does it change over time?  (ii) Does the eQTL
effect change over time in a way that deviates from a linear
trend---that is, does it exhibit nonlinear changes over time?  The
null hypotheses corresponding to these baseline models are: (i) the
effect function is constant; (ii) the effect function is linear. Our
methods can shrink the effect functions towards either of these
baseline models, as well as other baseline models.

First, to illustrate this smoothing of the effect estimates,
%
% Given a baseline model, a typical smoothing technique will have a
% parameter that controls the strength of the shrinkage toward the
% baseline.  
%
we show the result of applying the $L$-GP model with different levels
of shrinkage, and with either the constant (\cref{fig:toy}, top-left
panel) or linear baseline (top-right panel).
From these plots, we see that the smoothed estimates vary greatly depending on the shrinkage level.
At the strongest shrinkage level, the smoothed estimates reduce to a constant function that averages across time points or a straight line from weighted linear regression. 
At the weakest shrinkage level, the smoothed estimates become a curve that interpolates all data points exactly.
%
% Depending on the baseline model, the smoothed
% estimates range from interpolating all observed effects to simply
% averaging across time points or fitting a weighted linear
% regression.
% 
Thus, determining the appropriate level of shrinkage is critical to
the analysis.

The bottom two panels in \cref{fig:toy} visualize the posterior
distributions of the effect functions obtained by applying FASH to
these data. FASH uses $L$-GP priors, but adapts these priors to the
data by pooling the information from all the available gene-variant
pairs.
% 
% (Only one of the gene-variant pairs is shown in \cref{fig:toy}.)  
%
Additionally, the shrinkage level is adapted separately for each
gene-variant pair depending on how closely the data match the baseline
model; data close to the baseline model tend to be shrunk more
strongly than data that are far away. (Different shrinkage patterns
for other gene-variant pairs are shown in
\cref{sec-application}). This is sometimes referred to
as ``global-local'' shrinkage \citep{polson2010shrink}.

{\color{changed_color} In this example, the data are not consistent
with a baseline model in which there is no change in the effect over
time. Therefore, the estimates are shrunk weakly with respect to this
baseline (\cref{fig:toy}, bottom-left). The hypothesis test from this
baseline model is a test of whether the effects remain unchanged over
time; clearly, there is strong evidence for rejecting this
hypothesis. (Hypothesis testing in FASH is introduced more formally
below.) Note that this test does not make any specific assumptions
about {\em how} the effects change over time (unlike analyses in many
published dynamic eQTL studies which do make such assumptions).}

{\color{changed_color} In the bottom-right panel of \cref{fig:toy}, we
see that the data are much more consistent with a linear change in effect over time. Therefore, the effect estimates are strongly
shrunk toward the linear baseline model. Since these data agree well
with the linear baseline model, it follows that there is little evidence
for rejecting the hypothesis of a linear change in eQTL effects, and therefore we view a roughly linear increase in the eQTL effects over time as a plausible description of these data. 
The two tests address different questions: the first assesses whether the effects change over time at all, whereas the second assesses whether there is evidence that those changes are nonlinear.
We give other examples of hypothesis testing in \cref{sec-application}.}

% In all cases, the amount of shrinkage also adapts to the standard
% errors of the observations.
%
% the posterior mean tends to be close to high-precision observations
% while smoothing over noisier observations.
%
% FASH identifies this eQTL as dynamic but not nonlinear, based on the
% corresponding false discovery rates outlined
% in \cref{subsec-inference}.

% In the following sections, we describe the main procedures of the
% FASH approach, emphasizing how it is designed to address these
% inferential questions. Further computational details of the fitting
% algorithm are provided in Appendix D.

\section{Functional Adaptive Shrinkage}
\label{sec-method}

\subsection{Notation}
\label{subsec-notation}

% In this paper, we adopt the following mathematical notation unless explicitly stated otherwise.
% We assume vectors and matrices are respectively shown as bold lower case ($\boldsymbol{a}$) and upper case ($\mathbf{A}$) letter.
% Scalars are shown as plain, lower case letter ($a$).
% The notation $[J] = \{1,...,J\}$ denotes the enumeration.
% The notation $|A|$ denotes the cardinality of the set $A$; when used with a matrix $\mathbf{A}$, then $|\mathbf{A}|$ refers to the determinant.
% A normal distribution is denoted by its mean and variance as $N(\mu,\sigma^2)$, with density denoted as $d\mathcal{N}(.;\mu,\sigma^2)$.
% We use $iid$ to denote independently and identically distributed; and $ind$ to denote independently distributed.

We use the following conventions in the mathematical expressions. For
an integer $J$, we write $[J] = \{1,\ldots,J\}$ for the index set.
For a set $A$, $|A|$ denotes its cardinality. The normal distribution
with mean $\mu$ and variance $\sigma^2$ is written as
$N(\mu, \sigma^2)$, with density
$d\mathcal{N}(\,\cdot\,; \mu, \sigma^2)$.  We abbreviate ``independent
and identically distributed'' as ``iid'', and we abbreviate
``independent'' as ``ind''.  We use $\mathbb{Z}^{+}$ to denote the set
of positive integers, $\mathbb{R}$ for the set of real-valued vectors,
$\mathbb{R}^p$ for the set of real-valued vectors of length $p$,
$\mathbb{R}^{p\times q}$ for the set of real-valued matrices of size
$p \times q$, and $C^p(\Omega)$ for the set of functions
$\Omega \rightarrow \mathbb{R}$ that are $p$-times continuously
differentiable.

\subsection{Problem Setup}
\label{sec:data}

Let $J \in \mathbb{Z}^{+}$ be the number of {observation units}.  For
each unit $j \in [J]$, we assume that we have effect estimates at
several values of a continuous condition variable $t \in \mathbb{R}$.
We denote these effect estimates by $\hat{\beta}_j(t_{j1}), \ldots,
\hat{\beta}_j(t_{jR_j}) \in \mathbb{R}$, where 
$R_j$ is the number of settings of $t$ where we have effect estimates
for unit $j$. {\em Our main aim is to estimate an underlying effect
function $\beta_j$, a mapping from $t \in \Omega$ to $\mathbb{R}$.} We
are particularly interested in settings where the number of units,
$J$, is large---say, hundreds or thousands, or perhaps more---so that
we can pool information to improve estimation of the underlying effect
functions. It is also important that have observations at several
values of $t$ for each $j \in [J]$.

In a dynamic eQTL study, $J$ is the number of gene-variant pairs, and
each $\beta_j(t_{jr})$ is the estimated effect of a genetic variant on
gene expression at time point $t_{jr}$ (or setting $t_{jr}$ of the
continous variable). When there is only one genetic variant associated
with each gene, $J$ is simply be the number of genes.  In our dynamic
eQTL case study below (\cref{sec-application}), $J$ is over 1 million,
and $R_j = 16$ for all $j \in [J]$. Note that in some dynamic eQTL
studies, expression is only measured at 2 or 3 time points, which is
probably too few for our methods to be useful.

\subsection{Empirical Bayes for Functional Data Analysis}
\label{subsec-model}

% We consider the following problem. 
% Assume there exists $J \in \mathbb{Z}^+$ datasets, for each dataset $j$, we measure the effects under $R_j \in \mathbb{Z}^+$ levels of a continuous condition variable ($t_{j1},...,t_{jR_j}$).
% The true effect of the $j$th dataset under condition $t_{j,r}$ is denoted as $\beta_j(t_{j,r})$, and observed effect estimate is denoted as $\hat{\beta}_j(t_{j,r})$, and its standard error is denoted as $s_{j,r}$.
% \begin{equation}\label{fash_likelihood}
%     \begin{aligned}
%         \boldsymbol{\hat{\beta}_j} &= [\hat{\beta}_j(t_{j,1}), \ldots, \hat{\beta}_j(t_{j,R_j})]^T, \\
%         \boldsymbol{{s}}_j &= [{s}_{j,1}, \ldots, {s}_{j,R_j}]^T, \quad j \in [J],\\
%         \hat{\beta}_j(t_{j,r})\ &|\ \beta_j(t_{j,r}) \overset{\text{iid}}{\sim} N\left(\beta_j(t_{j,r}), {s}_{j,r}^2\right), \quad r \in [R_j].
%     \end{aligned}
% \end{equation}

Our methods assume that the effect estimates are independent and
normally distributed given the underlying effect function:
\begin{equation}
\label{fash_likelihood}
\hat{\beta}_j(t_{jr}) \mid \beta_j, s_{jr}
\overset{\text{ind}}{\sim}\; N(\beta_j(t_{jr}), s_{jr}^2), 
\quad r \in [R_j],
\end{equation}
in which $s_{jr}$ denotes the standard error of the effect
estimate $\beta_j(t_{jr})$.
%
% Using the summary statistics \(\boldsymbol{\hat{\beta}} = (\boldsymbol{\hat{\beta}}_j)_{j=1}^J\) and 
% \( \boldsymbol{{s}} = (\boldsymbol{{s}}_j)_{j=1}^J \) as input, the inferential target is the unknown effect function $\beta_j$ for each $j$.
% In particular, it includes the following two questions:
% \begin{enumerate}
%     \item \textbf{Smoothing:} Characterize and visualize the inferred function $\beta_j$, or its functionals (such as derivatives) at observed or unobserved values $t$.
%     \item \textbf{Hypothesis Testing:} Among all the unknown functions $\{\beta_j\}_{j=1}^J$, test how many of the functions satisfy certain criterion $H_0:\beta_j \in S_0$ where $S_0$ denotes a certain space of functions, under false discovery or false sign rate control.
% \end{enumerate}
%
We further assume each underlying effect functions 
${\beta}_j$ are iid draws from a prior distribution 
for functions on
$\Omega \rightarrow \mathbb{R}$, denoted by $g_{\beta}$:
\begin{equation}
\label{fash_prior}
\beta_j \;\overset{\text{iid}}{\sim}\; g_{\beta}.
% \in \mathcal{G}_\beta.
\end{equation}

We have two main inference goals:
\begin{enumerate}

\item \textbf{Smoothing:} Recover and visualize the underlying 
effect function, ${\beta}_j$, or its functionals (e.g., derivatives),
at observed or unobserved values of $t$.

\item \textbf{Hypothesis testing:} Identify the units $j$ that deviate 
from a null hypothesis $H_0 : {\beta}_j \in S_0$, where $S_0$ denotes
some prespecified class of functions. In some applications, we might
also be interested in testing hypotheses of the form $H_0
: \mathcal{F}({\beta}_j) = 0$ for some functional $\mathcal{F}:
C^p(\Omega) \to \mathbb{R}$.  {For example, we might be interested in
testing whether the maximum of the function exceeds a given threshold
$\alpha$, which would be achieved with $\mathcal{F}({\beta}_j)
= \max_{t \,\in\, \Omega} {\beta}_j(t) - \alpha$.}
%
% {based on} false discovery or false sign rates.
%

\end{enumerate}
The first goal (``smoothing'') will be accomplished by computing the
posterior distribution of each effect function ${\beta}_j$:
\begin{equation}
\label{fash_posterior}
p({\beta}_j \mid \hat{\bm\beta}_j, \boldsymbol{s}_j, g_{\beta}) 
\propto p(\hat{\bm\beta}_j \mid {\beta}_j, \boldsymbol{s}_j) \,
g_{\beta}({\beta}_j),
% {\color{changed_color}{\hat{p}(\boldsymbol{\beta}) = 
% \prod_{j=1}^J \hat{g}_{\beta}(\boldsymbol{\beta}_j)}.}
\end{equation}
such that $\hat{\bm\beta}_j \colonequals
(\hat{\beta}_j(t_{j1}), \ldots, \hat{\beta}_j(t_{jR_j}))^T$,
$\boldsymbol{s}_j \colonequals (s_{j1}, \ldots, s_{jR_j})^T$ denote
the available data for unit $j \in [J]$. A smoothed point estimate of
${\beta}_j$ can be then obtained by its posterior mean. The second
goal (``hypothesis testing'') is accomplished by computing local false
discovery rates and local false sign rates (see below).
%
% defined later in \cref{equ:lfdr_def} and \cref{equ:lfsr}).
%

%
% We use $\hat{\bm\beta} \colonequals
% (\hat{\bm\beta}_j)_{j=1}^J$ and $\boldsymbol{s} \colonequals
% (\boldsymbol{s}_j)_{j=1}^J$ to denote the data, 
%

Specifying a single prior $g_{\beta}$ that works well across all data
sets is generally unrealistic, so it is much better if there is a
mechanism to learn or adapt a prior automatically based on the data.
We take an empirical Bayes approach to learning a prior: given a
predefined family of prior distributions, $\mathcal{G}_{\beta}$ (which
we will define below), we choose a $g_{\beta} \in \mathcal{G}_{\beta}$
that maximizes the log-likelihood of all the data,
\begin{equation}
\hat{g}_{\beta}
= \underset{g_{\beta} \,\in\, \mathcal{G}_{\beta}}{\mathrm{argmax}}
\sum_{j=1}^J \log \textstyle \int
\prod_{r = 1}^{R_j} p(\hat{\beta}_j(t_{jr}) \mid \beta_j, s_{jr})
\times g_{\beta}(\beta_j)
\, d\beta_j,
\label{eq:mle}
\end{equation}
and then all subsequent inferences use this estimated prior
$\hat{g}_{\beta}$.

Although our main aim is to make inferences about the unknown effect
functions, which involves reasoning about probability distributions on
functions, the underlying statistical computations are
straightforward in practice, reducing to probability distributions on
finite-dimensional spaces. Letting ${\bm\beta}_j \colonequals
(\beta_j(t_{j1}), \ldots, \beta_j(t_{jR_j}))$ denote the underlying
effect function at the values of $t$ where the effect estimates are
available, and ${\bm\beta} \colonequals \{{\bm\beta}_1, \ldots,
{\bm\beta}_J\}$, the modeling assumptions above imply the existence of
a prior $p({\bm\beta}; g_{\beta})$ such that
\begin{equation}
\label{fash_prior_fdd}
p(\boldsymbol{\beta}; {g_{\beta}}) =
\prod_{j=1}^J p(\boldsymbol{\beta}_j; {g_{\beta}}),
\end{equation}
where $p(\boldsymbol{\beta}_j; {g_{\beta}})$ denotes the
finite-dimensional distribution on $\boldsymbol{\beta}_j$ induced by
the prior $g_{\beta}$. For example, the statistical computations
needed for the maximum-likelihood estimation of the
prior \eqref{eq:mle} reduce to finite-dimensional integrals:
\begin{equation}
\hat{g}_{\beta} 
% &= 
% \underset{g_{\beta} \,\in\, \mathcal{G}_{\beta}}{\mathrm{argmax}} \;
% l(g_\beta) 
% \nonumber \\
% &= \underset{g_{\beta} \,\in\, \mathcal{G}_{\beta}}{\mathrm{argmax}}
% \sum_{j=1}^J \log p(\hat{\bm\beta}_j \mid \boldsymbol{s}_j)
% \nonumber \\
= \underset{g_{\beta} \,\in\, \mathcal{G}_{\beta}}{\mathrm{argmax}}
\sum_{j=1}^J \log \textstyle \int
p(\hat{\bm\beta}_j \mid \boldsymbol{\beta}_j, \boldsymbol{s}_j)
\, p(\boldsymbol{\beta}_j; g_{\beta}) \, d\boldsymbol{\beta}_j,
\end{equation}
where 
\begin{equation}
p(\hat{\bm\beta}_j \mid \boldsymbol{\beta}_j, \boldsymbol{s}_j) =
\prod_{r=1}^{R_j} d\mathcal{N}(\hat{\beta}_j(t_{jr}); 
\beta_j(t_{jr}), s_{jr}^2).
\end{equation}

\subsection{The Functional Adaptive Shrinkage Family of Priors}
\label{subsec-ASH}

% In this section, we consider the choice of prior that targets the above two goals.
% We start with defining the notion of \textit{baseline model} $S_0$, which is a space of simple function that is viewed as perfectly \textit{smoothed}.

We now describe the ``functional adaptive shrinkage'' family of prior
distributions on effect functions that addresses the goals outlined
above. The construction of this family begins with the specification
of a ``baseline model,'' $S_0$. This is done by specifying a
$p$th-order linear differential operator $L = \sum_{i=0}^p c_i D^i$,
where $D^i$ denotes the $i$th derivative operator and the
$c_i \in \mathbb{R}$ are specified coefficients. Then we define the
baseline model as
\begin{equation}
S_0 = \mathrm{Null}\{L\} = 
\{ \beta \in C^p(\Omega): L\beta = 0 \}.  
\end{equation}
For example, if $L = D^1$, then $S_0 = \text{span}\{1\}$ corresponds
to the class of constant functions; if $L = D^2$, then $S_0
= \text{span}\{1,t\}$ corresponds to the class of linear functions.
{More generally, for $L = D^p$, the baseline model corresponds to
the space of polynomials of order $p-1$.}

% Assume that each function $\beta_j$ is $p$ times continuously differentiable function with a compact support $\Omega$ (i.e. $\beta_i \in C^p(\Omega)$), and the baseline model could be specified through a $p$th order linear differential operator $L = \sum_{i=0}^pc_iD^i$ such that $\{c_0,...,c_p\}$ are known constants and $D^i$ denotes the $i$th order differentiation operator, such that:
% \begin{equation}
%     \begin{aligned}
%         S_0 &= \text{Null}\{L\}
%         = \{f\in C^p(\Omega): Lf = 0\}.
%     \end{aligned}
% \end{equation}

% For example, if we are interested in detecting when $\beta_j$ is non-constant function, then the baseline model corresponds to $S_0 = \text{span}\{1\}$, corresponding to $L = D$.
% If we are interested in testing whether any $\beta_j$ is nonlinear, then the baseline model becomes $S_0 = \text{span}\{1,t\}$.

Given $L$, we define an $L$-Gaussian process (``$L$-GP'') as a
Gaussian process satisfying
\begin{equation}
\label{equ:LGP}
L\beta = \sigma \xi,
\end{equation}
where $\xi$ is standard Gaussian white noise, and $\sigma>0$ is a
parameter that governs how much $\beta$ deviates from the baseline
model $S_0$. For conciseness, we write {this
as}
\begin{equation*}
\beta \sim L\text{-GP}(\beta; \sigma).
\end{equation*}
%
% with $\sigma$ controlling the magnitude of deviation allowed from
% the baseline model $S_0$.
%
When $L = D^p$, the $L$-GP corresponds to the $p$th-order Integrated
Wiener Process ($\text{IWP}_p$) \citep{shepp1966radon}.  The IWP prior
is closely connected to smoothing splines as its posterior mean
coincides with the spline estimator \citep{wahbaimproper}.  In what
follows, we focus on $L$-GPs of the $\text{IWP}_p$ form, while noting
that the method also applies to other types of $L$-GPs;
see \cite{lindgren2008second, yue2014bayesian, zhang2024model,
zhang2025efficient} for further background and examples.

\begin{figure}[!t]
\centering
\includegraphics[width=1.0\textwidth]{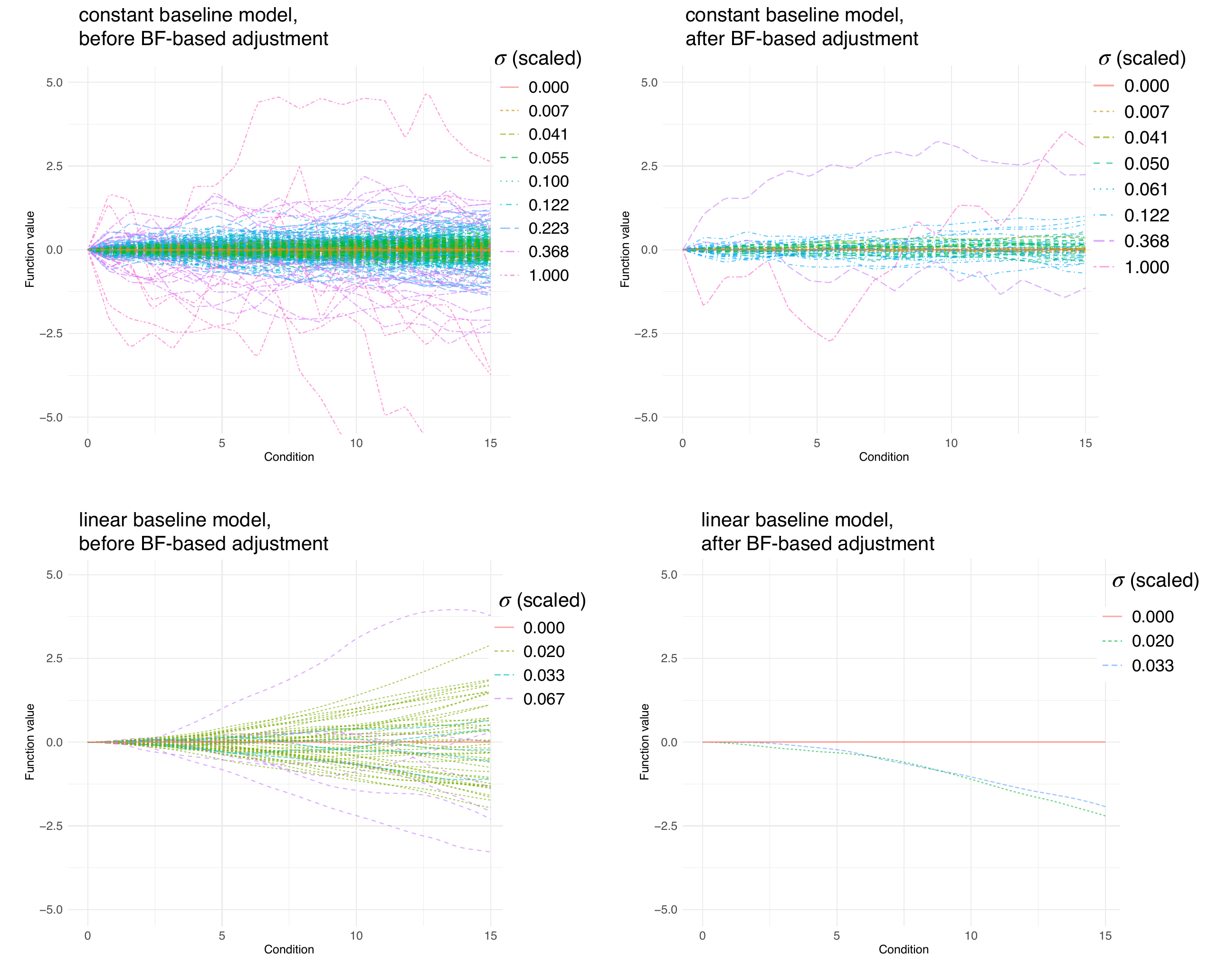}
\caption{
Sample paths randomly drawn from the fitted FASH priors that were
estimated from the dynamic eQTL data in \cref{sec-application}, for $L
= D^1$ (top) and $L = D^2$ (bottom), before (left) and after (right)
the BF-based adjustment.
%
% The baseline models corresponds to constant functions (upper panel)
% and linear functions (lower panel). Sample paths from different
% mixture components are displayed in different colors and line types.
%
For purposes of visualization, we constrained the sample paths to have
an initial condition of zero; that is, $\beta(0) = 0$ in the top row,
and $\beta(0) = \beta'(0) = 0$ in the bottom row. 5,000 randomly
sampled paths are shown in each plot.}
\label{fig:fitted_FASH_priors}
\end{figure}

When using an $L\text{-GP}({\beta};\sigma)$ prior for ${\beta}$, the
choice of $L$ governs the baseline model toward which shrinkage
occurs, and choice of $\sigma$ governs the level of shrinkage. As
$\sigma$ decreases toward zero, the $L$-GP becomes increasingly
constrained to remain close to the baseline
model. Figure \ref{fig:fitted_FASH_priors} illustrates how different
settings of $\sigma$ result in different prior distributions on effect
functions. 

For a flexible family of prior distributions that can appropriately
adapt the amount of shrinkage to the data, we use mixtures of $L$-GP
priors,
\begin{equation}
\label{equ:fash_adaptive} 
\mathcal{G}_{\beta}
= \left\{g_{\beta} : g_{\beta} = \sum_{k=0}^K \pi_k \,
L\text{-GP}({\beta}; \sigma_k)\right\},
\end{equation}
in which the $\{\sigma_k\}_{k=0}^K$ denote a fixed grid of values
ordered from small ($\sigma_0=0$, corresponding to no deviation from
$S_0$) to large ($\sigma_K$), and the $\pi_k$ denote mixture weights
($\pi_k \geq 0, \sum_{k=0}^K \pi_k = 1$). We refer
to \eqref{equ:fash_adaptive} prior family as the ``functional adaptive
shrinkage prior'' family as its construction mirrors the ``adaptive
shrinkage priors'' from \citet{ash,
mash}. 

Since the priors of the form \eqref{equ:fash_adaptive} are fully
specified by the mixture weights $\boldsymbol{\pi} \colonequals
(\pi_k)_{k=0}^K$, learning the prior
$g_{\beta} \in \mathcal{G}_{\beta}$ reduces to learning the prior
weights:
\begin{equation}
\begin{aligned}
\label{eb_pi}
\hat{\boldsymbol{\pi}} =&\;
\mathrm{argmax}_{\boldsymbol{\pi}} \, l(\boldsymbol{\pi}), \\
l(\boldsymbol{\pi}) \colonequals&\;
\sum_{j=1}^J \log \left\{\sum_{k=0}^K \pi_k \ell_{jk}\right\},
\end{aligned}
\end{equation}
where $\ell_{jk}$
% \begin{equation*}
% \ell_{jk} = p_k(\boldsymbol{\hat{\beta}}_j \mid \boldsymbol{s}_j)
% \end{equation*}
denotes the marginal likelihood of unit $j$ under the $k$th component
of the prior; that is, $\ell_{jk}$ is the marginal likelihood
$p(\hat{\bm\beta}_j \mid \boldsymbol{s}_j) = \int
p(\hat{\bm\beta}_j \mid \boldsymbol{\beta}_j, \boldsymbol{s}_j)
\, p_j(\boldsymbol{\beta}_j; g_{\beta}) \, d{\bm\beta}_j$ with a prior of the
form \eqref{equ:fash_adaptive} in which $\pi_k = 1$ and the remaining
mixture weights are zero.
 
% To adaptively learn the amount of shrinkage for each function $\beta_j$, we estimate the prior weights $\boldsymbol{\pi} = (\pi_k)_{k=0}^K$ via empirical Bayes. Let
% \[
% \mathbf{L}_{jk} = p_k(\boldsymbol{\hat{\beta}}_j \mid \boldsymbol{s}_j)
% \]
% denote the marginal likelihood of unit $j$ under the $k$th $L$GP component. Then the EB estimate is
% \begin{equation}\label{eb_pi}
%   \hat{\boldsymbol{\pi}} = \arg \max_{\boldsymbol{\pi}} l(\boldsymbol{\pi)}, \quad
%   l(\boldsymbol{\pi}) = \sum_{j=1}^{J} \log \!\left(\sum_{k=0}^K \pi_k \mathbf{L}_{jk}\right).
% \end{equation}
% The resulting functional adaptive shrinkage prior is
% \begin{equation}\label{eb_prior}
%   \hat{g}_\beta = \sum_{k=0}^K \hat{\pi}_k L\text{GP}(\sigma_k).
% \end{equation}

\subsection{Posterior Inference in FASH}
\label{subsec-inference}

In addition to computing posterior means, variances and other
posterior moments of effect functions, we also use the posterior
distributions \eqref{posterior_betaj} to perform hypothesis testing on
effect functions. All these involve computing expectations with respect to 
posterior distributions on effect functions \eqref{fash_posterior}.
Due to conjugacy of the likelihood \eqref{fash_likelihood} and
the prior \eqref{equ:fash_adaptive}, the posterior distributions are 
also mixtures:
\begin{equation}
\label{posterior_betaj}
p({\beta}_j \mid \hat{\bm\beta}_j, 
\boldsymbol{s}_j, \hat{\boldsymbol{\pi}})
= \sum_{k=0}^K \tilde{\pi}_{jk} \, 
p_k({\beta}_j \mid \hat{\bm\beta}_j, \boldsymbol{s}_j),
\end{equation}
where
$p_k({\beta}_j \mid \hat{{\bm\beta}_j}, \boldsymbol{s}_j)$
denotes the posterior of unit $j$ under the $k$th $L$-GP component,
and the $\tilde{\pi}_{jk}$ denote the posterior mixture weights,
\begin{equation}
\label{posterior_weight}
\tilde{\pi}_{jk} \colonequals
\frac{\hat{\pi}_k \, \ell_{jk}}
     {\sum_{k'=0}^K \, \hat{\pi}_{k'} \ell_{jk'}}.
\end{equation}
Also, the Markovian structure of the $L$-GP facilitates efficient
computation of these posteriors. We elaborate on this in the
Appendix \ref{subsec:computation} of supplement.

\subsubsection{Hypothesis Testing}
\label{sec:hypothesis_testing}

To test $H_0 : {\beta}_j \in S_0$, a common
approach is to control the false discovery rate (FDR) across the $J$
units, either in the classical frequentist
formulation \citep{benjamini1995controlling}, or in closely related
Bayesian formulations \citep{storey2002direct, efron2008microarrays}.
For a subset $\Gamma \subseteq [J]$ flagged as significant, the
estimated FDR is obtained from the local false discovery rate
(lfdr) \citep{efron2008microarrays} as
\begin{equation}\label{equ:fdr}
\widehat{\mathrm{FDR}}(\Gamma) = 
\frac{1}{|\Gamma|} \sum_{j \,\in\, \Gamma} \mathrm{lfdr}(j),
\end{equation}
in which
\begin{equation}
\label{equ:lfdr_def}
\mathrm{lfdr}(j) \colonequals 
p({\beta}_j \in S_0 \mid \boldsymbol{\hat{\beta}}_j, 
\boldsymbol{s}_j, \hat{\boldsymbol{\pi}}).
\end{equation}
Since the $k = 0$ component of the mixture
prior \eqref{equ:fash_adaptive} corresponds to $S_0$, in FASH the lfdr
reduces to
\begin{equation}
\mathrm{lfdr}(j) = \tilde{\pi}_{j0}.
\end{equation}

% By \cref{equ:lfdr_def}, we have
% \begin{equation}\label{equ:lfdr_pi}
%     \text{lfdr}_j = \tilde{\pi}_{j0} = \frac{\hat{\pi}_0 \mathbf{L}_{j0}}{\sum_{k=0}^K \hat{\pi}_k \mathbf{L}_{jk}}.
% \end{equation}

For testing hypotheses of the form $H_0 : \mathcal{F}(\beta_j) = 0$, where $\mathcal{F}: C^p(\Omega)\to\mathbb{R}$ is a functional,
there is the potential use local false sign rates (lfsr) and false
sign rates (FSR), which are generally more robust than the lfdr and
FDR \citep{ash}. For a two-sided alternative $H_1
: \mathcal{F}({\beta}_j) \neq 0$, the lfsr is
\begin{equation}
\label{equ:lfsr}
\mathrm{lfsr}(j) = \min \big\{
p(\mathcal{F}({\beta}_j) \geq 0 \mid \hat{\boldsymbol\beta}_j, 
\boldsymbol{s}_j, \hat{\boldsymbol\pi}),\,
p(\mathcal{F}({\beta}_j) \leq 0 \mid \hat{\boldsymbol\beta}_j, 
\boldsymbol{s}_j, \hat{\boldsymbol\pi}) \big\}.
\end{equation}
For a one-sided alternative, e.g., $H_1:\mathcal{F}({\beta}_j) >
0$, we define the lfsr as
\begin{equation}
\label{equ:lfsr_one}
\mathrm{lfsr}(j) = 
p(\mathcal{F}({\beta}_j) \leq 0 \mid 
\hat{\bm\beta}_j, \boldsymbol{s}_j, \hat{\boldsymbol\pi}).
\end{equation}
The cumulative FSR can be computed from the lfsr analogously
to the FDR \citep{ash}.

\subsection{BF-based Adjustment of $\hat{\pi}_0$}
\label{subsec-BFcontrol}

The lfdr and FDR can be quite sensitive to the estimate of the null
proportion, $\pi_0$. Although lfsr and FSR are generally more robust than lfdr and FDR, they can still be influenced by the estimate $\hat{\pi}_0$. In particular, if $\hat{\pi}_0$ falls below the true
$\pi_0$, inference can become anti-conservative, resulting in an
inflated FDR or FSR. Consequently, a conservative estimate, in which
$\hat{\pi}_0 \ge \pi_0$, would be preferred.

One major reason for underestimating $\hat\pi_0$ in FASH is
misspecification of the prior family; if the mixture prior
family \eqref{equ:fash_adaptive} is not sufficiently flexible to
approximate the true marginal distribution, then the
maximum-likelihood estimate of $\pi_0$ can become inaccurate. In the
simpler univariate setting \citep{ash}, prior misspecification may be
a mild concern, but in FASH, like in other multivariate
settings \citep{mash, liu2024flexible}, the concern is greater due to
the difficulty of adequately modeling data in high dimensions.
%
% Appendix~C provides a detailed discussion of how misspecification of
% the alternative distribution under $H_1$ affects $\hat{\pi}_0$.
%

To guard against this issue, we describe a simple yet effective
BF-based adjustment of $\hat{\pi}_0$. {\color{changed_color} This
adjustment is not specific to FASH, and could potentially be used in
other settings where prior could be misspecified.} This adjustment only
exploits the fact that, under the null hypothesis, the BF in favor of
the alternative has an expectation of 1 (that is, BFs are
$e$-variables; see \citealt{vovk2021values}). This property of BFs is
stated more formally in the following lemma.
\begin{lemma}[BF-Moment]\label{lemma:BF_moment} 
\rm Let the Bayes factor for unit $j$ be
\begin{equation*}
\text{BF}_j \colonequals \frac{p_{1}(\hat{\boldsymbol\beta}_j \mid 
\boldsymbol{s}_j)}{p_{0}(\hat{\boldsymbol\beta}_j \mid \boldsymbol{s}_j)},
\end{equation*}
where $p_1$ and $p_0$ denote the marginal likelihoods under the
alternative ($H_1$) and null ($H_0$) hypotheses, respectively. Under
the null hypothesis, $H_0$, we have $\mathbb{E}_0(\text{BF}_j) = 1$,
regardless of how the alternative hypothesis $H_1$ is specified.
\end{lemma}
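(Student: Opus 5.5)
The plan is to compute the expectation directly as an integral against the null marginal density, and watch the denominator cancel. Fix a unit $j$ and condition throughout on $\boldsymbol{s}_j$, treated as known. Under $H_0$, the data $\hat{\boldsymbol\beta}_j$ have density $p_0(\cdot \mid \boldsymbol{s}_j)$ on $\mathbb{R}^{R_j}$. In FASH this is the marginal likelihood under the $k=0$ component, namely $\int p(\hat{\bm\beta}_j \mid \boldsymbol{\beta}_j, \boldsymbol{s}_j)\, p(\boldsymbol\beta_j;\sigma_0=0)\, d\boldsymbol\beta_j$. More generally, it can be any marginal obtained by integrating the likelihood \eqref{fash_likelihood} against some prior supported on $S_0$. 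Writing the expectation as this integral gives
\begin{equation*}
\mathbb{E}_0(\text{BF}_j) = \int_{\mathbb{R}^{R_j}} \frac{p_1(\mathbf{x} \mid \boldsymbol{s}_j)}{p_0(\mathbf{x} \mid \boldsymbol{s}_j)}\, p_0(\mathbf{x} \mid \boldsymbol{s}_j)\, d\mathbf{x} = \int_{\mathbb{R}^{R_j}} p_1(\mathbf{x} \mid \boldsymbol{s}_j)\, d\mathbf{x}.
\end{equation*}

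The second step is to check that $p_1(\cdot \mid \boldsymbol{s}_j)$ is a genuine probability density, so that this last integral equals $1$. Whatever prior $g_1$ is used under $H_1$ (a single $L$-GP, a mixture $\sum_{k\ge1}\pi_k L\text{-GP}(\sigma_k)$ suitably renormalized, or anything else), we have $p_1(\mathbf{x}\mid\boldsymbol{s}_j)=\int p(\mathbf{x}\mid\boldsymbol\beta_j,\boldsymbol{s}_j)\,g_1(d\boldsymbol\beta_j)$. Since the integrand is nonnegative, Tonelli's theorem lets us swap the order of integration. Because the Gaussian likelihood integrates to one in $\mathbf{x}$ for every fixed $\boldsymbol\beta_j$, this gives $\int p_1 = \int 1\, g_1(d\boldsymbol\beta_j) = 1$. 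This argument uses nothing about $g_1$ beyond its being a probability measure, which is exactly the claim that the result holds ``regardless of how $H_1$ is specified.''

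The main obstacle is a support issue rather than any calculation. The first identity requires $p_0>0$ wherever $p_1>0$; otherwise the integral only recovers $\int_{\{p_0>0\}} p_1 \le 1$. Here this is automatic, because the $s_{jr}>0$ make the null marginal a nondegenerate Gaussian with full support on $\mathbb{R}^{R_j}$. I would add a remark that in general one gets $\mathbb{E}_0(\text{BF}_j)\le 1$, which is the $e$-variable property actually needed downstream. A second subtlety is a composite null with an unspecified distribution over $S_0$. I would handle this by fixing $p_0$ as the null marginal implied by the prior component on $S_0$. If a pointwise statement is wanted, the same computation goes through for any fixed null prior on $S_0$, provided the BF's denominator uses that same prior.
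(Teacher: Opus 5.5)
Your proposal is correct and follows the same route as the paper's proof: write $\mathbb{E}_0(\mathrm{BF}_j)$ as an integral against $p_0$, cancel the denominator, and use that $p_1$ integrates to one. Your extra steps spell out details the paper's one-line computation leaves implicit: Tonelli to show $\int p_1 = 1$ for any alternative prior, and $s_{jr}>0$ to show $p_0$ has full support, with $\mathbb{E}_0(\mathrm{BF}_j)\le 1$ in general.
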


\begin{proof}
\begin{equation*}
\mathbb{E}_0(\text{BF}_j) = 
\int \frac{p_1(\hat{\boldsymbol\beta}_j \mid 
\boldsymbol{s}_j)}{p_0(\hat{\boldsymbol\beta}_j \mid 
\boldsymbol{s}_j)} \times p_0(\hat{\boldsymbol\beta}_j \mid 
\boldsymbol{s}_j) \, d\hat{\boldsymbol\beta}_j
= \textstyle 
\int p_1(\hat{\boldsymbol\beta}_j \mid \boldsymbol{s}_j) \, 
d\hat{\boldsymbol\beta}_j = 1.
\end{equation*}
\end{proof}

\begin{algorithm}[t]
\caption{BF-based adjustment of $\pi_0$.}
\label{algo:BF}
\begin{algorithmic}[1]

\REQUIRE A $J \times (K+1)$ matrix $\mathbf{L}$ in which element $(j,k)$ 
is $\ell_{jk}$, the marginal likelihood of unit $j$ under the $k$th 
component of the mixture prior; the estimated prior weights,
$\hat{\boldsymbol{\pi}} = (\hat{\pi}_0, \ldots, \hat{\pi}_K) \in 
\mathbb{R}^{K+1}$; a set of candidate cutoff values, 
$\mathcal{C} \subset \mathbb{R}^{+}$ a ``buffer'' tuning parameter, 
$\epsilon > 0$, typically close to zero.

\STATE Normalize the alternative weights,
$\hat{\pi}_k^{*} = \hat{\pi}_k / \sum_{k'=1}^K \hat{\pi}_k'$, 
$k = 1, \ldots, K$.

\STATE Compute the ``collapsed'' likelihoods,
$\ell_{j0}^c = \ell_{j0}$ and $\ell_{j1}^c
= \sum_{k=1}^K \ell_{jk} \hat{\pi}_k^{*}$, $j = 1, \ldots, J$.

\STATE Compute the Bayes factors,
$\text{BF}_j = \ell_{j1}^c / \ell_{j0}^c$ for $j = 1, \ldots,
J$.

\FOR{$c \in \mathcal{C}$}

\STATE $J_0 = \sum_{j=1}^J \mathbb{I}\{\text{BF}_j < c\}$

\STATE $\hat{\pi}_0(c) = J_0/J$.

\STATE $\mu(c) = \sum_{j=1}^J 
\text{BF}_j \times \mathbb{I}\{\text{BF}_j < c\}/J_0$.

\ENDFOR

\STATE $c^{*} = \inf\{c \in \mathcal{C} : \mu(c) \geq 1+\epsilon\}$.

\STATE Adjust the null prior weights, $\hat{\pi}_0 = \hat{\pi}_0(c^*)$.

\STATE Adjust the alternative prior weights,
$\hat{\pi}_k = \hat{\pi}_k^{*} (1 - \hat{\pi}_0(c^{*}))$,
$k=1,\ldots,K$.

\RETURN the adjusted weights, 
$\hat{\boldsymbol{\pi}} = (\hat{\pi}_0, \ldots, \hat{\pi}_K)$.

\end{algorithmic}
\end{algorithm}

Let $J_0$ and $J_1$ denote the numbers of null and alternative units,
respectively, with $J = J_0 + J_1$ and $\pi_0 = J_0/J$.  By the law of
large numbers, when $J_0$ is large we have
\begin{equation*}
\frac{1}{J_0} \sum_{j\, \in\, \mathcal{H}_0} \text{BF}_j \;\approx\; 1,
\end{equation*}
where $\mathcal{H}_0 \subseteq \{1, \ldots, J\}$ denotes the null
{units}. This observation motivates a simple conservative procedure to
estimate $J_0$, and hence $\pi_0$: seek the largest possible set of
units that is, on average, consistent with being all null (average BF
$\leq 1$).  Algorithm~\ref{algo:BF} implements this procedure, which
we call the {\em BF-based adjustment of $\pi_0$}. The
following \cref{thrm:BF_control} establishes the
conservativeness of the adjusted $\hat{\pi}_0$.
\begin{theorem}[BF-based adjustment gives conservative estimate]
\label{thrm:BF_control}
\rm Assume $\hat{\pi}_0$ is obtained from Algorithm~\ref{algo:BF}, and
the $J_0$ null effects are iid from the null distribution specified by
the prior. Then for any alternative distribution, and for any
$\epsilon > 0$,
\begin{equation*}
\hat{\pi}_0 \geq \pi_0 \; \text{almost surely as} \; J_0 \to \infty.
\end{equation*}
\end{theorem}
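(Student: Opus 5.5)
The plan is to show that, with probability one for all large $J_0$, the algorithm's cutoff $c^*$ is at least as large as a cutoff whose retained set already contains at least $J_0$ units, and then use the monotonicity of $\hat\pi_0(c)$ in $c$. Two facts drive this. First, $\hat\pi_0(c) = \#\{j:\mathrm{BF}_j<c\}/J$ is nondecreasing in $c$. Second, for each $c$, $\mu(c)$ is the average of the Bayes factors falling below $c$. By Lemma~\ref{lemma:BF_moment}, applied to the collapsed alternative $\ell^c_{j1}$ (a legitimate marginal density, because the $\hat\pi^*_k$ sum to one), each null unit satisfies $\mathbb{E}_0(\mathrm{BF}_j)=1$, whatever the alternative. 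I will treat the BF function as fixed, i.e.\ condition on the weights $\hat\pi^*$, or else assume they converge to a limit. Making that step rigorous is one of the technical issues addressed below.

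The key comparison runs as follows. Fix $c$ and let $A(c)=\{j:\mathrm{BF}_j<c\}$. Suppose $|A(c)|<J_0$. Then $A(c)$ misses at least one null unit, and every unit outside $A(c)$ has $\mathrm{BF}\ge c$. I want to show that this forces $\mu(c)<1+\epsilon$ eventually, for every $c$ with $|A(c)|<J_0$, so that $c^*$ cannot be such a $c$. Consequently $|A(c^*)|\ge J_0$, which gives $\hat\pi_0=\hat\pi_0(c^*)\ge J_0/J=\pi_0$.

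To carry this out I would use the null units alone. For large $c$ the truncated null mean $m(c)=\mathbb{E}_0[\mathrm{BF}\,\mathbb{I}\{\mathrm{BF}<c\}]/P_0(\mathrm{BF}<c)$ tends to $1$ from below. More usefully, truncating a variable with mean $1$ at a finite level produces a mean strictly below $1$ unless the variable is degenerate. Hence any truncation of the null BFs that discards some mass has average below $1$. The argument I would aim for is this: if $A(c)$ excludes null units, the null units it contains are exactly the null units with $\mathrm{BF}<c$, where $c$ is at most the largest null BF. By the strong law of large numbers applied uniformly over $c$ (a Glivenko--Cantelli-type argument for $\frac1{J_0}\sum_{j\in\mathcal H_0}\mathrm{BF}_j\mathbb{I}\{\mathrm{BF}_j<c\}$, valid because $\mathrm{BF}$ is integrable under $H_0$), their average lies below $1+\epsilon/2$ almost surely for large $J_0$.

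The main obstacle is the alternative units inside $A(c)$. These have $\mathrm{BF}<c$, and if $c$ is large they could in principle push the average above $1+\epsilon$. I would handle this by working with $\mathcal{C}$ in increasing order and using $c^*=\inf$. At the first $c$ where $\mu(c)\ge1+\epsilon$, the retained set must already carry average BF exceeding what null units alone can supply, up to the vanishing SLLN error, by the margin $\epsilon$. I would argue that this can only happen once essentially all null mass is included; otherwise the null part has average at most $1+o(1)$, and I need to show the alternative units cannot lift it. This is the step where I am least sure. It seems to require either that $\mathcal{C}$ be rich enough and that truncated null averages be compared against $c$ itself, or a direct argument that $\mu(c)\ge1+\epsilon$ forces $A(c)\supseteq\mathcal H_0$ asymptotically, perhaps through the maximum null BF growing with $J_0$. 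The remaining ingredients, the SLLN and monotonicity, are routine.
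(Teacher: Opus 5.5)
Your reduction is sound. It suffices to show that, almost surely for all large $J_0$, every $c$ with $|A(c)|<J_0$ has $\mu(c)<1+\epsilon$. This is exactly the contrapositive of what the paper's argument establishes: its proof uses nothing about $c^*$ beyond $\mu(c^*)\ge 1+\epsilon$. But there is a genuine gap. The step you flag as least certain, namely what the alternative units inside $A(c)$ can do to the average, is the whole content of the theorem, and the proposal leaves it open.

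The tools you reach for do not close it:
\begin{itemize}
\item The Glivenko--Cantelli bound on truncated null averages controls only the null units in $A(c)$. Even that is unnecessary: the null BFs below $c$ have empirical mean deterministically no larger than the mean of all null BFs.
\item The ordering of $\mathcal{C}$ and the use of the infimum play no role.
\item The route via $A(c)\supseteq\mathcal{H}_0$ cannot work. Take a fixed finite grid $\mathcal{C}$ and null BFs with unbounded support, as in the Gaussian case. Almost surely, for large $J_0$, the largest null BF exceeds $\max\mathcal{C}$, so no retained set contains all the nulls. Only a cardinality comparison can succeed.
\end{itemize}

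The missing idea is to compare $A(c)$ with the \emph{full} null set $\mathcal{H}_0$ rather than with a truncated one. The two sets share $\mathcal{H}_0\cap A(c)$. Let $J_{0,0}$ be its size and $S$ its BF sum, and let $J_{0,1}=|\mathcal{H}_0\setminus A(c)|$ and $J_{1,0}=|\mathcal{H}_1\cap A(c)|$. Then $|A(c)|<J_0$ if and only if $J_{1,0}<J_{0,1}$. In words, passing from $\mathcal{H}_0$ to $A(c)$ swaps $J_{0,1}$ excluded nulls, each with BF $\ge c$, for fewer included alternatives, each with BF $<c$.

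Set $f(x)=(cx+S)/(x+J_{0,0})$. This is nondecreasing because $S\le cJ_{0,0}$, and so
\[
\mu(c)\;\le\; f(J_{1,0})\;\le\; f(J_{0,1})\;\le\;\frac{1}{J_0}\sum_{j\in\mathcal{H}_0}\mathrm{BF}_j .
\]
If $J_{0,0}=0$, directly $\mu(c)<c\le$ every null BF. So the included alternatives cannot lift the average: they are outnumbered by excluded nulls whose BFs are all larger.

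A single ordinary SLLN on the untruncated null mean, with no uniformity in $c$, then gives $\mu(c)\le 1+\epsilon/2<1+\epsilon$ for every such $c$ simultaneously. Hence $\mu(c^*)\ge 1+\epsilon$ forces $|A(c^*)|\ge J_0$, that is, $\hat\pi_0\ge\pi_0$. This is the paper's proof, written there at $c^*$ as $f(J_{0,1})\le 1+\epsilon/2\le f(J_{1,0})$ together with strict monotonicity of $f$.

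Your caveat about the data-dependent weights $\hat\pi^*_k$ is legitimate but is not the gap; the paper likewise treats the collapsed alternative as fixed when applying the lemma and the SLLN.
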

The proof of this theorem is given in Appendix \ref{sec:proofs} of the
supplement.

Importantly, {\em \cref{thrm:BF_control} requires only that the null
distribution be correctly specified} (the $k = 0$ mixture component of
the FASH prior, $g_{\beta}$), whereas the alternative distribution (mixture
components 1 through $K$ in $g_{\beta}$) may be misspecified. For
example, this result does not depend on the shrinkage values
$\sigma_1, \ldots, \sigma_K$, which means that the BF-based adjustment
should work even when a coarse grid of values is used to reduce
computation.

The requirement that the null hypothesis be correctly specified is
analogous to the usual conditions required for calibration of
classical $p$-values, and is much less onerous than the requirement
that the full distribution be correctly specified.  Nonetheless, in
this setting the requirement is not entirely benign, since the null
distribution is not simply a point mass at zero, but rather a diffuse
distribution over the function class $S_0$ of dimension $p \ge
1$. (While diffuse priors can be a problem for Bayes factor
computation, as they can make the marginal likelihood arbitrarily
scaled, here the diffuse component is shared across all mixture
components, so these arbitrary scales cancel out when taking their
ratio, and the resulting Bayes factors remain valid;
see \citealt{BFANOVA, servin2007imputation}.)  If the true null
functions are not diffuse over $S_0$, then technically this introduces
prior misspecification and the conditions of \cref{thrm:BF_control}
fail to hold.  Nonetheless, in our experiments the BF-based
adjustments produced consistently conservative estimates of
$\hat{\pi}_0$, even when the prior is misspecified
(see Appendix \ref{sec:simulations} of the supplement).

{\color{changed_color} From \eqref{posterior_weight}
and \eqref{equ:lfdr_def}, overestimating $\hat{\pi}_0$ in the FASH
prior shifts posterior mass toward the baseline model, $S_0$, yielding
more conservative decisions against $H_0$. 
See \cref{fig:fitted_FASH_priors} for an illustration of
this. In Appendix \ref{sec:proofs} of the supplement, we provide
further details by examining the posterior odds comparing the null and
the alternative. The conservative behavior of related quantities, such
as the lfdr and lfsr, is also investigated empirically in the
Appendix \ref{sec:simulations} of the supplement.}

\section{Analysis of Dynamic eQTLs in the iPSC Cardiomyocyte Differentiation Study}
\label{sec-application}

{\color{changed_color} We evaluated FASH on a variety of performance
measures in simulated data sets. These simulation experiments are
described in detail in Appendix \ref{sec:simulations} of the
supplement. Of note, our simulations confirmed that the BF-based
adjustment produces conservative estimates of $\pi_0$, and therefore
produces conservative estimates of lfdr and FDR, even when
the prior is misspecified. Reassuringly, these
conservative estimates did not subtantially reduce power.}

Now we focus on a real data application, a reanalysis of dynamic
eQTLs in the iPSC cardiomyocyte differentiation study
\citep{strober2019dynamic}. This study measured daily gene
expression over a 16-day period in induced pluripotent stem cells
(iPSCs) derived from 19 Yoruba HapMap cell lines undergoing
differentiation into cardiomyocytes. The goals of the analysis were to
identify the dynamic eQTLs---that is, to identify genetic loci whose
effects on gene expression vary over time---and to characterize how
genetic regulation of gene expression changes throughout the process
of differentiation. We show that FASH is able to meet these goals in a
principled way, and we compare the FASH results to the original
analysis based on parametric interaction models.
%
% involved {\em ad hoc} solutions that limited discovery and
% complicated interpretation of the results.
%
% \subsection{Data from the iPSC Cardiomyocyte Differentiation Study}

After carrying out the data preprocessing and quality control
procedures from \cite{strober2019dynamic}, we obtained data on $J
= \mbox{1,009,173}$ gene-variant pairs (6,362 genes). All the genetic
variants considered in this analysis were SNPs within 50 kb of the
gene's transcription start site (TSS).  For each gene-variant pair
$j \in [J]$, we obtained eQTL effect estimates $\hat{\beta}_j(t_r)$ at
16 time points by fitting linear regression models separately for each
of the time points $r \in [16]$. The standard errors $s_{jr}$ of the
eQTL effect estimates were subsequently adjusted to address concerns
about inflation of type I errors due to the small sample size (see
Appendix \ref{subsec:application-details} of the supplement for
details).
% 
% After this adjustment, we defined the estimated
% effects and their standard errors for each gene-variant pair as
% \begin{equation}
% \begin{aligned}
% \boldsymbol{\hat{\beta}_j} &= 
% (\hat{\beta}_j(t_1), \ldots, \hat{\beta}_j(t_{R})]^T, \\
% \boldsymbol{\tilde{s}}_j &= 
% (\tilde{s}_{j1}, \ldots, \tilde{s}_{jR}]^T, \quad j \in [J],\\
% \hat{\beta}_j(t_r)\ |\ \beta_j(t_r) 
% &\overset{\text{iid}}{\sim} 
% N(\beta_j(t_r), \tilde{s}_{jr}^2), \quad r \in [R].
% \end{aligned}
% \end{equation}

Specifically, we considered two inference aims:
\begin{enumerate}

\item {\em Identify the dynamic eQTLs.} This was implemented in FASH 
using mixtures of $L$-GP priors with $L = D^1$. We refer to the
fitted model in this analysis as the ``FASH-$\text{IWP}_1$'' model.
% 
%  which tests the null hypothesis $H_{0j}: \beta_j = c_j$ for some
% constant $c_j$ We set \( L = \frac{d}{dt}$ ($\text{IWP}_1$).
% 

\item {\em Identify the nonlinear dynamic eQTLs.} This was implemented 
in FASH using mixtures of $L$-GP priors with $L = D^2$. We refer to
the fitted model in this analysis as the ``FASH-$\text{IWP}_2$''
model.
% 
% $j$th variant-gene pair, test the null hypothesis $H_{0j}: \beta_j =
% c_j + b_j t$ for some constants $c_j$ and $b_j$.  We set $L
% = \frac{d^2}{dt^2}$ ($\text{IWP}_2$).
%

\end{enumerate}

% For each of the two inference aims, we employ the corresponding FASH
% model described in \cref{sec-method}.
%
% \begin{equation}\label{equ:fash_eqtl}
%     \begin{aligned}
%         \hat{\beta}_j(t_r)\,|\,\beta_j(t_r) &\overset{\text{ind}}{\sim} N\left(\beta_j(t_r), \tilde{s}_{j,r}^2\right), \\
%         \beta_j\,|\,\boldsymbol{\pi} &\overset{\text{iid}}{\sim} \sum_{k=0}^K \pi_k\, \text{GP}(\sigma_k), \quad \forall j \in [J],\, r \in [R].
%     \end{aligned}
% \end{equation}
% Each component \( \text{GP}(\sigma_k) \) in the mixture is an \( L \)-GP prior defined by an integrated Wiener process (IWP), where the linear differential operator \( L \) encodes the baseline model. 

All the results were generated using version 0.1.42 of our R package,
\texttt{fashr}. R code implementing our analyses is available at
\url{https://github.com/stephenslab/fashr-paper}. 
In all of the analyses, the FASH priors were defined on an equally
spaced grid on the log-scale; {see
Appendix \ref{subsec:application-details} for details.  After
estimating the mixture weights $\boldsymbol{\pi}$ by
maximum-likelihood, we applied the BF-based adjustment, then we used
the adjusted weights $\hat{\bm\pi}$ to make inferences from the
posterior distributions $p(\beta_j \mid \hat{\bm\beta}_j, {\bm
s}_j, \hat{\bm\pi})$. We used an FDR threshold of $\alpha
= \mbox{0.05}$ for identifying ``significant'' dynamic
eQTLs.\footnote{We use FDR so that the FASH analysis is more
comparable to the original analysis, but lfdr is generally preferred;
see \cref{discussion} for further discussion on the use of FDR and FSR
vs. lfsr and lfsr for hypothesis testing in FASH.} Performing each of
the FASH analyses end-to-end using \texttt{fashr} took about 11
hours. The computations were performed on Linux machines (Scientific
Linux 7.4) with Intel Xeon Gold 6248R (``Cascade Lake'') processors
and 16 parallel threads.

\begin{figure}[!t]
\centering
\includegraphics[width=0.925\textwidth]{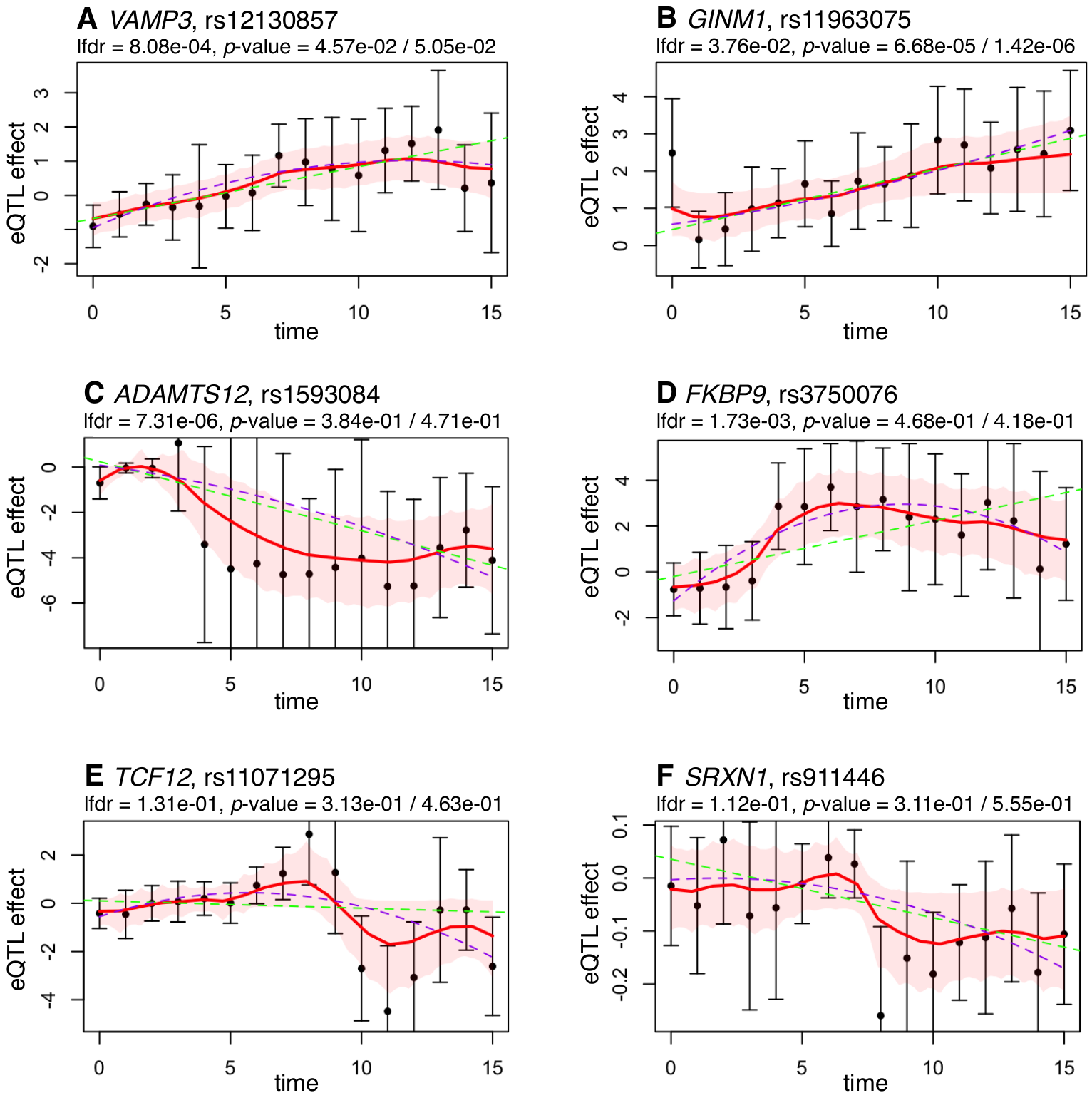}
\caption{
Examples of dynamic eQTLs identified by FASH ($L = D^1$), and
comparison with parametric modeling approach
\citep{strober2019dynamic}.  In each plot, the posterior mean
of $\beta_j$ is shown as a red solid line, and the 95\% credible
interval is depicted by the shaded region. Observed effect size
estimates $\hat{\beta}_j$ are shown as black dots, with vertical error
bars representing $\pm 2$ the (adjusted) standard errors
$\tilde{s}_{jr}$. The lfdr is the lfdr from the FASH-$\text{IWP}_1$
model after the BF adjustment; the {\em p-}values are from the
analyses of \cite{strober2019dynamic} with linear/quadratic
interaction models. The inverse-variance weighted least squares
estimates for the linear ($G_c \times t$) and quadratic ($G_c \times
t^2$) parametric interaction models are shown as dashed lines (green =
linear, purple = quadratic).}
\label{fig:dynamic_compare}
\end{figure}

\subsection{Discovery of Dynamic eQTLs}

Examining in detail some of the dynamic eQTLs identified by FASH with
$L = D^1$ illustrates the variety of dynamics that can be captured by
FASH.
%
% and provides some insight into the differences between FASH and
% the original analysis that used parametric interaction models.
%
The examples in the top row of Fig. \ref{fig:dynamic_compare} (A, B)
have effects that appear to gradually become stronger over time.  In
these examples, both the linear interaction ($G_c \times t$, where
$G_c$ is the SNP genotype) and quadratic interaction ($G_c \times
t^2$) models used in \cite{strober2019dynamic} also provided good fits
to the eQTL effects in these examples, and indeed these gene-variant
pairs were identified as dynamic eQTLs
in \cite{strober2019dynamic}. By contrast, the examples in the middle
and bottom rows of Fig. \ref{fig:dynamic_compare} (C--F) were
identified as dynamic eQTLs by FASH, but they were not identified in
the original analysis. Indeed, in all these examples, the linear and
quadratic models appear to be a poor fit for the nonlinear dynamics of
these eQTLs; for example, in C and D the effect strengthens rather
abruptly around day 5, and in E and F the suddenly gets stronger
at around day 10. Dynamic eQTLs C and D were very strongly identified
by FASH (lfdr of 0.01 or lower)
%
% and show very clear changes over time.
%
E and F are ``borderline'' dynamic eQTLs (lfdr > 0.1) with more subtle
temporal dynamics. Additional illustrative examples, including
examples of gene-variant pairs that FASH did not identify as dynamic
eQTLs, are given in Figures \ref{fig:dynamic_highlight}, \ref{fig:nonsig_order1} and \ref{fig:missFASH} in the supplement.

\begin{figure}[!t]
\centering
\includegraphics[width=0.825\textwidth]{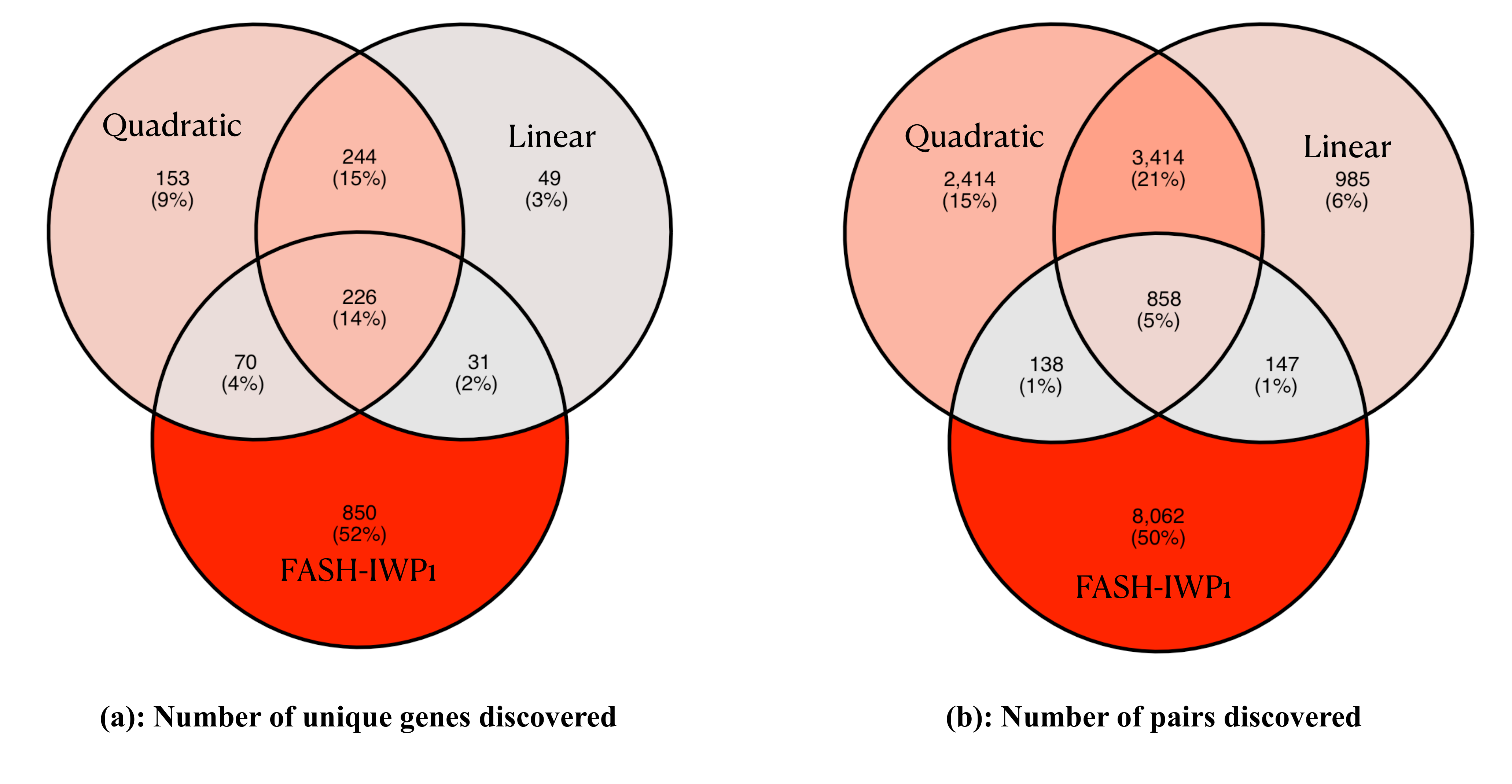}
\caption{Discovery of
dynamic eQTLs using FASH vs. linear/quadratic parametric interaction
models: (a) genes and (b) gene-variant pairs. }
\label{fig:dynamic_vienn}
\end{figure}

As a result, given FASH's ability to capture diverse temporal
patterns, FASH identified many more dynamic eQTLs than the original
analysis (\cref{fig:dynamic_vienn}): at an FDR threshold of $\alpha =
0.05$, FASH-{$\text{IWP}_1$} identified dynamic eQTLs at
%
% 43,860 gene-variant pairs in 3,258
% genes before the BF adjustment, and
%
9,205 gene-variant pairs (in 1,177 genes), or about 1\% of all
gene-variant pairs tested (19\% of all genes); at an eFDR threshold of
0.05, \cite{strober2019dynamic} identified dynamic eQTLs at 5,404
gene-variant pairs (in 550 genes) using the linear interaction model,
and 6824 gene-variant pairs (693 genes) using the quadratic
interaction model. Despite the increased discovery of dynamic eQTLs,
the FASH inferences are still ``conservative'' in that they were obtained
using the BF-based adjustment of the FASH prior. (The effect of this
adjustment on the FASH priors was shown
in \cref{fig:fitted_FASH_priors}.)

It should be noted that, since there were several other differences in
the two analyses beyond the increased flexibility of FASH, it is
likely that these differences also contributed to differences in
discovery, and may partly explain why many of the dynamic eQTLs
identified by the linear and quadratic interaction models were not
reproduced in our analysis. For example, recognizing that the small
sample sizes may lead to inflated type I errors, we adjusted the
standard errors following a simple procedure described in the
supplement, whereas the original analysis did not account for this
issue. So it is possible that not accounting for this issue in the
original analysis lead to a greater number of false positives. Another
important consideration is that the two analyses took very different
approaches to estimating the false discovery rate: FASH, by taking an
empirical Bayes approach, estimates the (prior) null proportion as
part of the model fitting, which is then used for the lfdr and FDR
estimation; the analysis of \cite{strober2019dynamic} used a
permutation-based approach to estimate the null distribution of {\em
p-}values, then applied the approach of \cite{gamazon2013integrative}
to estimate the FDR. 
See \Cref{fig:lfdr_pval,fig:FDR_eFDR} in the supplement for more detailed comparisons of the two analyses, comparing the FASH lfdr
values (and their corresponding FDR estimates) with the {\em p-}values (and the corresponding eFDR estimates) from the original analysis.

\subsection{Characterization of Dynamic eQTLs}

Now we move from discovery of dynamic eQTLs to characterizing the
dynamic changes of the dynamic eQTLs; in particular, we would like to 
characterize {\em how} the genetic effects on gene expression evolve
throughout the process of cell differentiation.
\cite{strober2019dynamic} also sought to characterize the
dynamic eQTLs, but their statistical analysis was complicated by the
limitations of the available methods. Here we show that this is much
more straightforwardly accomplished within the FASH modeling
framework, as well as being more valid because it is accompanied by
(appropriately calibrated) posterior statistics such as lfdrs and
lfsrs.

\begin{figure}[!t]
\centering
\includegraphics[width=0.925\textwidth]{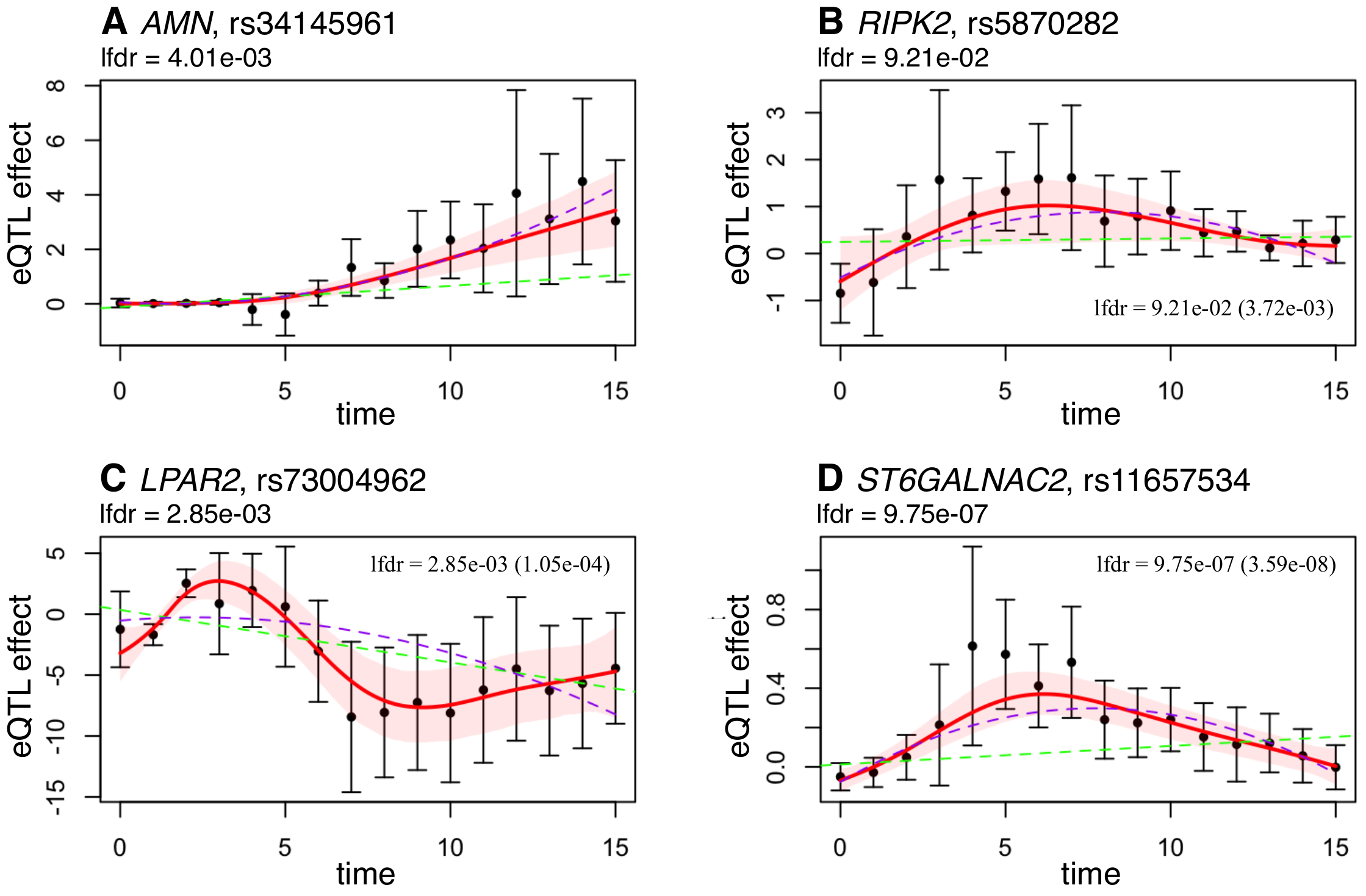}
\caption{Examples of nonlinear dynamic eQTLs identified by FASH, with $L = D^2$.
%
% at a false discovery rate threshold of $5\%$. 
%
% Plot elements are labeled as in \cref{fig:dynamic_highlight}.
% As in \cref{fig:dynamic_highlight}, the posterior mean
%
In each plot, the posterior mean of $\beta_j$ is shown as a red solid
line, and the 95\% credible interval is depicted by the shaded
region. Observed effect size estimates $\hat{\beta}_j$ are shown as
black dots, with vertical error bars representing $\pm 2$ the
(adjusted) standard errors $\tilde{s}_{jr}$. The lfdr is the lfdr from
the FASH-$\text{IWP}_2$ model after the BF adjustment. The
inverse-variance weighted least squares estimates for the linear
($G_c \times t$) and quadratic ($G_c \times t^2$) parametric
interaction models are shown as dashed lines (green = linear, purple =
quadratic).}
\label{fig:nonlinear_highlight}
\end{figure}

By reanalyzing the data with a linear baseline model instead of a
constant baseline model---that is, using a FASH prior with $L =
D^2$---FASH identifies the gene-variant pairs with {\em nonlinear}
dynamic effects on gene expression. At an FDR threshold of $\alpha =
0.05$, this ``FASH-$\text{IWP}_2$'' model found a small number
nonlinear dynamic eQTLs: 44 gene-variant pairs within 9 genes
(Table \cref{tab:eqtl_categories}). (Without the BF-based adjustment,
FASH-$\text{IWP}_2$ identified 159 gene-variant pairs within 37 genes
at $\alpha = 0.05$, still a much smaller number than the overall
number of genes with dynamic eQTLs; see \cref{fig:dynamic_vienn}.)

A few examples of these nonlinear dynamic eQTLs are shown
in \cref{fig:nonlinear_highlight}.  (See also \cref{fig:nonsig_order2}
for examples of dynamic eQTLs that were not classified as
``nonlinear''.)  The diversity of the nonlinear effects is quite
striking. Clearly, the quadratic interaction model, even though it was
intended for identifying nonlinear dynamic effects
in \cite{strober2019dynamic}, is not flexible enough to capture the
full range of nonlinear effects.

The ability of FASH to test hypotheses of the form $\mathcal{F}(\beta)=0$ or $\mathcal{F}(\beta)>0$ (\cref{sec:hypothesis_testing}) for an arbitrary functional $\mathcal{F}$ provides new ways to characterize dynamic effects in a systematic fashion.
For example, \cite{strober2019dynamic} were interested in identifying the
dynamic eQTLs with effects on expression that switch direction. 
To frame this as a hypothesis test, we define a functional $\mathcal{F}(\beta)$ that measures whether the genotype effect exhibits a sign-changing ``switch'' with magnitude exceeding a threshold $c>0$:
\begin{equation}
\mathcal{F}(\beta) =
\min \left\{ \max_t \beta^{+}(t),\,
\max_t \beta^{-}(t) \right\} - c,
\end{equation}
where $\beta^{+}$ and $\beta^{-}$ denote the positive and negative parts of $\beta$, respectively, so that $\beta = \beta^{+} - \beta^{-}$.
When $\mathcal{F}(\beta)>0$, there exist time points $t_{+}$ and $t_{-}$ such that $\beta(t_{+})>c$ and $\beta(t_{-})<-c$, implying an effect difference of at least $2c$ between these time points for a one-allele change in genotype.
For $c=0.25$, this corresponds to a minimum difference of $4c=1$ between the two-homozygote genotypes.

\begin{table}[!t]
\centering
\begin{tabular}{cccl}
\toprule 
category & gene-variants & genes & functional, $\mathcal{F}(\beta)$ \\
\midrule 
dynamic & 44 & 9 & (not applicable) \\
early  & 124 & 8   &
$\max_{t \,\leq\, 3} |\beta(t)| - \max_{t \,>\, 3} |\beta(t)|$ \\
middle & 24  & 5   &
$\max_{4 \,\leq\, t \,\leq\, 11} |\beta(t)| -
\max_{t \,<\, 4 \text{ or } t \,>\, 11} |\beta(t)|$ \\
late   & 20  & 12  &
$\max_{t \,\geq\, 12} |\beta(t)| -
\max_{t \,<\, 12} |\beta(t)|$ \\
switch & 984 & 250
& $\min\{\max_{0 \,\leq\, t \,\leq\, 15} \beta^{+}(t),
\max_{0 \,\leq\, t \,\leq\, 15} \beta^{-}(t)\} - c$, $c = 0.25$ \\
\bottomrule
\end{tabular}
\caption{
Classification of dynamic eQTLs based on temporal effect patterns:
number of gene-variant pairs and unique genes identified at an FDR
(top row) or FSR (other rows) of 0.05.
%
% Each variant-gene pair is assigned to a category by testing whether
% the corresponding functional $\mathcal{F}(\beta_j) > 0$, as defined
% for each pattern.
%
For the {\em switch} category, setting $c = 0.25$ ensures that the
largest effect size difference across the range of genotype dosages
(0--2) is at least 1, since $2 \times c \times 2 = 1$.
See \cref{fig:examples_classification} for a examples illustrating
each category.}
% 
% For each category, some highlighted enriched HALLMARK gene sets are
% shown in the last column, with the corresponding p-values from
% Fisher’s exact test given in parentheses.
%
\label{tab:eqtl_categories}
\end{table}

\begin{table}[!t]
\centering
\begin{tabular}{lccc}
\toprule 
& genes & {\em p-}value & {\em q-}value \\
Hallmark gene set & all/switch & all/switch & all/switch \\
\midrule 
genes up-regulated in response to hypoxia & 25/11 & 0.017/0.00067 & 0.436/0.025 \\
genes up-regulated by KRAS activation & 11/7 & 0.24/0.0022 & 0.840/0.035 \\
\bottomrule
\end{tabular}
\caption{Gene set enrichment analysis of genes with dynamic eQTLs vs.
genes with switch dynamic eQTLs. The ``genes'' column gives the number
of dynamic genes eQTL or switch dynamic eQTL genes beloning to the
Hallmark gene set \citep{liberzon2015molecular, msigdb, gsea}. GSEA
{\em p-}values and {\em q-}values were computed using the ``enricher''
function in the clusterProfiler R package \citep{clusterprofiler,
clusterprofilerR}.  See \cref{tab:hallmark_enrichment_all_vs_switch}
of the supplement for more detailed GSEA results.}
\label{tab:gsea_mini}
\end{table}

In addition to the {\em switch} category, we used the hypothesis
testing framework to group the dynamic eQTLs into three other
categories:
\begin{itemize}

\item {\em Early:} The strongest effect occurs sometime during the first 
  3 days of cell differentiation.

\item {\em Middle:} The strongest effect occurs sometime between days 4 and 11.

\item {\em Late:} The strongest effect occurs sometime during the final 4 days.

\end{itemize}
Each of these categories corresponds to an inequality of the form
$\mathcal{F}(\beta) > 0$ (\cref{tab:eqtl_categories}). (Note that the
switch category is not exclusive of the other categories; for example,
a dynamic eQTL should be identified as both {\em early} and {\em
switch}.) The numbers of gene-variants and genes assigned to each of
these categories at an FSR threshold of 0.05 (after applying the
BF-based adjustment to the prior) are given
in \cref{tab:eqtl_categories}. Examples of dynamic eQTLs in
each category are shown in \cref{fig:examples_classification}.
%
% we identified 124 variant-gene pairs as early dynamic eQTLs
% (spanning 8 unique genes), 24 as middle dynamic eQTLs (5 genes), 20
% as late dynamic eQTLs (12 genes), and 984 as switch dynamic eQTLs
% (250 genes).
%
Only a very small number of dynamic eQTLs were classified {\em early},
{\em middle} or {\em late}, but many dynamic eQTLs were identified as
having effects that switch direction.  To better understand the
biological significance of the switch dynamic eQTLs, we performed a
gene set enrichment analysis (GSEA) on the Hallmark gene
sets \citep{liberzon2015molecular, msigdb, gsea} using
clusterProfiler \citep{clusterprofiler, clusterprofilerR}, and
compared the GSEA results for the 250 genes with switch dynamic eQTLs
vs. the 1,177 genes with any type of dynamic eQTL. Interestingly, the
top two Hallmark gene sets for the switch dynamic eQTLs (ranked by
{\em p-}value) were genes upregulated in low oxygen levels (i.e.,
hypoxia) and genes upregulated by K-Ras, and these were also among the
top gene sets for all dynamic eQTLs, but the enrichments were much
stronger when considering the switch dynamic eQTLs only
(\cref{tab:gsea_mini}). Both hypoxia and K-Ras are well known to have
strong effects on proliferation and differentiation of stem cells, and
so it potentially significant that the switch dynamic eQTLs are more
strongly enriched for these pathways. More detailed GSEA results are
provided in
\cref{tab:hallmark_enrichment_all_vs_switch} in the supplement.

%
% where $\mathcal{F}$ denotes a task-specific functional
% (see \cref{tab:eqtl_categories}), and therefore can be expressed as
% a posterior inference in the form of \eqref{equ:lfsr_one} to compute
% a lfsr for each gene-variant pair.
% 
% Given the posterior distribution of
% $\beta_j$, we compute the lfsr for each variant-gene pair as
% \begin{equation*}
% \mathrm{lfsr}_j = P[\mathcal{F}(\beta_j) \leq 0 \mid
% \boldsymbol{\hat{\beta}}, \boldsymbol{\hat{\pi}}],
% \end{equation*}
%
% Then we use it to compute and control the cumulative FSR.

\begin{figure}[!t]
\centering
\includegraphics[width=0.925\textwidth]{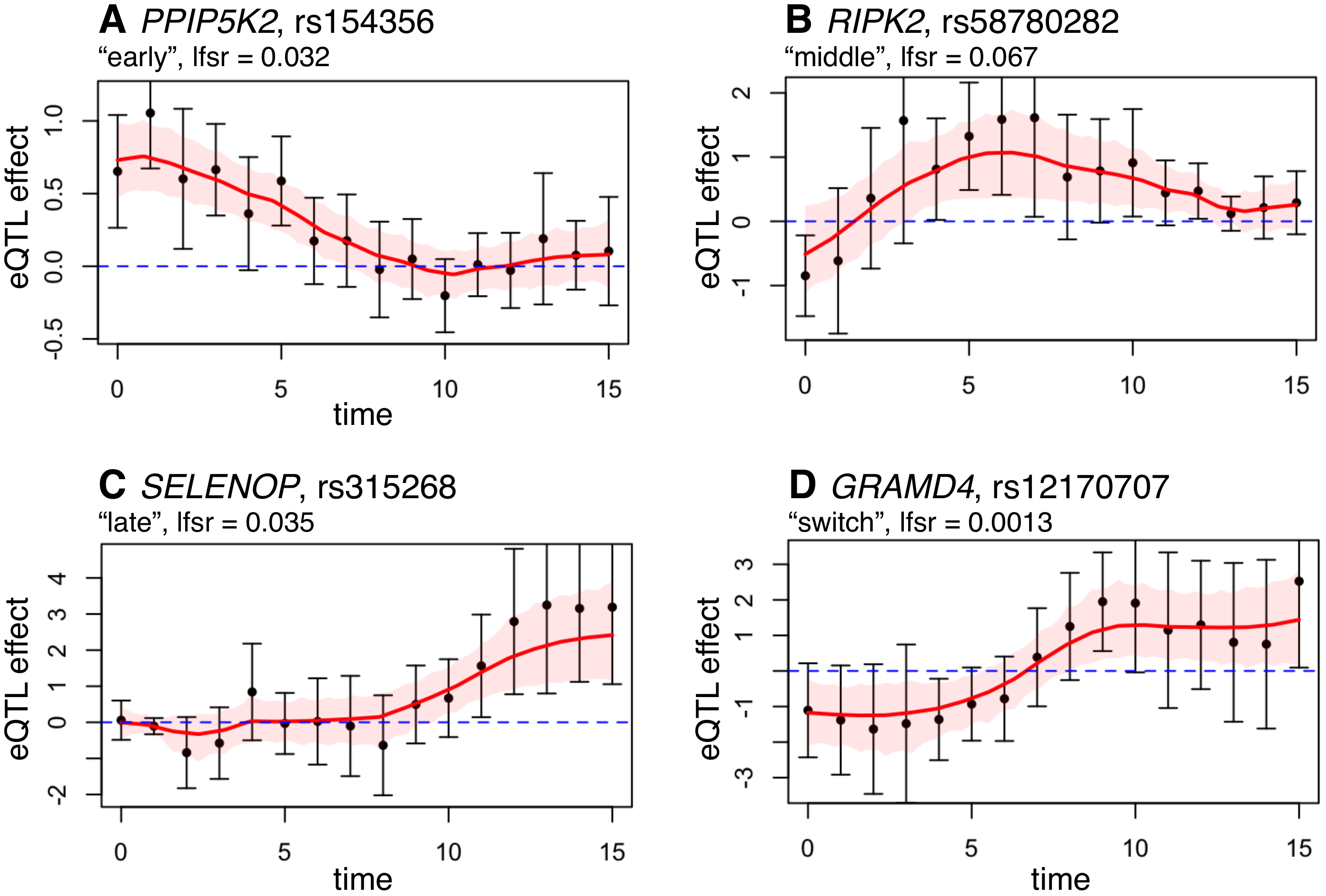}
\caption{
Examples of dynamic eQTLs categorized by FASH as being (A) {\em
early}, (B) {\em middle}, (C) {\em late} and (D) {\em
switch}. (See \cref{tab:eqtl_categories} for definitions of these
categories.)  The posterior mean of $\beta_j$ is shown as a red solid
line, and the 95\% credible interval is depicted by the shaded
region. Observed effect size estimates $\hat{\beta}_j$ are shown as
black dots, with vertical error bars representing $\pm 2$ the
(adjusted) standard errors $\tilde{s}_{jr}$. ``lfsr'' is the lfsr
obtained using the BF-adjusted prior.}
% 
% classified based on the false sign rate (FSR) of $5\%$.
% The labeling is the same as in \cref{fig:dynamic_highlight}.
%
% The horizontal blue dashed line is centered at zero.
%
\label{fig:examples_classification}
\end{figure}

\section{Discussion}
\label{discussion}

In this paper, we extended empirical Bayes (EB) ideas to reason about
posterior distributions on {\em functions.} This resulted in a
powerful and flexible modeling framework, FASH (``functional adaptive
shrinkage''), that can be used to test various hypotheses about
functions. FASH automatically adapts the priors by borrowing
information across all observation units, so the posterior inferences
should become more accurate as more data becomes available. In the
case study, where we used FASH to reanalyze dynamic eQTLs, a
particularly appealing aspect of FASH was being able categorize
the dynamic eQTLs into different types
(\cref{tab:eqtl_categories}).

% We illustrated these advantages through a re-analysis of the
% cardiomyocyte differentiation data of \citet{strober2019dynamic}.
% While motivated by dynamic eQTL detection, FASH more generally
% provides a useful tool whenever investigators wish to simultaneously
% estimate many unknown functions and test for deviations from baseline
% behavior.
%
% In the context of dynamic expression eQTL studies,
% our approach characterizes dynamic effect functions directly from
% summary statistics of each gene-variant pair, thereby improving
% portability and facilitating the integration of multiple gene-variant
% pairs.
% 

We also proposed a simple adjustment to the prior to address concerns
about miscalibration of the FDR and other posterior statistics when
the prior is misspecified. This adjustment results in more
conservative estimates of the prior---that is, greater weight on the
null proportion, $\pi_0$---and therefore it is more conservative in
its inferences. This proposal is not specific to FASH, and therefore
could potentially be used in other EB methods, particularly in
high-dimensional multivariate data settings \citep{mash, liu2024flexible}
where prior misspecification is likely to occur.

% \citep{soloff2024edge} As an EB-based approach, FASH estimates the
% prior distribution of effect functions semi-parametrically through
% mixtures of {GPs}.  As in related EB methods , the discrepancy
% between the true prior and its estimate can lead to calibration
% issues in {FDR} estimation \citep{soloff2024edge}.  To mitigate
% this, we outlined a simple yet effective {BF}-based adjustment that
% enforces conservative estimation of $\pi_0$ in the EB prior.  This
% adjustment is not specific to FASH, and could also be helpful in
% other EB frameworks, especially multivariate methods such as
% and \citet{liu2024flexible}, where prior misspecification may be of
% concern.

Although FASH defines posterior distributions on {\em functions}, the
actual computations involve probability distributions on
finite-dimensional spaces, making the computations tractable. To make
these tractable posterior computations scalable to large data sets, we
chose the ``$L$-GP'' family of priors because they produce posterior
computations with complexity that scales linearly in $J$, the number
of observation units, and $R_j$, the number of observations per unit;
more details on the complexity of the posterior computations are given
in Appendix~\ref{subsec:computation} of the supplement 
(The posterior computations with standard GPs scale
cubicly in $R_j$). Beyond computationally efficiency, the ``$L$-GP''
family also provides flexibility in hypothesis testing.
%
% In the current implementation, we restrict mixture components to the
% $L$-GP family, chosen for interpretability, suitability for
% hypothesis testing, and computational efficiency. In FASH, the
% computational complexity and memory cost scale linearly with the
% sample size $J$.  For the number of unique condition values $R_j$,
% standard GP computation typically incurs quadratic memory cost and
% cubic computational cost due to dense matrix operations.  Leveraging
% the Markovian structure of the $L$-GP, these costs are reduced to
% linear in $R_j$, which is particularly beneficial in studies where
% each variant is measured under many condition values .  }
%
In principle, FASH could be extended to other GP prior families,
motivated by other applications. For example, prior families based on
spatial Gaussian random fields \citep{lindgren2011explicit} could be of interest
applications involving spatial data.

% Another extension is to consider non-Gaussian likelihoods, such as
% Poisson models for correlated count data \citep{liu2024flexible},
% perhaps using Laplace approximation for tractable likelihood
% evaluation.  Accounting for approximation error when computing
% quantities such as lfdr may present some important challenges.  {In
% the dynamic eQTL studies presented in \cref{sec-application}, a
% potential concern is the presence of cell-line confounder effects,
% where subsets of cell lines exhibit distinct expression dynamics
% across many genes, potentially creating false positives in dynamic
% eQTL discoveries.  \cite{strober2019dynamic} empirically showed that
% controlling for cell line-collapsed principal components (PCs)
% effectively mitigates this issue.  Following the same principle, we
% also accounted for such confounder effects using the cell
% line-collapsed PCs in the analysis described
% in \cref{sec-application}.  Nevertheless, extending the FASH
% framework to explicitly model cell line-specific random effect
% functions over time would provide a more principled solution, albeit
% at the cost of increased model complexity and computational burden.

Similar to studying the genetic effects on gene expression over time
(``dynamic eQTLs''), there is also considerable interest in using
single-cell RNA-sequencing \citep{stegle-2015} to study how eQTL
effects vary along continuous cellular contexts such as cell
differentiation trajectories \citep{van2020single}. A FASH analysis of
such data could involve mapping the cells onto a 1-d trajectory, then
applying FASH to the effects estimated at the different points along
the trajectory.
%
% We view this as a promising extension and leave it for future work.
%

Finally, a practical point regarding FASH is that local significance
measures, such as lfsr and lfsr, are usually preferred over cumulative
significance measures such as FDR and FSR.
%
% we note that the default significance measures used in our examples
% are the cumulative FDR and FSR, to remain comparable to the original
% analysis \citep{strober2019dynamic}.  In general, we recommend that
% users threshold local significance measures such as the lfdr and
% lfsr.
%
We have found that when many of the observations have local measures
close to zero, this can have the effect of pulling down the cumulative
average over all the observation units, resulting in small FDR (or
FSR), even when the lfdr (or lfsr) is large.  In situations such as
this, using the local significance measures will reduce false
discoveries compared to the cumulative significance measures.

% To support its practical use, we have developed the
% open-source \texttt{fashr} software for the proposed method, available
% at \href{https://stephenslab.github.io/fashr/}{GitHub}.  We believe
% FASH provides a general and flexible EB framework for functional
% inference, broadly applicable to large-scale studies that aim to
% estimate many functional effects.

\section{Disclosure Statement}
\label{disclosure-statement}

The authors have no conflicts of interest to report.

\section{Data Availability Statement}
\label{data-availability-statement}

The summary statistics used in the dynamic eQTL analysis are provided in the online supplementary material, and the code to replicate the result is
available
on \href{https://stephenslab.github.io/fashr-paper/}{Github}.

\section{Acknowledgments}

The authors thank Kenneth Barr for his support with the cardiomyocyte data analyzed in \cref{sec-application}.

\section{Funding}

This research was supported in part by grants from the NSF
(DMS-2235451) and Simons Foundation (MPS-NITMB-00005320) to the
NSF-Simons National Institute for Theory and Mathematics in Biology
(NITMB).

\phantomsection\label{supplementary-material}
\bigskip

\begin{center}

% If you have supplementary material (e.g., software, data, technical
% proofs), identify them in the section below. In early stages of
% the submission process, you may be unsure what to include as
% supplementary material. Don't worry---this is something that can
% be worked out at later stages.
{\large\bf SUPPLEMENTARY MATERIAL}

\end{center}

\begin{description}
\item[Supplementary Text and Figures:]
Additional proofs and derivations, evaluation of FASH on simulated
data sets, FASH implementation details, and supplementary
figures. (PDF file)
\item[Supplementary Data:]
Summary statistics of eQTL effect estimates for all gene--variant pairs across all time points, as analyzed in \cref{sec-application}. (ZIP file)
\end{description}

% References with more than six authors may be abbreviated by listing
% the first six authors followed by "et al."
\setlength{\bibsep}{0.3ex plus 0.5ex}
\bibliography{fash}

\clearpage

\appendix

{\Large\bf Empirical Bayes Shrinkage of Functional Effects, \\[0.35ex]
with Application to Analysis of Dynamic eQTLs: \\[0.35ex]
Supplementary Text and Figures} \\[0.75ex]
{\large Ziang Zhang, Peter Carbonetto and Matthew Stephens}

\titlelabel{Appendix \thetitle: }

\section{Implementation Details}
\label{subsec:computation}

This section provides additional details on the implementation of FASH across its main computational steps.

\subsection*{Choice of the Grid for $\sigma_k$}

We fix $\sigma_0=0$ to represent the null component, and construct the remaining $K$ grid values $\{\sigma_k\}_{k=1}^K$ following the strategy in \citet{ash}. The goal is to create a sufficiently dense grid covering the interval $[\sigma_{\min}, \sigma_{\max}]$, such that the inferential results remain stable and do not change noticeably when using a larger or denser grid.

Conceptually, $\sigma_k$ controls the deviation from the baseline model $S_0$. Its interpretation in practice depends on the choice of operator $L$ defining the $L$-GP, as well as the measurement scale. To provide a more interpretable and consistent scale across applications, we consider an equivalent reparameterization in terms of the $h$-unit predictive standard deviation (PSD) \citep{zhang2024model, zhang2025efficient}, defined as
\begin{equation}\label{equ:psd}
    \sigma(h) = \text{SD}\!\left[\beta(t+h) \mid \beta(s): s \leq t \right], 
    \quad \beta \sim L\text{-GP}(\beta;\sigma).
\end{equation}
The PSD $\sigma(h)$ is a positive scaling of the original $\sigma$, but has a direct interpretation in terms of function variability that is comparable across different choices of $L$.

By default, we choose $[\sigma_{\min}, \sigma_{\max}]$ based on the one-unit PSD $\sigma(1)$, with grid values $\{\sigma_k\}_{k=1}^K$ equally spaced in $\text{log-precision}$ scale, i.e.\ $-2\log\sigma_k(1)$ between $0$ and $10$. While other settings can be explored, this default has been found to perform well in our experience.

\subsection*{Computing the Likelihood Matrix $\mathbf{L}$}

The main computational step of FASH is the evaluation of the likelihood matrix $\mathbf{L} \in \mathbb{R}^{J \times (K+1)}$. For each entry,
\begin{equation}\label{equ:marginal_lik}
\begin{aligned}
    \mathbf{L}_{jk} &= p_k(\boldsymbol{\hat{\beta}}_j \mid \boldsymbol{s}_j) \\
    &= \int p(\boldsymbol{\hat{\beta}}_j \mid \boldsymbol{\beta}_j, \boldsymbol{s}_j)\,
       p(\boldsymbol{\beta}_j \mid \sigma_k)\, d\boldsymbol{\beta}_j \\
    &= d\mathcal{N}(\boldsymbol{\hat{\beta}}_j; \mathbf{0}, \mathbf{C}_k + \mathbf{S}_j),
\end{aligned}
\end{equation}
where $\mathbf{C}_k$ is the covariance of $\boldsymbol{\beta}_j$ under $\beta_j \sim L\text{-GP}(\beta_j;\sigma_k)$ and $\mathbf{S}_j$ is a diagonal matrix of squared standard errors.

Evaluating \cref{equ:marginal_lik} requires the precision matrix $(\mathbf{C}_k+\mathbf{S}_j)^{-1}$. When $R_j$ is large, direct inversion becomes computationally expensive, with $O(R_j^3)$ time and $O(R_j^2)$ memory. 
However, unlike standard GPs, the $L$-GP has a Markovian structure due to its construction from differential operators \citep{sarkka2019applied}, which can be exploited to reduce computational complexity to $O(R_j)$.
This can be achieved either by augmenting $\boldsymbol{\beta}_j$ with its derivatives, or by finite element approximation \citep{lindgren2008second, zhang2024model, zhang2025efficient}. In this work, we adopt the latter approach using 20 equally spaced O-splines as the default.

\subsection*{Optimizing the Empirical Bayes Prior $\hat{g}_\beta$}

The empirical Bayes (EB) estimate $\hat{g}_\beta$ is obtained by maximizing the log-likelihood in \cref{eb_pi}. To encourage conservative estimation of $\pi_0$, a Dirichlet-based penalty can be added \citep{ash}:
\begin{equation}
    h(\boldsymbol{\pi};\lambda) = \prod_{k=0}^K \pi_k^{\lambda_k - 1}, \quad \lambda_k \geq 1,
\end{equation}
yielding the penalized log-likelihood
\begin{equation}\label{eb_pi2}
  l(\boldsymbol{\pi}) + \log h(\boldsymbol{\pi}) 
  = \sum_{j=1}^J \log \!\Big(\sum_{k=0}^K \pi_k \mathbf{L}_{jk}\Big) 
  + \sum_{k=0}^K (\lambda_k-1)\log \pi_k.
\end{equation}
This convex optimization problem can be solved efficiently with constrained algorithms; we use the sequential quadratic programming method of \citet{kim2020fast} with $\lambda_0=10$ and $\lambda_k=1$ for $k \neq 0$.

Although \cref{eb_pi2} assumes independence across {units}, in the presence of dependence it can be interpreted as a composite likelihood, and the EB estimate $\hat{\boldsymbol{\pi}}$ typically remains consistent \citep{mash}.

\subsection*{Posterior Computation}

Given $\mathbf{L}$ and $\hat{g}_\beta$, the posterior for each $\beta_j$ is obtained as the mixture in \cref{posterior_betaj}. Each component posterior $p_k(\beta_j \mid \boldsymbol{\hat{\beta}}, \boldsymbol{s})$ is a GP, so the overall posterior is a mixture of $K+1$ GPs. In practice, $\hat{\boldsymbol{\pi}}$ is typically sparse, so that only a few non-trivial mixture components remain, reducing the memory required to store posterior processes.

To compute lfsr in \cref{equ:lfsr} for a functional $\mathcal{F}$, we draw $M$ independent sample paths from the non-trivial components of the posterior mixture and approximate the relevant probabilities via Monte Carlo. The finite element representation of the $L$GP makes path simulation computationally efficient even for large $M$; we use $M=3000$ by default in our analysis.

\section{Additional proofs and derivations}
\label{sec:proofs}
\renewcommand{\thelemma}{S\arabic{lemma}}
\subsection*{Proof of Theorem 1}

\setcounter{theorem}{0}
\setcounter{lemma}{0}

\begin{theorem}[BF-adjustment gives conservative estimate]
Assume $\hat{\pi}_0$ is the adjusted estimate of $\pi_0=J_0/J$ obtained from Algorithm~\ref{algo:BF}, and that the $J_0$ null effects are i.i.d.\ from the null distribution specified in the adaptive prior. Then for any specification of the alternative distributions (components $1$ to $K$) in the prior and any buffer $\epsilon>0$, 
\[
\hat{\pi}_0 \;\ge\; \pi_0 \quad \text{almost surely as } J_0\to\infty.
\]
\end{theorem}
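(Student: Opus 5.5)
The plan is to compare the truncated BF averages $\mu(c)$ with the null-only averages and show that, asymptotically, no cutoff $c$ with $\hat\pi_0(c)<\pi_0$ can reach $\mu(c)\ge 1+\epsilon$. Since $c^*$ is the infimum of cutoffs that do reach it, and $\hat\pi_0(c)$ is nondecreasing in $c$, this forces $\hat\pi_0(c^*)\ge\pi_0$.

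\emph{Step 1: null BFs are e-variables.} Fix the normalized alternative weights $\hat\pi^*_k$. They depend on the data, but I would treat them as fixed (conditioning on them), or argue that the conclusion holds for every fixed alternative, which is what the statement asserts. The collapsed $\text{BF}_j$ is then a Bayes factor for the mixture alternative $\sum_k\hat\pi_k^* L\text{-GP}(\sigma_k)$ against the correctly specified null component. By \cref{lemma:BF_moment}, the null BFs $\{\text{BF}_j\}_{j\in\mathcal H_0}$ are iid, nonnegative, and have mean exactly $1$. By the strong law of large numbers,
\[
\frac{1}{J_0}\sum_{j\in\mathcal H_0}\text{BF}_j\to 1 \quad\text{a.s.},
\]
and the empirical distribution of the null BFs converges to their common law $F_0$ almost surely (Glivenko--Cantelli).

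\emph{Step 2: key inequality.} Fix a cutoff $c$ with $J_0(c)\colonequals\#\{j:\text{BF}_j<c\}<J_0$. The retained set $A_c=\{j:\text{BF}_j<c\}$ consists of the $J_0(c)$ smallest BFs overall. Order statistics therefore give
\[
\mu(c)=\frac1{J_0(c)}\sum_{j\in A_c}\text{BF}_j\le \frac1{J_0(c)}\sum_{j\in B}\text{BF}_j
\]
for any set $B$ of size $J_0(c)$. The average of the $m$ smallest values is at most the average of any $m$ values, and in particular at most the average of the $m$ smallest null BFs. Taking $B$ to be the $J_0(c)$ smallest null BFs, the average of the smallest $m\le J_0$ null values is at most the average of all $J_0$ null values. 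Hence
\[
\mu(c)\le \frac1{J_0}\sum_{j\in\mathcal H_0}\text{BF}_j \to 1 \quad\text{a.s.}
\]
This bound is uniform in $c$.

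\emph{Step 3: conclusion.} Almost surely, for all large $J_0$ the null average is below $1+\epsilon$. Then every $c$ with $\hat\pi_0(c)<\pi_0$ has $\mu(c)<1+\epsilon$ and is excluded from the set defining $c^*$. It follows that $\hat\pi_0=\hat\pi_0(c^*)\ge\pi_0$. This argument uses only the correct null specification, regardless of the alternative distribution.

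\emph{Main obstacle.} The main obstacle is the data-dependence of $\hat\pi^*_k$, which makes the $\text{BF}_j$ neither exactly independent nor exact e-variables. I would first handle this by conditioning, or by a uniform law of large numbers over the simplex of alternative weights. The relevant map $w\mapsto J_0^{-1}\sum_{\mathcal H_0}\sum_k w_k\ell_{jk}/\ell_{j0}$ is linear in $w$, so a strong law for each of the $K$ component BFs gives uniform convergence over the simplex. This resolves the issue because $K$ is fixed.

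A secondary point is that the infimum must be attained, or $\mathcal C$ must be nonempty in the relevant range. I would assume $\mathcal C$ is finite or closed and includes a cutoff exceeding all BFs. If no $c$ attains $1+\epsilon$, I would interpret $\hat\pi_0=1$ by convention.
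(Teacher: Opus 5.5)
Your argument is correct. Its high-level structure matches the paper's (a strong law for the null BF average, then a deterministic comparison), but you prove the deterministic step differently.

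\textbf{The deterministic step.} The paper works only at the selected cutoff $c^*$. It splits the units into the $2\times 2$ table $J_{a,b}=|\mathcal H_a\cap\mathcal T_b|$ and uses $\mathrm{BF}_j\ge c^*$ on $\mathcal T_1$ and $\mathrm{BF}_j<c^*$ on $\mathcal T_0$. This sandwiches $1+\epsilon/2$ between $f(J_{0,1})$ and $f(J_{1,0})$, with $f(x)=(c^*x+S)/(x+J_{0,0})$, and monotonicity of $f$ gives $J_{1,0}\ge J_{0,1}$.

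You argue by contraposition with an order-statistics (exchange) bound. A cutoff retaining fewer than $J_0$ units retains the smallest BFs overall. So its average is at most that of equally many smallest null BFs, and hence at most the full null average. This has three advantages:
\begin{itemize}
\item it is uniform in $c$;
\item it needs no algebra;
\item it avoids the degenerate case $J_{0,0}=0$. There the paper's $f$ is constant, so its ``strictly increasing'' step does not literally apply. The case is vacuous, but it has to be ruled out separately.
\end{itemize}

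\textbf{Points you handle that the paper passes over.}
\begin{itemize}
\item The paper applies the strong law to the collapsed BFs as if $\hat\pi^*$ were fixed. You observe that the null average is linear in the weights, so it is at most $\max_k J_0^{-1}\sum_{j\in\mathcal H_0}\ell_{jk}/\ell_{j0}\to 1$. Because your Step 2 is a deterministic inequality, this makes the estimated weights harmless.
\item The paper tacitly uses $\mu(c^*)\ge 1+\epsilon$, i.e.\ that the infimum defining $c^*$ is attained. You flag this explicitly.
\end{itemize}

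\textbf{Small corrections.}
\begin{itemize}
\item Drop the ``conditioning on $\hat\pi^*$'' option. $\hat\pi^*$ is computed from the null units' own data, so conditioning on it destroys the mean-one property. The linearity route is the one that works.
\item Finiteness of $\mathcal C$ guarantees the infimum is attained, but closedness alone does not. The map $c\mapsto\mu(c)$ is only left-continuous, so the set $\{c:\mu(c)\ge 1+\epsilon\}$ can have an infimum at which $\mu$ drops below $1+\epsilon$.
\item The Glivenko--Cantelli step is not needed.
\end{itemize}
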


\begin{proof}
By Lemma~\ref{lemma:BF_moment}, under the null $\mathbb{E}_0(\mathrm{BF}_j)=1$ and $\mathrm{BF}_j>0$. 
Hence, by the strong law of large numbers,
\begin{equation}\label{eq:lln}
\lim_{J_0\to\infty}\frac{1}{J_0}\sum_{j\in\mathcal H_0}\mathrm{BF}_j \;=\; 1
\quad\text{almost surely}.
\end{equation}
From now on, all asymptotic arguments are understood almost surely, so we omit the notation.

Define the thresholded sets induced by $c^*$:
\[
\mathcal T_0=\{j:\mathrm{BF}_j<c^*\},\qquad 
\mathcal T_1=\{j:\mathrm{BF}_j\ge c^*\}.
\]
Let $J_{a,b}=|\mathcal H_a\cap \mathcal T_b|$ for $a,b\in\{0,1\}$, and set
\[
S=\sum_{j\in\mathcal H_0\cap\mathcal T_0}\mathrm{BF}_j.
\]
Then $J_{0,0}+J_{0,1}=J_0$, and
\begin{equation}\label{eq:pi_defs}
\hat{\pi}_0=\frac{J_{0,0}+J_{1,0}}{J}, 
\qquad 
\pi_0=\frac{J_{0,0}+J_{0,1}}{J}.
\end{equation}
Thus proving $\hat{\pi}_0\ge\pi_0$ is equivalent to showing $J_{1,0}\ge J_{0,1}$.

By \eqref{eq:lln}, there exists a positive integer $J_0'$ such that for all $J_0>J_0'$,
\begin{equation}\label{eq:null_constraint}
1+\frac{\epsilon}{2}\;\ge\;
\frac{\sum_{j\in\mathcal H_0}\mathrm{BF}_j}{J_0}
=\frac{\sum_{j\in\mathcal H_0\cap\mathcal T_1}\mathrm{BF}_j+S}{J_{0,1}+J_{0,0}}
\;\ge\;\frac{c^* J_{0,1}+S}{J_{0,1}+J_{0,0}},
\end{equation}
where the last inequality holds since $\mathrm{BF}_j\ge c^*$ on $\mathcal T_1$.

By the definition of $c^*$ in Algorithm~\ref{algo:BF}, we have $\mu(c^*)\ge 1+\epsilon$, hence
\begin{equation}\label{eq:mu_constraint}
1+\frac{\epsilon}{2}\;\le\;\mu(c^*) 
= \frac{\sum_{j\in\mathcal T_0}\mathrm{BF}_j}{J_{1,0}+J_{0,0}}
=\frac{\sum_{j\in\mathcal H_1\cap\mathcal T_0}\mathrm{BF}_j+S}{J_{1,0}+J_{0,0}}
\;\le\;\frac{c^* J_{1,0}+S}{J_{1,0}+J_{0,0}},
\end{equation}
where the last inequality holds since $\mathrm{BF}_j<c^*$ on $\mathcal T_0$.

Combining \eqref{eq:null_constraint} and \eqref{eq:mu_constraint}, for all large enough $J_0$,
\begin{equation}\label{eq:chain}
\frac{c^* J_{0,1}+S}{J_{0,1}+J_{0,0}}
\;\le\;1+\frac{\epsilon}{2}\;\le\;
\frac{c^* J_{1,0}+S}{J_{1,0}+J_{0,0}}.
\end{equation}

Define
\[
f(x)=\frac{c^*x+S}{x+J_{0,0}},\qquad 
f'(x)=\frac{c^*J_{0,0}-S}{(x+J_{0,0})^2}.
\]
Since $\mathrm{BF}_j<c^*$ on $\mathcal T_0$, we have $S<c^*J_{0,0}$. 
Therefore, $f'(x)>0$ for $x>0$, so $f$ is strictly increasing.  
From \eqref{eq:chain}, $f(J_{0,1})\le f(J_{1,0})$, which implies $J_{1,0}\ge J_{0,1}$.

Finally, by \eqref{eq:pi_defs}, this yields $\hat{\pi}_0\ge\pi_0$, completing the proof. \qedhere
\end{proof}

{\color{changed_color}
\subsection*{Conservativeness of posterior odds}
In this subsection, we give additional details showing that overestimating $\pi_0$ makes inference more conservative in favor of $H_0$, even when the alternative is misspecified.

For simplicity, assume $\tilde{\pi}_0$ is fixed with $0<\pi_0\le \tilde{\pi}_0<1$, and denote
\[
p_0 = p_0(\hat{\boldsymbol{\beta}}_j\mid \boldsymbol{s}_j),\quad
p_1 = p_1(\hat{\boldsymbol{\beta}}_j\mid \boldsymbol{s}_j),\quad
\tilde p_1 = \tilde p_1(\hat{\boldsymbol{\beta}}_j\mid \boldsymbol{s}_j),
\]
as the null, true alternative, and misspecified alternative marginal densities, respectively.

A useful measure of the evidence against $H_0$ is the \emph{posterior odds} (PO), defined by
\begin{equation}\label{eq:PO_def}
  \mathrm{PO}
  = \frac{p(H_1\mid \hat{\boldsymbol{\beta}}_j,\boldsymbol{s}_j)}
         {p(H_0\mid \hat{\boldsymbol{\beta}}_j,\boldsymbol{s}_j)}
  = \frac{1-\pi_0}{\pi_0}\,\frac{p_1}{p_0}.
\end{equation}
Replacing $\pi_0$ and $p_1$ by their adjusted/misspecified counterparts yields the fitted $\widetilde{\mathrm{PO}}$:
\begin{equation}\label{eq:PO_tilde_def}
  \widetilde{\mathrm{PO}}
  = \frac{1-\tilde\pi_0}{\tilde\pi_0}\,\frac{\tilde p_1}{p_0}.
\end{equation}

First, it is straightforward to show, under the null, the fitted $\widetilde{\mathrm{PO}}$ is more conservative than the true $\mathrm{PO}$ in expectation:

\begin{lemma}[Conservative under $H_0$]\label{lemmaS1}
If $\tilde\pi_0 \ge \pi_0$, then
\[
\mathbb{E}_{H_0}[\widetilde{\mathrm{PO}}] \;\le\; \mathbb{E}_{H_0}[\mathrm{PO}].
\]
\end{lemma}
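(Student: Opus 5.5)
The plan is to factor each posterior odds into a deterministic prior-odds term and a Bayes factor, and then handle the two factors separately. Since $\tilde\pi_0$ and $\pi_0$ are treated as fixed constants, we have
\[
\widetilde{\mathrm{PO}} = \frac{1-\tilde\pi_0}{\tilde\pi_0}\,\widetilde{\mathrm{BF}}_j,
\qquad
\mathrm{PO} = \frac{1-\pi_0}{\pi_0}\,\mathrm{BF}_j,
\]
where $\widetilde{\mathrm{BF}}_j = \tilde p_1/p_0$ uses the misspecified alternative and $\mathrm{BF}_j = p_1/p_0$ uses the true one. By linearity, $\mathbb{E}_{H_0}[\widetilde{\mathrm{PO}}]$ is the prior-odds constant times $\mathbb{E}_{H_0}[\widetilde{\mathrm{BF}}_j]$, and likewise for the true PO.

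The key step is to apply Lemma~\ref{lemma:BF_moment} to both Bayes factors. That lemma holds regardless of how the alternative is specified, provided the alternative marginal $\tilde p_1$ (or $p_1$) is a proper density integrating to one. Under $H_0$ the data have density $p_0$, so
\[
\mathbb{E}_{H_0}[\widetilde{\mathrm{BF}}_j]=\int \tilde p_1\,d\hat{\boldsymbol\beta}_j=1 = \mathbb{E}_{H_0}[\mathrm{BF}_j].
\]
This gives $\mathbb{E}_{H_0}[\widetilde{\mathrm{PO}}]=(1-\tilde\pi_0)/\tilde\pi_0$ and $\mathbb{E}_{H_0}[\mathrm{PO}]=(1-\pi_0)/\pi_0$.

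It then remains to compare prior odds. The map $x\mapsto(1-x)/x = 1/x - 1$ is strictly decreasing on $(0,1)$, so $\tilde\pi_0\ge\pi_0$ gives $(1-\tilde\pi_0)/\tilde\pi_0 \le (1-\pi_0)/\pi_0$, which is the claim.

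The only delicate point is making sure Lemma~\ref{lemma:BF_moment} legitimately applies to the misspecified $\tilde p_1$. This requires $\tilde p_1$ to be a genuine probability density in $\hat{\boldsymbol\beta}_j$, and requires $p_0>0$ wherever $\tilde p_1>0$ so the ratio is well defined. In this paper the diffuse null component is shared across all mixture components, as discussed after Theorem~\ref{thrm:BF_control}. The cleanest way to handle this is to work with the Bayes factor as the limit of proper priors, or conditional on the shared $S_0$ component, so that $\tilde p_1/p_0$ has null expectation one. If absolute continuity fails on some set, the expectation of the ratio equals $\int_{\{p_0>0\}}\tilde p_1\le 1$. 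The inequality then only gets stronger, giving $\mathbb{E}_{H_0}[\widetilde{\mathrm{PO}}]\le(1-\tilde\pi_0)/\tilde\pi_0$, so the result holds in either case.
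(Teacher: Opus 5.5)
Your proposal is correct and matches the paper's proof: like the paper, you factor each posterior odds into prior odds times a Bayes factor, use the BF-moment identity to get null expectation one for both $\tilde p_1/p_0$ and $p_1/p_0$ (so the two expectations reduce to $(1-\tilde\pi_0)/\tilde\pi_0$ and $(1-\pi_0)/\pi_0$), and conclude from the monotonicity of $x\mapsto(1-x)/x$. Your closing remarks on absolute continuity and the diffuse null go beyond the paper, which simply uses $\int r\,d\hat{\boldsymbol\beta}=1$ for any density $r$; note that your ``either case'' fallback still requires the true alternative to satisfy $\mathbb{E}_{H_0}[p_1/p_0]=1$, which holds automatically in this Gaussian setting.
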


\begin{proof}
From \eqref{eq:PO_def}-\eqref{eq:PO_tilde_def} and taking expectation with respect to $p_0$,
\[
\mathbb{E}_{H_0}[\widetilde{\mathrm{PO}}]
= \frac{1-\tilde\pi_0}{\tilde\pi_0}\,
  \mathbb{E}_{p_0}\!\left[\frac{\tilde p_1}{p_0}\right]
= \frac{1-\tilde\pi_0}{\tilde\pi_0},
\quad
\mathbb{E}_{H_0}[\mathrm{PO}]
= \frac{1-\pi_0}{\pi_0}\,
  \mathbb{E}_{p_0}\!\left[\frac{p_1}{p_0}\right]
= \frac{1-\pi_0}{\pi_0},
\]
since $\mathbb{E}_{H_0}\!\left[\frac{r(\hat{\boldsymbol{\beta}})}{p_0(\hat{\boldsymbol{\beta}})}\right] = \int \frac{r(\hat{\boldsymbol{\beta}})}{p_0(\hat{\boldsymbol{\beta}})}p_0(\hat{\boldsymbol{\beta}})\,d\hat{\boldsymbol{\beta}} =\int r(\hat{\boldsymbol{\beta}})\,d\hat{\boldsymbol{\beta}}=1$ for any density $r$.
Because $x\mapsto \frac{1-x}{x}$ is strictly decreasing on $(0,1)$ and $\tilde\pi_0\ge \pi_0$, the claim follows.
\end{proof}

Under the alternative, conservativeness is seen via an analogous argument applied to log-PO.

\begin{lemma}[Conservative under $H_1$]\label{lemmaS2}
If $\tilde\pi_0 \ge \pi_0$, then
\[
\mathbb{E}_{H_1}\!\big[\log \widetilde{\mathrm{PO}}\big]
\;\le\;
\mathbb{E}_{H_1}\!\big[\log \mathrm{PO}\big].
\]
\end{lemma}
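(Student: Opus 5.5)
Write $\log\mathrm{PO}$ and $\log\widetilde{\mathrm{PO}}$ as a prior log-odds term plus a log Bayes factor, and compare the two pieces separately. From \eqref{eq:PO_def}--\eqref{eq:PO_tilde_def},
\[
\log \mathrm{PO} - \log \widetilde{\mathrm{PO}}
= \Big[\log\tfrac{1-\pi_0}{\pi_0} - \log\tfrac{1-\tilde\pi_0}{\tilde\pi_0}\Big]
+ \Big[\log p_1 - \log \tilde p_1\Big],
\]
since the common $\log p_0$ cancels. It then suffices to show that each bracket has nonnegative expectation under $H_1$, i.e.\ when $\hat{\boldsymbol\beta}_j$ is drawn from the true alternative marginal $p_1$.

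The first bracket is deterministic. Because $x\mapsto \log\frac{1-x}{x}$ is strictly decreasing on $(0,1)$ and $\pi_0\le\tilde\pi_0$, it is nonnegative. This is the same monotonicity used in Lemma~\ref{lemmaS1}.

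For the second bracket,
\[
\mathbb{E}_{H_1}\big[\log p_1 - \log\tilde p_1\big]
= \int p_1 \log\frac{p_1}{\tilde p_1}\,d\hat{\boldsymbol\beta}_j
= \mathrm{KL}(p_1\,\|\,\tilde p_1)\ \ge 0,
\]
by Gibbs' inequality, equivalently Jensen applied to the concave $\log$: $\mathbb{E}_{p_1}[\log(\tilde p_1/p_1)]\le\log\int\tilde p_1=0$. This plays the role the identity $\mathbb{E}_0[\mathrm{BF}]=1$ (Lemma~\ref{lemma:BF_moment}) played under $H_0$, and it holds for any misspecified alternative density $\tilde p_1$. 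Adding the two brackets and taking expectations gives $\mathbb{E}_{H_1}[\log\mathrm{PO}]\ge\mathbb{E}_{H_1}[\log\widetilde{\mathrm{PO}}]$.

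The main obstacle is technical: the expectations must be well defined. I would need $\mathbb{E}_{p_1}|\log p_1|$ and $\mathbb{E}_{p_1}|\log\tilde p_1|$ to be finite, or else interpret the KL divergence as possibly $+\infty$, in which case the inequality holds trivially. In the FASH setting all marginals are Gaussian mixtures $d\mathcal{N}(\cdot;\mathbf 0,\mathbf C_k+\mathbf S_j)$ with positive definite covariance, so their logs are quadratic and integrable. I would state this briefly. If $p_0$ is a diffuse (improper) null, I would note that it cancels exactly in the difference, so it does not affect the argument.
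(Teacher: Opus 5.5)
Your argument is correct and is essentially the paper's proof. Both rest on the monotonicity of $x\mapsto\log\frac{1-x}{x}$ and on $D_{\mathrm{KL}}(p_1\|\tilde p_1)\ge 0$; the paper expands $\mathbb{E}_{p_1}[\log(\tilde p_1/p_0)]=D_{\mathrm{KL}}(p_1\|p_0)-D_{\mathrm{KL}}(p_1\|\tilde p_1)$ and drops the last term, whereas you cancel $p_0$ before taking expectations, which is marginally cleaner since you never need $D_{\mathrm{KL}}(p_1\|p_0)$ to be finite. One small correction to your integrability remark: only the fitted $\tilde p_1$ is a finite mixture of $d\mathcal{N}(\cdot;\mathbf 0,\mathbf C_k+\mathbf S_j)$ densities, while the true alternative marginal $p_1$ is arbitrary (that is the point of the lemma), so finiteness of $\mathbb{E}_{p_1}|\log p_1|$ and $\mathbb{E}_{p_1}|\log\tilde p_1|$ has to be stated as an assumption, e.g.\ a finite second moment for the true alternative.
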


\begin{proof}
Under $H_1$,
\[
\begin{aligned}
\mathbb{E}_{H_1}[\log \widetilde{\mathrm{PO}}]
&= \log\frac{1-\tilde{\pi}_0}{\tilde{\pi}_0}
 + \mathbb{E}_{p_1}\!\left[\log \frac{\tilde p_1}{p_0}\right] \\
&= \log\frac{1-\tilde{\pi}_0}{\tilde{\pi}_0}
 + \mathbb{E}_{p_1}\!\left[\log \frac{p_1}{p_0}\right]
 - \mathbb{E}_{p_1}\!\left[\log \frac{p_1}{\tilde p_1}\right] \\
&= \log\frac{1-\tilde{\pi}_0}{\tilde{\pi}_0}
 + D_{\mathrm{KL}}(p_1\|p_0) - D_{\mathrm{KL}}(p_1\|\tilde p_1) \\
&\le \log\frac{1-\tilde{\pi}_0}{\tilde{\pi}_0} + D_{\mathrm{KL}}(p_1\|p_0)
\;\le\; \log\frac{1-\pi_0}{\pi_0} + D_{\mathrm{KL}}(p_1\|p_0)
= \mathbb{E}_{H_1}[\log \mathrm{PO}],
\end{aligned}
\]
since $D_{\mathrm{KL}}(\cdot\|\cdot)\ge 0$ and $x\mapsto \log\frac{1-x}{x}$ is strictly decreasing on $(0,1)$ with $\tilde\pi_0\ge \pi_0$.
\end{proof}

}

\renewcommand{\thefigure}{S\arabic{figure}}
\renewcommand{\theHfigure}{S\arabic{figure}} 
\crefname{figure}{Fig.}{Figs.}     % \cref -> “Fig. S1”
\Crefname{figure}{Figure}{Figures} % \Cref -> “Figure S1”
\setcounter{figure}{0}

\section{Simulation Study}
\label{sec:simulations}

We conducted a simulation study to evaluate the performance of FASH with the proposed BF-based adjustment for estimating the null proportion $\pi_0$. 
The two hypothesis settings were testing whether the effect function is constant or linear, consistent with \cref{sec-application}.

We generated $J=1000$ independent {observation units}, each with an effect function $\beta_j$ drawn from one of the following three categories:
\begin{enumerate}[label=\Alph*.]
    \item \textit{Non-dynamic:} $\beta_j = c_j$ with $c_j \sim N(0,1)$.
    \item \textit{Linear-dynamic:} $\beta_j = c_j + b_j t$ with $c_j \sim N(0,1)$ and $b_j \sim N(0,1/4)$.
    \item \textit{Nonlinear-dynamic:} $\beta_j$ sampled from a standard $\text{IWP}_2(\sigma)$ with $\sigma(16)=5$.
\end{enumerate}

Each function was observed at sixteen equally spaced time points $t=0,1,\dots,16$. Standard errors $s_{j,r}$ were drawn independently from $\{0.1,0.3,0.5\}$ with equal probability. Examples of simulated {observations} are shown in \cref{fig:appendixB_example}.
\begin{figure}[!t]
    \centering
    \includegraphics[width=1\textwidth]{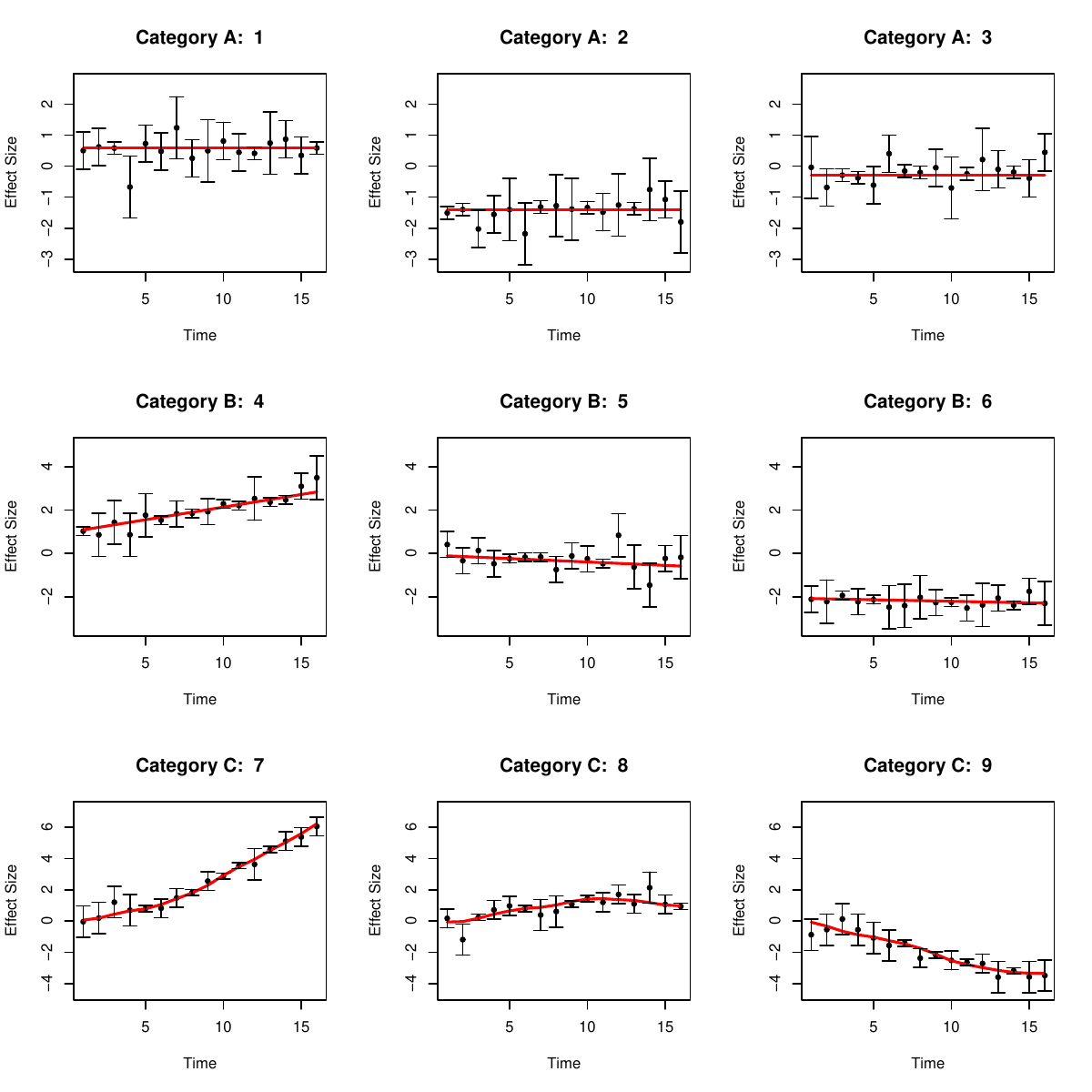}
    \caption{Examples of simulated {observations}: points indicate observed effect estimates; vertical bars denote $\pm 2$ standard errors; red lines denote the true effect function $\beta_j$.}
    \label{fig:appendixB_example}
\end{figure}
To vary the underlying null proportion, we introduced a parameter $\rho \in [0.05,0.5]$ with increment $0.01$. For each $\rho$, we simulated $J(1-\rho)$ {observations} from category A, $J\rho/2$ from category B, and the remainder from category C. Thus, when testing $S_0$ as constant functions, the true null proportion is $\pi_0=1-\rho$, and when testing $S_0$ as linear functions, the true null proportion is $\pi_0=1-\rho/2$.

We fit the FASH model as in \cref{sec-application}, and compared the unpenalized MLE $\hat{\pi}_0$ with its BF-adjusted counterpart. Results are summarized in \cref{fig:appendixB_result}. 
Panels (a-b) demonstrate that the BF-adjusted estimates consistently remain above the true $\pi_0$ across replications, confirming the conservative property stated in \cref{thrm:BF_control}. In contrast, the raw MLE often underestimates $\pi_0$, particularly when testing $S_0$ as constant functions. Panels (c-d) further show that using a denser grid for $\sigma$ does not affect this conclusion: the BF-adjusted estimates still maintain the desired conservativeness.

\begin{figure}[!t]
    \centering
    \includegraphics[width=1\textwidth]{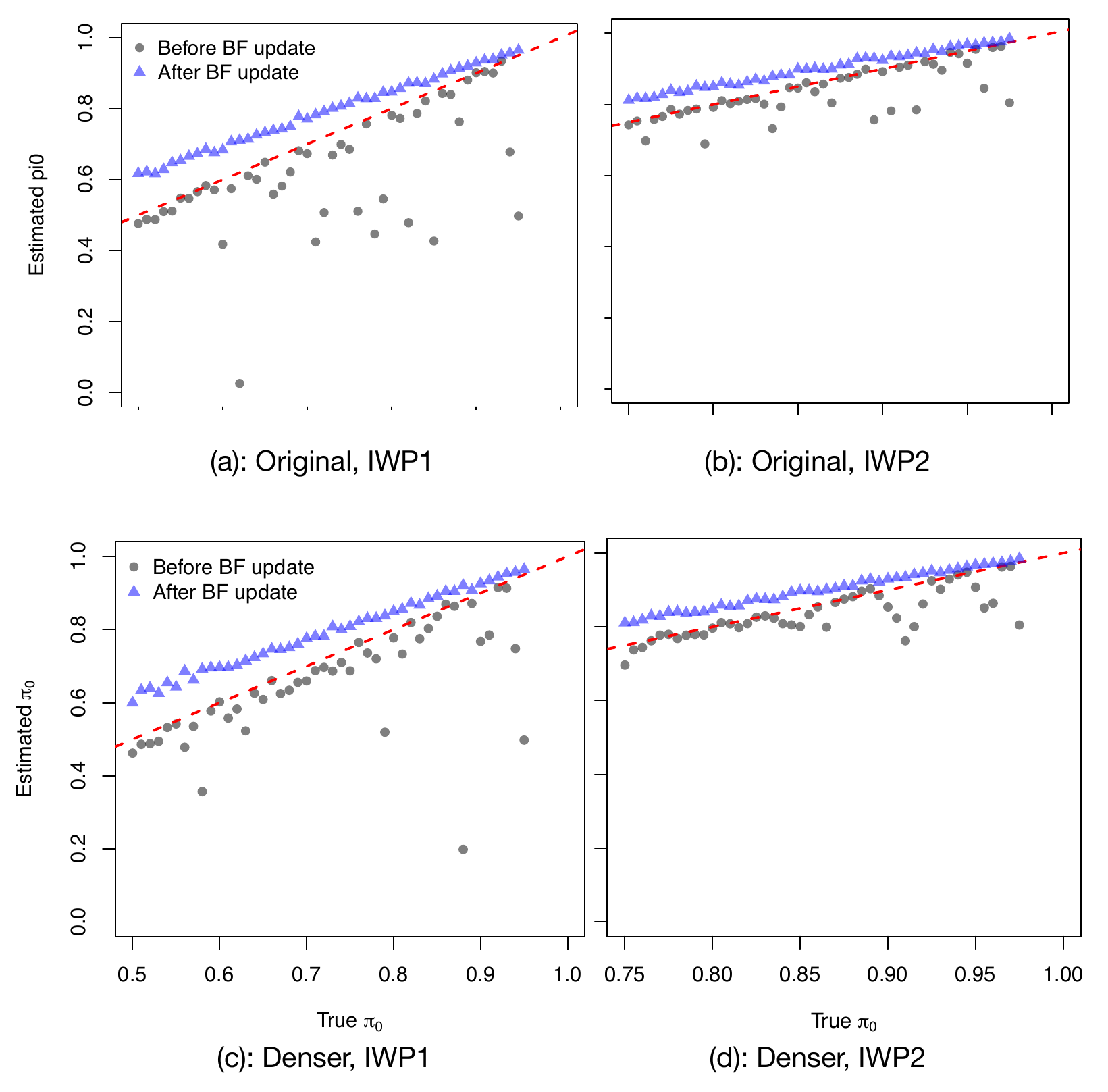}
    \caption{Estimated null proportions $\hat{\pi}_0$ in the simulation study. The left column corresponds to testing constant functions ($S_0=$ constant, fitted with $\text{IWP}_1$); the right column corresponds to testing linear functions ($S_0=$ linear, fitted with $\text{IWP}_2$). Black: MLE; blue: BF-adjusted estimates; red: 45-degree line. 
    (a-b) results using the original grid in \cref{sec-application} (spacing $0.2$ on log-precision scale). 
    (c-d) results using a denser grid (spacing $0.1$).}
    \label{fig:appendixB_result}
\end{figure}

Next, we examine a specific setting with $\rho=0.2$, which corresponds to a true null proportion of $\pi_0=0.8$ when testing for dynamic eQTLs and $\pi_0=0.9$ when testing for nonlinear dynamic eQTLs. We apply FASH as in \cref{sec-application} to address two inferential goals: (i) detecting dynamic eQTLs and (ii) detecting nonlinear dynamic eQTLs. For a range of nominal FDR levels $\alpha$, we evaluate the empirical FDR of FASH with and without BF-adjustment. As shown in \cref{fig:appendixB_result2}, when $\alpha \leq 0.05$, the empirical FDR is already well controlled without adjustment. However, as $\alpha$ increases, the unadjusted results exhibit inflated FDR, particularly when testing for nonlinear dynamic eQTLs. In contrast, the BF-adjusted version consistently controls FDR at or below the nominal level, in agreement with the theoretical guarantee of \cref{thrm:BF_control}. 

In terms of power, the BF-adjustment yields slightly more conservative results, leading to a modest reduction in power. This loss, however, is small—capped at about 5\% in the worst case—and the BF-adjusted FASH still achieves over 80\% power when $\alpha=0.05$ for both tasks. Therefore, we view this as a reasonable trade-off to ensure FDR remains below the nominal level, and recommend applying the BF-adjustment in practice. 
Examples of significant discoveries at $\alpha=0.05$ are shown in \cref{fig:appendixB_result3}, with the posterior effect functions obtained from FASH compared against the true underlying effects.

\begin{figure}[!t]
    \centering
\includegraphics[width=1\textwidth]{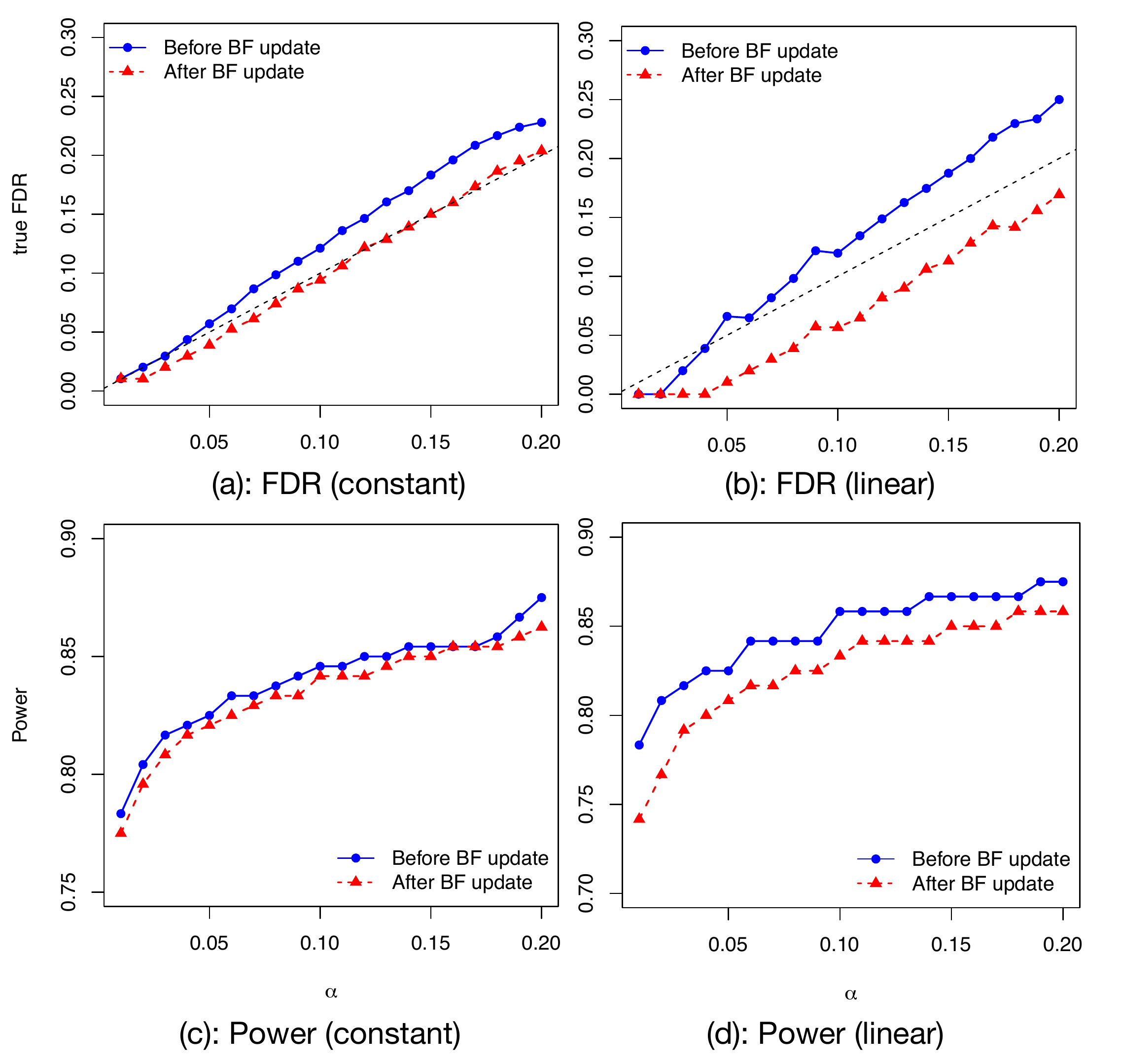}
    \caption{
    Empirical FDR (a-b) and power (c-d) of FASH under the two inferential goals, plotted against the nominal FDR $\alpha$, with (red) or without (blue) the BF-adjustment of $\pi_0$. 
    The left column corresponds to testing dynamic eQTLs ($H_0:\beta_j$ constant), and the right column to testing nonlinear dynamics ($H_0:\beta_j$ linear). 
    The black dashed line in (a-b) is the 45-degree reference line for FDR calibration.
    }
    \label{fig:appendixB_result2}
\end{figure}

\begin{figure}[!t]
    \centering
\includegraphics[width=1\textwidth]{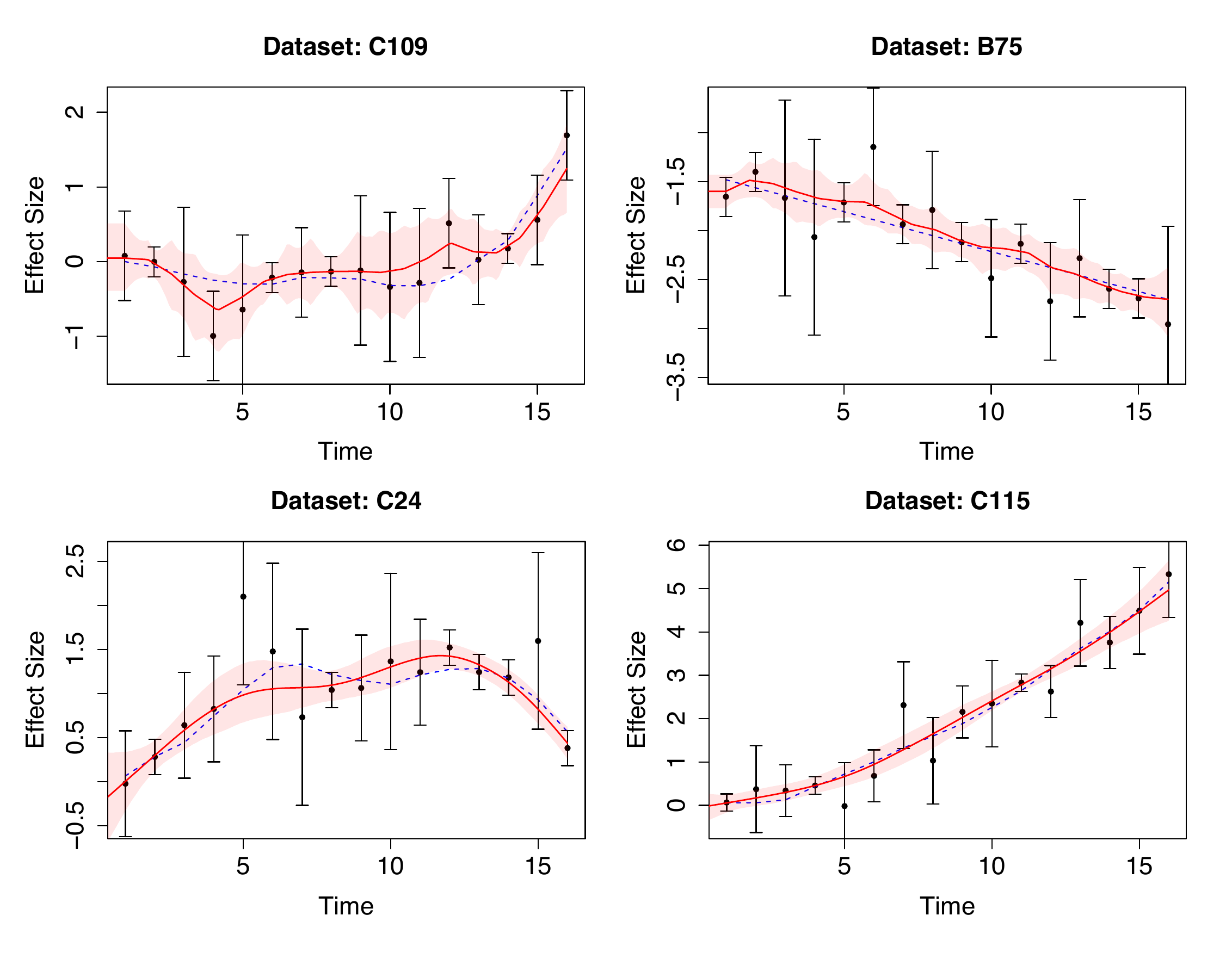}
    \caption{
    Examples of significant eQTLs identified by FASH in the simulation study at a nominal FDR level of $\alpha=0.05$ (top: dynamic eQTLs; bottom: nonlinear< dynamic eQTLs). 
    The posterior mean of \( \beta_j \) is shown as a red solid line, with the 95\% credible interval shaded in red. 
    Observed effect size estimates \( \hat{\beta}_j \) are shown as black dots with error bars denoting \( \pm 2\tilde{s}_{j,r} \). 
    The true underlying effect function is shown as a blue dashed line.
    }
    \label{fig:appendixB_result3}
\end{figure}

\section{Additional Details of the iPSC Cardiomyocyte Differentiation Study}
\label{subsec:application-details}

For each gene-variant pair $j \in [J]$, we obtained eQTL effect
estimates $\hat{\beta}_j(t_r)$ at 16 time points by fitting linear
regression models, separately at each time point $r \in [16]$. We used
the same linear regression model that was used
in \cite{strober2019dynamic}:
% 
% At each of the $R = 16$ time points $t_r$, $r \in [16]$, we estimated
% the eQTL effect size for the $j$th gene-variant pair using the linear
% regression model that was used in
%
\begin{equation}
\label{equ:eqtl}
E_{jc}(t_r) \mid {\bm z}_{cr}, G_c
\overset{\text{ind}}{\sim} N({\bm z}_{cr}^T\boldsymbol{b}_{jr} + 
G_c\beta_j(t_r), \sigma_{E_{jr}}^2), 
\quad r \in [16], c \in [C_r], j \in [J],
\end{equation}
in which $E_{jc}(t_r) \in \mathbb{R}$ denotes the standardized
expression of gene $j$ in cell line $c$ at time point $t_r$,
$G_c \in \{0, 1, 2\}$ is the genotype dosage of the SNP in cell line
$c$, ${\bm z}_{cr} \in \mathbb{R}^6$ is the vector of covariates
(intercept and first 5 PCs), and $\beta_j(t_r) \in \mathbb{R}$, ${\bm
b}_{jr} \in \mathbb{R}^6$ are the regression coefficients to be
estimated. Principal components (PCs) were computed from the
``cell-line-collapsed'' expression matrix of dimension $19 \times
\mbox{212,147}$ \citep{strober2019dynamic}. The number of cell lines observed
at time $t_r$, denoted by $C_r$, was 19 for most time points, except
for day~14 (18 cell lines) and days 3 and 5 (16 cell lines). The
residual variance $\sigma_{E_{jr}}^2$ at time point $t_r$ was assumed
to be constant across cell lines at a given time point. From
%
% time-point specific regression 
%
\eqref{equ:eqtl}, we computed the $t$
statistic for each gene-variant pair on each day as $T_{jr}
= \hat{\beta}_j(t_r) / s_{jr}$, in which $s_{jr}$ is the standard error
of $\hat{\beta}_j(t_r)$.

Under the null hypothesis that $\beta_j(t_r) = 0$, $T_{jr}$ follows a
$t$ distribution with $\nu_r = C_r - 5$ degrees of freedom.  If the
number of cell lines measured at each time point were large, the
asymptotic normality of the maximum-likelihood estimator would imply
that $T_{jr}$ approximately follows a standard normal distribution,
and hence $\hat{\beta}_j(t_r) \sim N(\beta_j(t_r),
s_{jr}^2)$. However, since only 19 cell lines were included in this
study, this approximation is not reliable. In fact, inflation of the
type I error rate was reported under this setting (see Fig. S10 in the
supplement of \citealt{strober2019dynamic}). To account inflation of
type I error s in a scalable way,
following \citep{lu2019empiricalbayesestimationnormal} we defined
``$t$-adjusted standard errors'' $\tilde{s}_{jr}$ as satisfying
\begin{equation}
\label{equ:adj-SE}
\Phi(\hat{\beta}_j(t_r) / \tilde{s}_{jr}) = 
P_{T_{\nu_r}}(\hat{\beta}_j(t_r) / s_{jr}),
\end{equation}
where $\Phi$ denotes the CDF of the standard normal distribution and
$P_{T_{\nu_r}}$ is the CDF of the $t$ distribution with $\nu_r$
degrees of freedom.  The effect estimates $\hat{\beta}_{jr}$ and the
adjusted standard errors $\tilde{s}_{jr}$ were then the inputs to
FASH.

The priors \eqref{equ:fash_adaptive} used in the the FASH analyses
were all defined on an equally spaced grid on the log-scale, with
$\sigma_0 = 0, \sigma_1 = e^{-5} \approx 0.01$, $\sigma_{52} = 1$,
and $\log \sigma_{k+1} = \log \sigma_k + 0.1$, $k = 2, \ldots, 51$.

\section{Supplementary Figures \& Tables}
\label{subsec:sup_figure}

% Gene-set enrichment results for all dynamic + switch (no filtering)
\footnotesize
\setlength{\tabcolsep}{3pt}
\renewcommand{\arraystretch}{1.08}
\begin{longtable}{l p{0.46\linewidth} c c >{\ttfamily}r >{\ttfamily}r}
\caption{Gene-set enrichment results for highlighted genes with dynamic eQTLs (Dynamic) and for genes with switch dynamic eQTLs (Switch).
All gene sets are from the MSigDB Hallmark (HALLMARK\_*) collection; the prefix is omitted in the table.}
\label{tab:hallmark_enrichment_all_vs_switch}\\

\toprule
Category & Gene set & GeneRatio & BgRatio & $p$-value & $q$-value \\
\midrule
\endfirsthead

\toprule
Category & Gene set & GeneRatio & BgRatio & $p$-value & $q$-value \\
\midrule
\endhead

\bottomrule
\endfoot

% -------------------- All dynamic eQTLs --------------------
Dynamic & HYPOXIA                         & 25/1177 & 89/6362  & 0.0169 & 0.436 \\
Dynamic & IL6\_JAK\_STAT3\_SIGNALING       & 8/1177  & 21/6362  & 0.0280 & 0.436 \\
Dynamic & ESTROGEN\_RESPONSE\_EARLY        & 22/1177 & 82/6362  & 0.0394 & 0.436 \\
Dynamic & ESTROGEN\_RESPONSE\_LATE         & 21/1177 & 79/6362  & 0.0476 & 0.436 \\
Dynamic & ANDROGEN\_RESPONSE               & 16/1177 & 57/6362  & 0.0499 & 0.436 \\
Dynamic & HEME\_METABOLISM                 & 22/1177 & 85/6362  & 0.0564 & 0.436 \\
Dynamic & NOTCH\_SIGNALING                 & 5/1177  & 15/6362  & 0.1276 & 0.750 \\
Dynamic & KRAS\_SIGNALING\_DN              & 8/1177  & 28/6362  & 0.1306 & 0.750 \\
Dynamic & GLYCOLYSIS                       & 23/1177 & 100/6362 & 0.1498 & 0.750 \\
Dynamic & MYOGENESIS                       & 16/1177 & 67/6362  & 0.1624 & 0.750 \\
Dynamic & UV\_RESPONSE\_DN                 & 16/1177 & 68/6362  & 0.1780 & 0.750 \\
Dynamic & KRAS\_SIGNALING\_UP              & 11/1177 & 47/6362  & 0.2415 & 0.840 \\
Dynamic & TNFA\_SIGNALING\_VIA\_NFKB        & 13/1177 & 58/6362  & 0.2665 & 0.840 \\
Dynamic & HEDGEHOG\_SIGNALING              & 4/1177  & 15/6362  & 0.2962 & 0.840 \\
Dynamic & APOPTOSIS                        & 15/1177 & 70/6362  & 0.3072 & 0.840 \\
Dynamic & INTERFERON\_GAMMA\_RESPONSE      & 12/1177 & 56/6362  & 0.3359 & 0.840 \\
Dynamic & MITOTIC\_SPINDLE                 & 28/1177 & 139/6362 & 0.3399 & 0.840 \\
Dynamic & WNT\_BETA\_CATENIN\_SIGNALING    & 5/1177  & 21/6362  & 0.3455 & 0.840 \\
Dynamic & COAGULATION                      & 7/1177  & 31/6362  & 0.3459 & 0.840 \\
Dynamic & P53\_PATHWAY                     & 17/1177 & 83/6362  & 0.3626 & 0.840 \\
Dynamic & ADIPOGENESIS                     & 20/1177 & 106/6362 & 0.5008 & 0.945 \\
Dynamic & \begin{tabular}[t]{@{}l@{}}REACTIVE\_OXYGEN\_SPECIES\_\\PATHWAY\end{tabular}
                                            & 5/1177  & 26/6362  & 0.5410 & 0.945 \\
Dynamic & APICAL\_SURFACE                  & 3/1177  & 15/6362  & 0.5439 & 0.945 \\
Dynamic & PI3K\_AKT\_MTOR\_SIGNALING        & 10/1177 & 55/6362  & 0.5794 & 0.945 \\
Dynamic & COMPLEMENT                       & 10/1177 & 57/6362  & 0.6278 & 0.945 \\
Dynamic & IL2\_STAT5\_SIGNALING            & 13/1177 & 75/6362  & 0.6499 & 0.945 \\
Dynamic & XENOBIOTIC\_METABOLISM           & 12/1177 & 71/6362  & 0.6837 & 0.945 \\
Dynamic & INTERFERON\_ALPHA\_RESPONSE      & 4/1177  & 27/6362  & 0.7633 & 0.945 \\
Dynamic & APICAL\_JUNCTION                 & 13/1177 & 82/6362  & 0.7739 & 0.945 \\
Dynamic & DNA\_REPAIR                      & 13/1177 & 82/6362  & 0.7739 & 0.945 \\
Dynamic & ANGIOGENESIS                     & 2/1177  & 15/6362  & 0.7956 & 0.945 \\
Dynamic & EPITHELIAL\_MESENCHYMAL\_TRANSITION
                                            & 13/1177 & 84/6362  & 0.8031 & 0.945 \\
Dynamic & PROTEIN\_SECRETION               & 9/1177  & 61/6362  & 0.8208 & 0.945 \\
Dynamic & ALLOGRAFT\_REJECTION             & 5/1177  & 36/6362  & 0.8225 & 0.945 \\
Dynamic & TGF\_BETA\_SIGNALING             & 5/1177  & 36/6362  & 0.8225 & 0.945 \\
Dynamic & UNFOLDED\_PROTEIN\_RESPONSE      & 11/1177 & 75/6362  & 0.8442 & 0.945 \\
Dynamic & FATTY\_ACID\_METABOLISM          & 10/1177 & 72/6362  & 0.8812 & 0.945 \\
Dynamic & MYC\_TARGETS\_V2                 & 5/1177  & 41/6362  & 0.8992 & 0.945 \\
Dynamic & BILE\_ACID\_METABOLISM           & 3/1177  & 28/6362  & 0.9132 & 0.945 \\
Dynamic & SPERMATOGENESIS                  & 3/1177  & 28/6362  & 0.9132 & 0.945 \\
Dynamic & UV\_RESPONSE\_UP                 & 10/1177 & 76/6362  & 0.9174 & 0.945 \\
Dynamic & INFLAMMATORY\_RESPONSE           & 5/1177  & 43/6362  & 0.9206 & 0.945 \\
Dynamic & OXIDATIVE\_PHOSPHORYLATION       & 16/1177 & 119/6362 & 0.9445 & 0.945 \\
Dynamic & PEROXISOME                       & 4/1177  & 44/6362  & 0.9739 & 0.945 \\
Dynamic & G2M\_CHECKPOINT                  & 16/1177 & 136/6362 & 0.9881 & 0.945 \\
Dynamic & MTORC1\_SIGNALING                & 13/1177 & 129/6362 & 0.9973 & 0.945 \\
Dynamic & CHOLESTEROL\_HOMEOSTASIS         & 2/1177  & 42/6362  & 0.9981 & 0.945 \\
Dynamic & E2F\_TARGETS                     & 10/1177 & 127/6362 & 0.9998 & 0.945 \\
Dynamic & MYC\_TARGETS\_V1                 & 6/1177  & 124/6362 & 1.0000 & 0.945 \\

\midrule
% -------------------- Switch dynamic eQTLs --------------------
Switch & HYPOXIA                         & 11/250 & 89/6362  & 0.000668 & 0.0246 \\
Switch & KRAS\_SIGNALING\_UP             & 7/250  & 47/6362  & 0.00218  & 0.0348 \\
Switch & MYOGENESIS                      & 8/250  & 67/6362  & 0.00445  & 0.0348 \\
Switch & P53\_PATHWAY                    & 9/250  & 83/6362  & 0.00501  & 0.0348 \\
Switch & GLYCOLYSIS                      & 10/250 & 100/6362 & 0.00565  & 0.0348 \\
Switch & ANDROGEN\_RESPONSE              & 7/250  & 57/6362  & 0.00656  & 0.0348 \\
Switch & COAGULATION                     & 5/250  & 31/6362  & 0.00662  & 0.0348 \\
Switch & IL6\_JAK\_STAT3\_SIGNALING       & 4/250  & 21/6362  & 0.00822  & 0.0378 \\
Switch & NOTCH\_SIGNALING                & 3/250  & 15/6362  & 0.0192   & 0.0787 \\
Switch & APICAL\_JUNCTION                & 7/250  & 82/6362  & 0.0415   & 0.153 \\
Switch & INTERFERON\_GAMMA\_RESPONSE     & 5/250  & 56/6362  & 0.0678   & 0.227 \\
Switch & \begin{tabular}[t]{@{}l@{}}REACTIVE\_OXYGEN\_SPECIES\_\\PATHWAY\end{tabular}
      & 3/250 & 26/6362 & 0.0802 & 0.246 \\
Switch & ESTROGEN\_RESPONSE\_LATE         & 6/250  & 79/6362  & 0.0894   & 0.253 \\
Switch & ESTROGEN\_RESPONSE\_EARLY        & 6/250  & 82/6362  & 0.1025   & 0.268 \\
Switch & EPITHELIAL\_MESENCHYMAL\_TRANSITION
                                      & 6/250  & 84/6362  & 0.1117   & 0.268 \\
Switch & HEME\_METABOLISM                & 6/250  & 85/6362  & 0.1165   & 0.268 \\
Switch & APOPTOSIS                       & 5/250  & 70/6362  & 0.1398   & 0.299 \\
Switch & XENOBIOTIC\_METABOLISM          & 5/250  & 71/6362  & 0.1459   & 0.299 \\
Switch & PI3K\_AKT\_MTOR\_SIGNALING       & 4/250  & 55/6362  & 0.1690   & 0.328 \\
Switch & TNFA\_SIGNALING\_VIA\_NFKB       & 4/250  & 58/6362  & 0.1927   & 0.355 \\
Switch & MYC\_TARGETS\_V2                & 3/250  & 41/6362  & 0.2171   & 0.381 \\
Switch & INFLAMMATORY\_RESPONSE          & 3/250  & 43/6362  & 0.2380   & 0.399 \\
Switch & UV\_RESPONSE\_DN                & 4/250  & 68/6362  & 0.2778   & 0.445 \\
Switch & KRAS\_SIGNALING\_DN             & 2/250  & 28/6362  & 0.3018   & 0.463 \\
Switch & UNFOLDED\_PROTEIN\_RESPONSE     & 4/250  & 75/6362  & 0.3405   & 0.502 \\
Switch & COMPLEMENT                      & 3/250  & 57/6362  & 0.3895   & 0.529 \\
Switch & MTORC1\_SIGNALING               & 6/250  & 129/6362 & 0.3964   & 0.529 \\
Switch & ALLOGRAFT\_REJECTION            & 2/250  & 36/6362  & 0.4164   & 0.529 \\
Switch & TGF\_BETA\_SIGNALING            & 2/250  & 36/6362  & 0.4164   & 0.529 \\
Switch & HEDGEHOG\_SIGNALING             & 1/250  & 15/6362  & 0.4523   & 0.555 \\
Switch & MITOTIC\_SPINDLE                & 6/250  & 139/6362 & 0.4670   & 0.555 \\
Switch & OXIDATIVE\_PHOSPHORYLATION      & 5/250  & 119/6362 & 0.5046   & 0.581 \\
Switch & MYC\_TARGETS\_V1                & 5/250  & 124/6362 & 0.5415   & 0.600 \\
Switch & WNT\_BETA\_CATENIN\_SIGNALING   & 1/250  & 21/6362  & 0.5697   & 0.600 \\
Switch & IL2\_STAT5\_SIGNALING           & 3/250  & 75/6362  & 0.5705   & 0.600 \\
Switch & ADIPOGENESIS                    & 4/250  & 106/6362 & 0.6043   & 0.618 \\
Switch & G2M\_CHECKPOINT                 & 5/250  & 136/6362 & 0.6244   & 0.622 \\
Switch & INTERFERON\_ALPHA\_RESPONSE     & 1/250  & 27/6362  & 0.6620   & 0.642 \\
Switch & FATTY\_ACID\_METABOLISM         & 2/250  & 72/6362  & 0.7817   & 0.736 \\
Switch & CHOLESTEROL\_HOMEOSTASIS        & 1/250  & 42/6362  & 0.8153   & 0.736 \\
Switch & PEROXISOME                      & 1/250  & 44/6362  & 0.8297   & 0.736 \\
Switch & DNA\_REPAIR                     & 2/250  & 82/6362  & 0.8392   & 0.736 \\
Switch & PROTEIN\_SECRETION              & 1/250  & 61/6362  & 0.9143   & 0.783 \\
Switch & UV\_RESPONSE\_UP                & 1/250  & 76/6362  & 0.9534   & 0.798 \\
Switch & E2F\_TARGETS                    & 1/250  & 127/6362 & 0.9942   & 0.814 \\

\end{longtable}
\normalsize

\begin{figure}[!t]
\centering
\includegraphics[width=\textwidth]{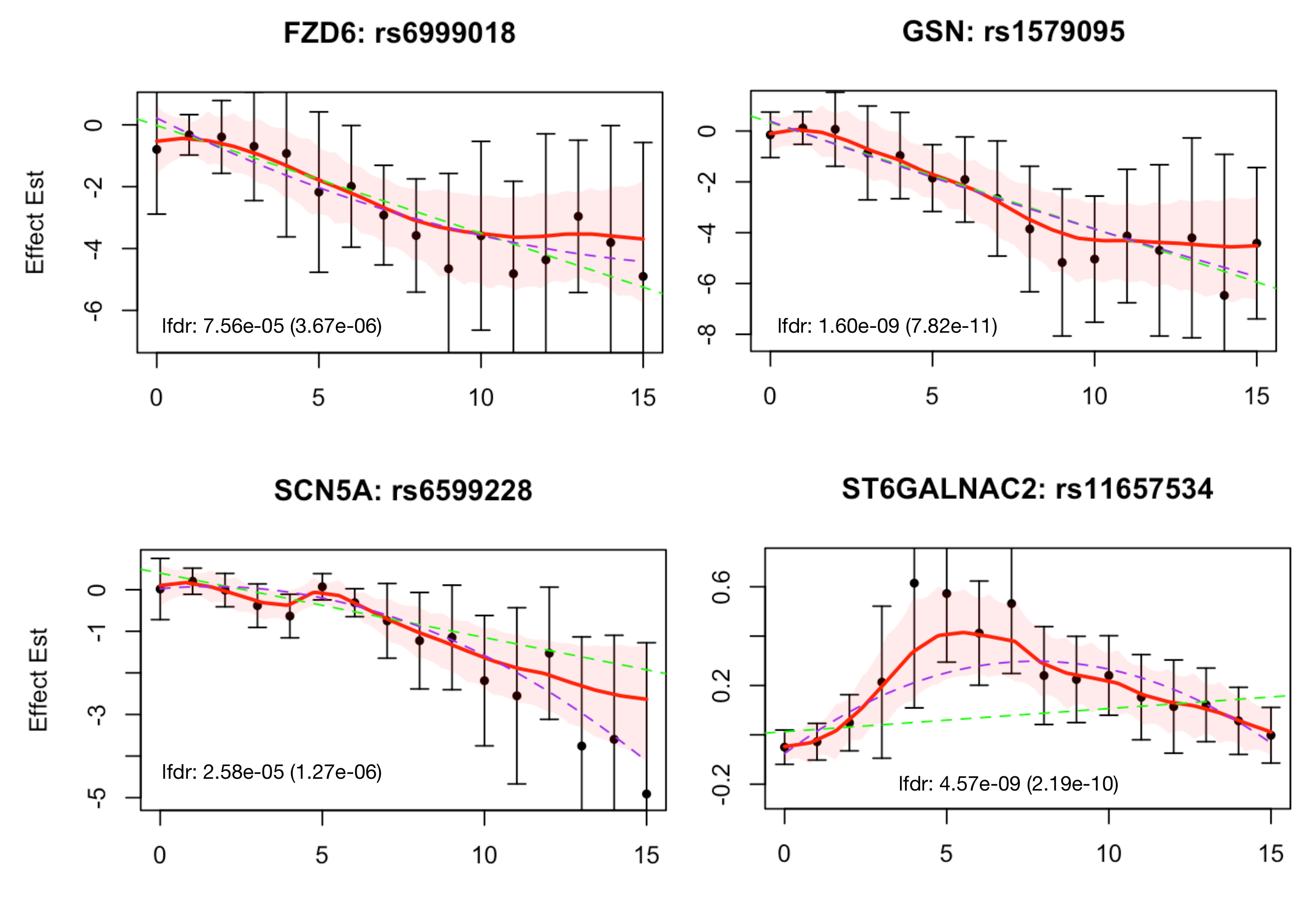}
\caption{Selected gene-variant pairs with significant dynamic eQTL effects 
{based on a 5\% FDR threshold}. In each plot, the
posterior mean of $\beta_j$ is shown as a red solid line and the 95\%
credible interval is shown by the red shaded region. Observed effect
size estimates $\hat{\beta}_j$ are shown as black dots and vertical
bars represent $\pm 2$ the (adjusted) standard errors
$\tilde{s}_{jr}$. In ``lfdr: x (y)'', x is the lfdr after the BF
adjustment, and y is the lfdr before the  BF adjustment. For comparison,
the inverse-variance weighted least squares estimates of
$\hat{\beta}_j$ are shown as green (linear) and purple (quadratic)
dashed lines. Variants rs6999018 and rs1579095 appear to have
minimal impact on the expression of their respective
genes, \textit{FZD6} and \textit{GSN}, during the first three days of
differentiation, but begin to significantly downregulate expression at
later stages.  {These genes have been previously implicated in
cardiomyocyte development and function.}  For
example, \cite{mazzotta2016distinctive} reported that \textit{FZD6}
plays an important regulatory role in cardiomyocyte differentiation
through non-canonical Wnt signaling, while \cite{li2009gelsolin}
showed that \textit{GSN} contributes to cardiac remodeling via DNase
I-mediated apoptosis.}
\label{fig:dynamic_highlight}
\end{figure}

\begin{figure}[!h]
    \centering
    \includegraphics[width=1\textwidth]{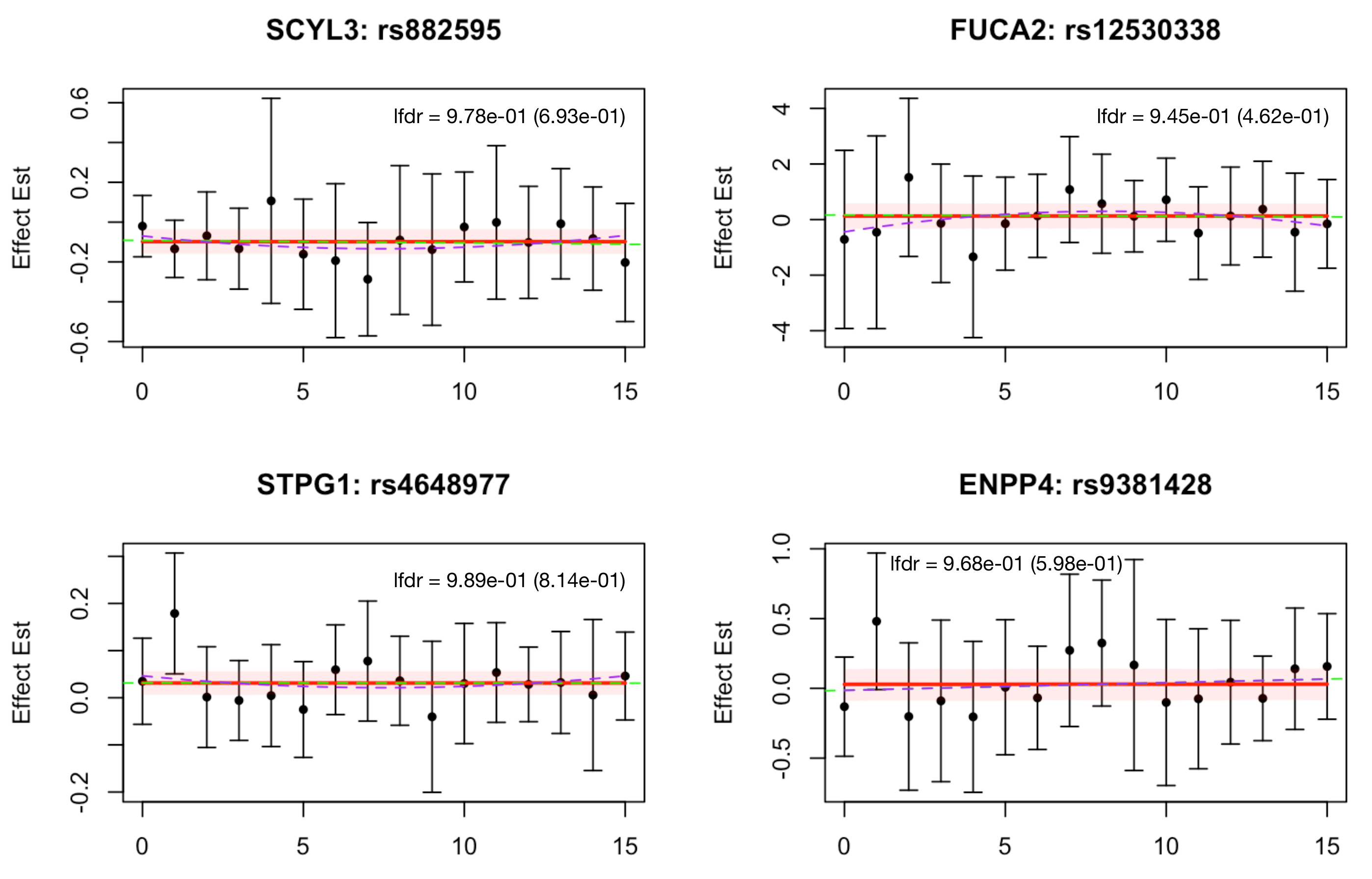}
    \caption{Selected variant-gene pairs showing no significant dynamic eQTL effects {based on the $5\%$ false discovery rate threshold}. The posterior mean of \( \beta_j \) is shown as a red solid line, with the 95\% credible interval indicated by the red shaded region. Observed effect size estimates \( \hat{\beta}_j \) are shown as black dots, with vertical bars representing \( \pm 2\tilde{s}_{j,r} \). 
    {The lfdr for each pair is shown, with the value before BF adjustment given in parentheses.
    For comparison, inverse-variance weighted least squares estimates of $\hat{\beta}_j$ are shown as a green dashed line (linear) and a purple dashed line (quadratic).}
    }
    \label{fig:nonsig_order1}
\end{figure}

\begin{figure}[!t]
    \centering
    \includegraphics[width=1\textwidth]{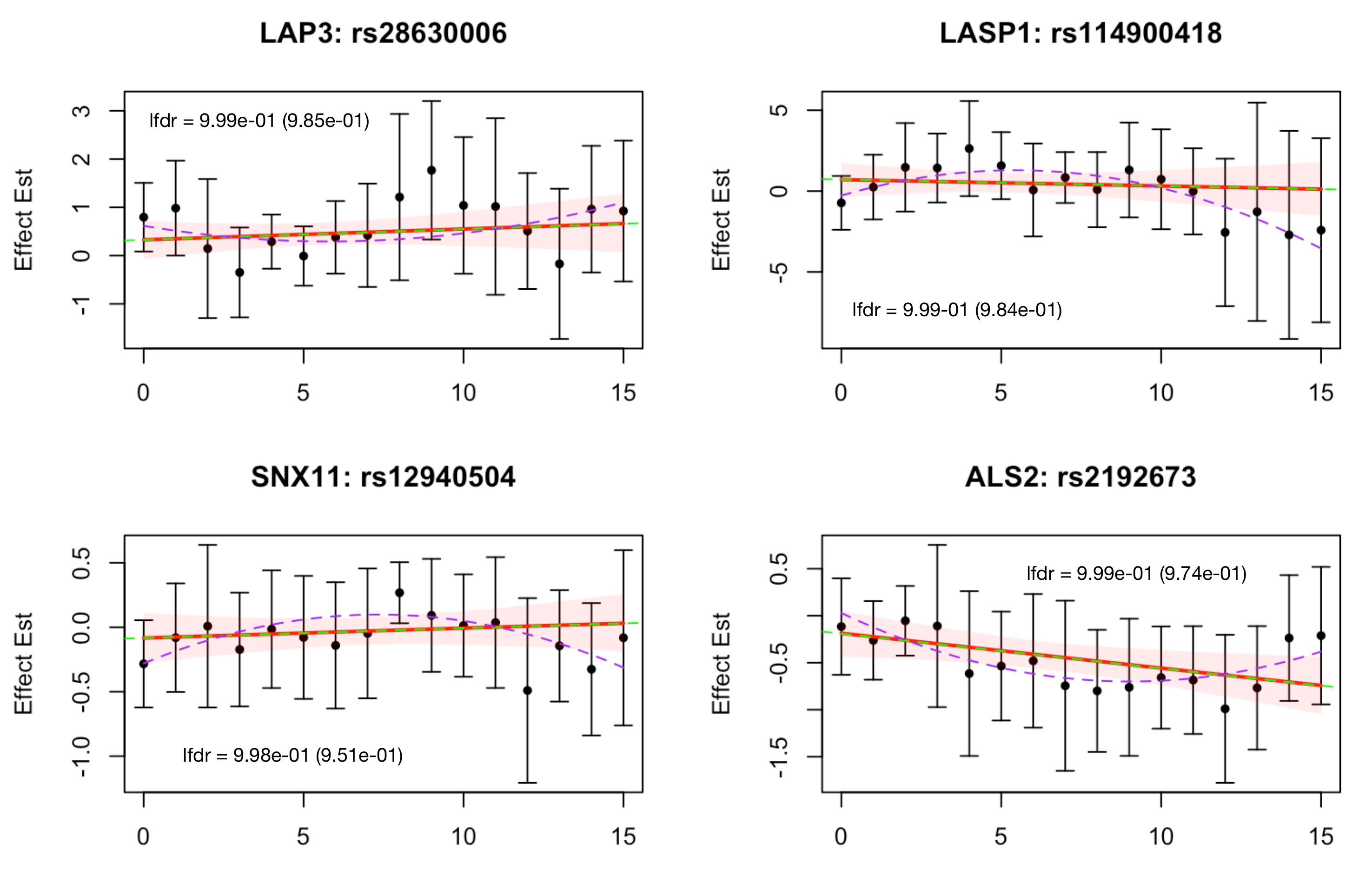}
    \caption{{Selected variant-gene pairs showing no significant nonlinear dynamic eQTL effects {based on the $5\%$ false discovery rate threshold}. The posterior mean of \( \beta_j \) is shown as a red solid line, with the 95\% credible interval indicated by the red shaded region. Observed effect size estimates \( \hat{\beta}_j \) are shown as black dots, with vertical bars representing \( \pm 2\tilde{s}_{j,r} \). 
    {The lfdr for each pair is shown, with the value before BF adjustment given in parentheses.
    For comparison, inverse-variance weighted least squares estimates of $\hat{\beta}_j$ are shown as a green dashed line (linear) and a purple dashed line (quadratic).}
    }}
    \label{fig:nonsig_order2}
\end{figure}

\begin{figure}[!t]
    \centering
    \includegraphics[width=1\textwidth]{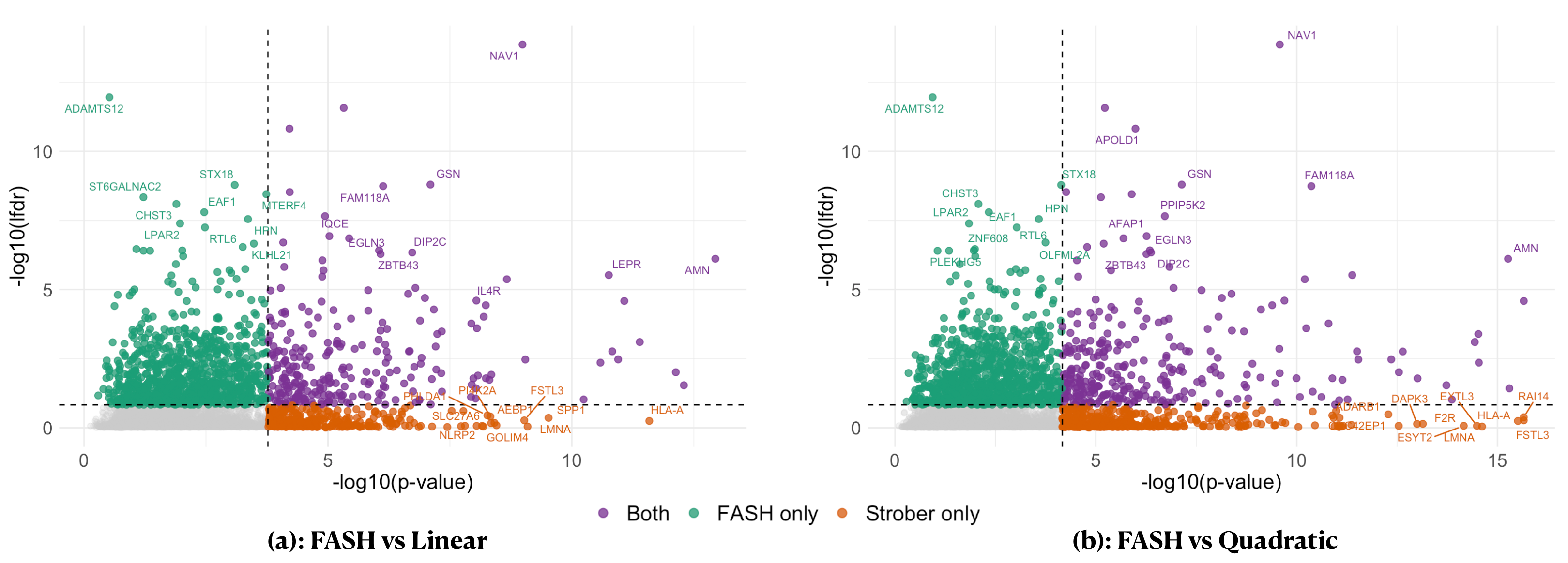}
    \caption{
    Scatterplot of ($-\log_{10}$) lfdr from FASH-$\mathrm{IWP}_1$ versus ($-\log_{10}$) p-values from \cite{strober2019dynamic}, based on the linear (a) or quadratic interaction (b) test.
    Each point represents a unique gene; for each gene, the lfdr and p-value are taken from the variant-gene pair with the smallest lfdr and the smallest p-value, respectively.
    The horizontal dashed line indicates the lfdr cutoff corresponding to FDR $=0.05$.
    The colors indicate significance relative to the FDR threshold (grey: significant in neither; green: significant only by lfdr; orange: significant only by p-value; purple: significant in both).
    }
    \label{fig:lfdr_pval}
\end{figure}

\begin{figure}[!t]
    \centering
    \includegraphics[width=1\textwidth]{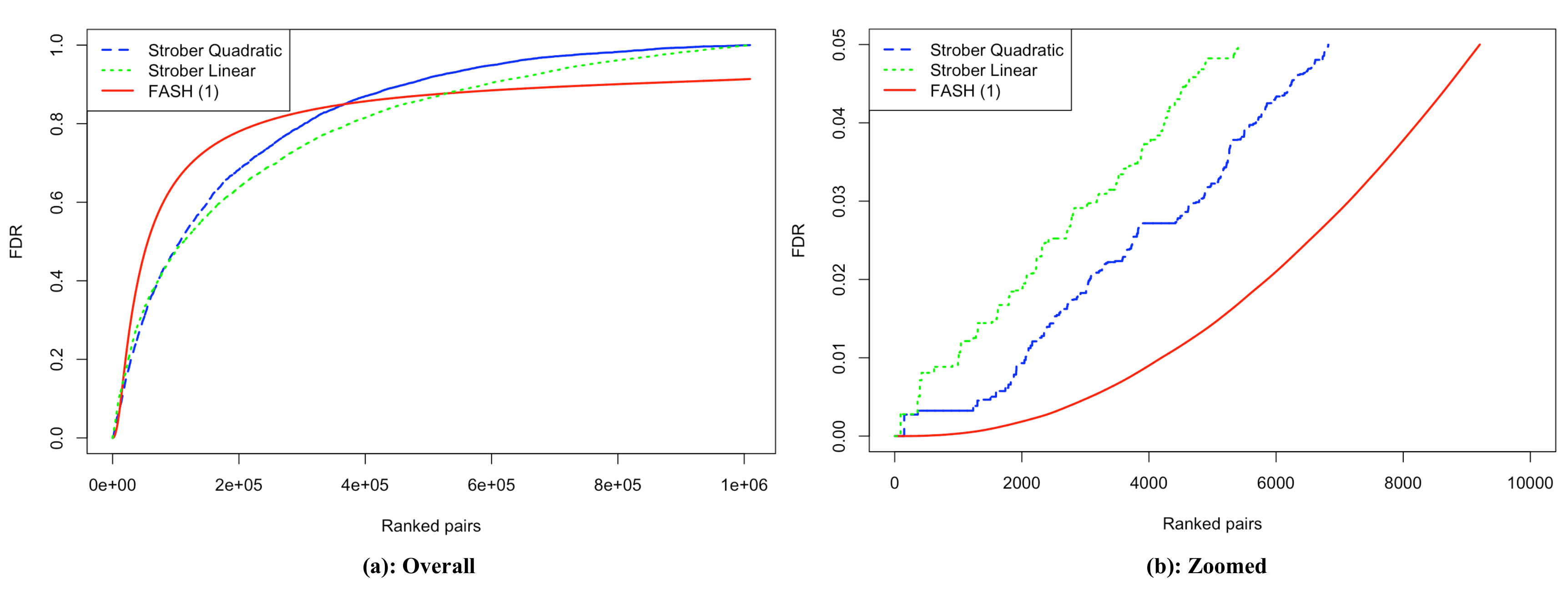}
    \caption{
    False discovery rate (FDR) along the ranked list of variant-gene pairs (from most to least significant), computed from the local false discovery rates (lfdrs) produced by FASH-$\mathrm{IWP}_1$ (solid red) and by applying the eFDR procedure of \cite{gamazon2013integrative} to the p-values reported in \cite{strober2019dynamic} (linear test: dotted green; quadratic test: dashed blue). 
    }
    \label{fig:FDR_eFDR}
\end{figure}

\begin{figure}[!t]
    \centering
    \includegraphics[width=1\textwidth]{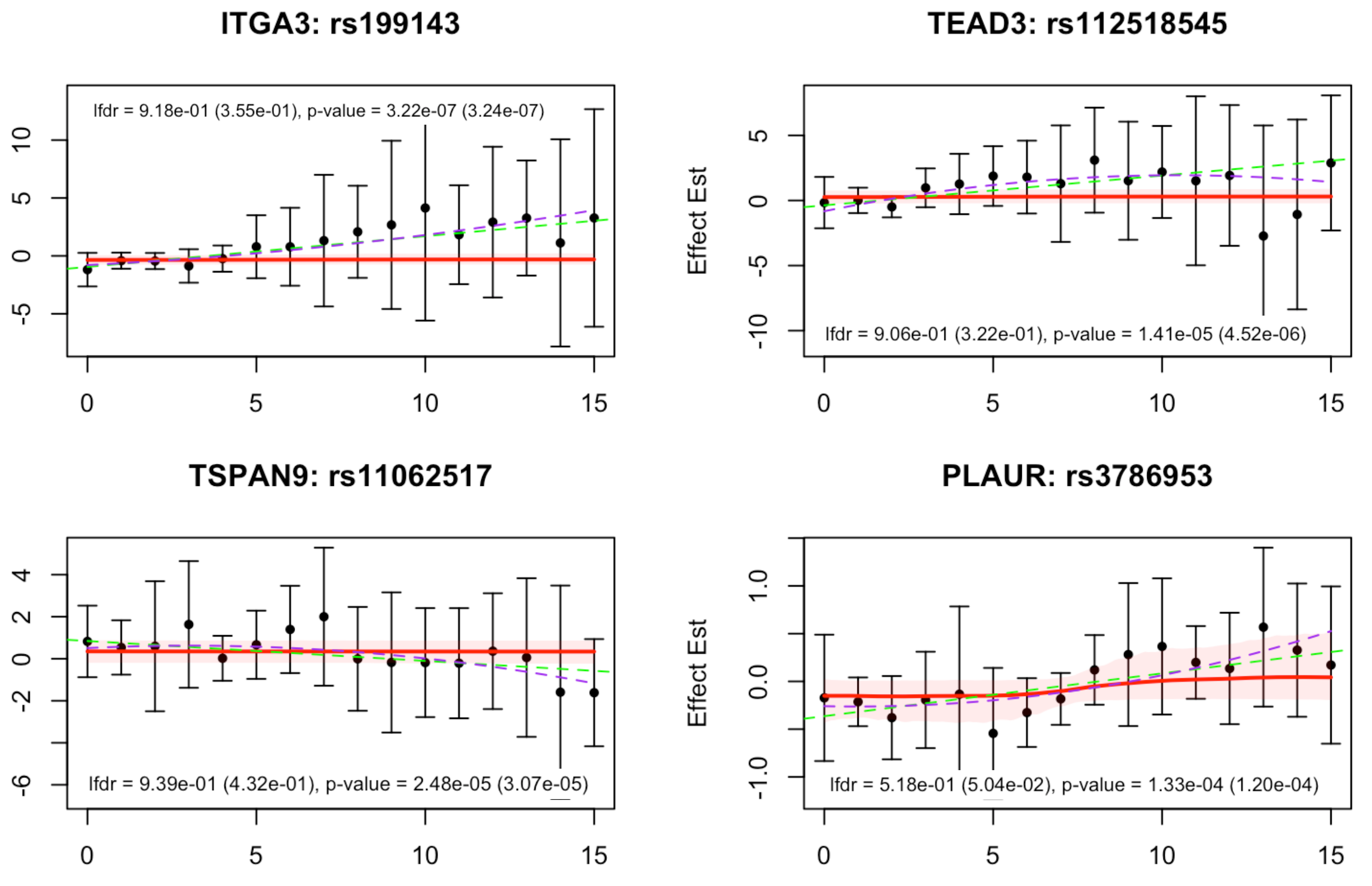}
    \caption{
    Examples of genes identified in \cite{strober2019dynamic} but not detected by FASH-$\mathrm{IWP}_1$.
    For each variant-gene pair, the posterior mean of $\beta_j$ is shown as a solid red line, with the 95\% credible interval indicated by the red shaded band.
    Observed effect size estimates $\hat{\beta}_j$ are shown as black dots, with vertical bars representing $\pm 2\tilde{s}_{j,r}$.
    The lfdr for each pair is reported, with the pre--BF-adjustment value shown in parentheses.
    The (linear/quadratic) p-values from \cite{strober2019dynamic} are also reported, with the quadratic p-value shown in parentheses.
    }
    \label{fig:missFASH}
\end{figure}

\end{document}